\setlist[enumerate]{labelindent=0.6em,leftmargin=*}
\theoremstyle{theorem}
\newtheorem{theorem}{Theorem}
\newtheorem{lemma}[theorem]{Lemma}
\newtheorem{proposition}[theorem]{Proposition}
\newtheorem{corollary}[theorem]{Corollary}
\theoremstyle{definition}
\newtheorem{definition}[theorem]{Definition}
\newtheorem{remark}[theorem]{Remark}
\newcommand{\out}{\textit{out}}
\newcommand{\N}{\mathbb{N}}
\newcommand{\PTIME}{\mathsf{P}}
\newcommand{\eval}{\mathit{eval}}
\newcommand{\Log}{\mathit{Log}_{>1}}
\newcommand{\Logold}{\mathit{Log}}
\newcommand{\floor}[1]{\lfloor #1 \rfloor}
\newcommand{\bit}{\mathit{bit}}
\renewcommand{\le}{\leqslant}
\renewcommand{\leq}{\leqslant}
\renewcommand{\ge}{\geqslant}
\renewcommand{\geq}{\geqslant}
\newcommand{\NE}{\mathsf{NE}}
\newcommand{\NTIME}{\mathsf{NTIME}}
\newcommand{\NEXP}{\mathsf{NEXP}}
\newcommand{\EXP}{\mathsf{EXP}}
\renewcommand{\P}{\mathsf{P}}
\newcommand{\PH}{\mathsf{PH}}
\newcommand{\AC}{\mathsf{AC}}
\newcommand{\NP}{\mathsf{NP}}
\newcommand{\PSPACE}{\mathsf{PSPACE}}
\newcommand{\Ppoly}{\mathsf{P/poly}}
\newcommand{\PHpoly}{\mathsf{PH/poly}}
\newcommand{\SIZE}{\mathsf{SIZE}}
\newcommand{\PHP}{\mathit{PHP}}
\newcommand{\T}{\mathsf{T}}
\newcommand{\PV}{\mathsf{PV}}
\newcommand{\V}{\mathsf{V}}
\newcommand{\BASIC}{\mathsf{BASIC}}
\renewcommand{\S}{\mathsf{S}}
\newcommand{\Comp}{\mathit{C}}
\newcommand{\q}[1]{\textit{``#1''}}
\newcommand{\bt}{\mathit{bt}}
\newcommand{\poly}{\mathsf{poly}}
\newcommand{\commentout}[1]{}
\begin{document}

\ifdefined\JMLversion
\markboth{A. Atersias, S. Buss and M. M\"uller}{On the Consistency of Circuit Lower Bounds for Non-Deterministic Time}

%
\catchline{}{}{}{}{}
%

\title{On the Consistency of Circuit Lower Bounds\\ for Non-Deterministic Time\footnote{An extended abstract of part of this work appeared as \cite{ABM:NexpPpoly}.}}

\author{Albert Atserias}
\address{Universitat Polit\`ecnica de
    Catalunya \\ i Centre de Recerca Matem\`atica\\ Barcelona,
    Spain.\\ atserias@cs.upc.edu}

\author{Sam Buss}
\address{University of California, San Diego \\ La Jolla, California USA\\
sbuss@ucsd.edu}

\author{Moritz M\"uller}
\address{Universit\"at Passau \\ Passau, Germany \\ moritz.mueller@uni-passau.de}

\else 

\title{\bf On the Consistency of Circuit Lower Bounds\\ for Non-Deterministic Time\footnote{An extended abstract of part of this work appeared as \cite{ABM:NexpPpoly}.}}

\author{ Albert Atserias\footnote{Universitat Polit\`ecnica de
    Catalunya i Centre de Recerca Matem\`atica, Barcelona,
    Spain. Supported in part by Project PID2019-109137GB-C22 (PROOFS)
    and the Severo Ochoa and María de Maeztu Program for Centers and
    Units of Excellence in R\&D (CEX2020-001084-M) of the Spanish State
    Research Agency.}
  \and Sam Buss\footnote{University of California, San Diego, USA.
    Supported in part by Simons Foundation grant 578919.
  }\\
 \and Moritz M\"uller\footnote{Universit\"at Passau, Passau, Germany.}\\
}

\fi

\maketitle

\ifdefined\JMLversion
\begin{history}
\received{Day Month Year}
\revised{Day Month Year}
\accepted{Day Month Year}
\end{history}
\fi 

\begin{abstract}
  We prove the first unconditional consistency
  result for superpolynomial circuit lower bounds with a relatively
  strong theory of bounded arithmetic. Namely, we show that the
  theory~$\V^0_2$ is consistent with the conjecture
  that~$\NEXP\not\subseteq\Ppoly$, i.e., some problem that is solvable
  in non-deterministic exponential time does not have polynomial size
  circuits.  We suggest this is the best currently available evidence
  for the truth of the conjecture. The same techniques establish the
  same results with~$\NEXP$ replaced by the class of problems
  decidable in non-deterministic barely superpolynomial time such
  as~$\NTIME(n^{O(\log\log\log n)})$. Additionally, we establish a
  magnification result on the hardness of proving circuit lower
  bounds.
\end{abstract}

\thispagestyle{empty}

\newpage

\ifdefined\JMLversion
\else
\setcounter{page}{1}
\fi

\section{Introduction}
Bounded arithmetics are fragments of Peano arithmetic that formalize
reasoning with concepts and constructions of bounded computational
complexity. Their language is tailored so that natural classes of
bounded formulas define important complexity classes. For example, the
set of all bounded formulas defines precisely the problems in~$\PH$
and the set of~$\Sigma^b_1$-formulas those in~$\NP$.  The central
theories are comprised in Buss' hierarchy~\cite{Buss:bookBA}
\begin{equation}\label{eq:hierarchy}
\textstyle \S^1_2\subseteq\T^1_2\subseteq \S^2_2\subseteq \T^2_2\subseteq\cdots\subseteq\T_2\subseteq\V^0_2\subseteq \V^1_2
\end{equation}

The theory~$\S^1_2$ can be understood as
formalizing~$\PTIME$-reasoning, and~$\V^1_2$ as
formalizing~$\EXP$-reasoning. The levels of~$\T_2$ are determined by
induction schemes for properties of bounded computational
complexity. E.g.,~$\T^1_2$ has induction for~$\NP$, and~$\T_2$
for~$\PH$.  Intuitively, these theories can construct and reason with
polynomially large objects of various computational complexities.  The
theories~$\V^0_2$ and~$\V^1_2$ are extensions with a second sort of
variables ranging over bounded sets of numbers and are given by
comprehension schemes. Intuitively, these sets represent exponentially
large objects.

Low levels of the bounded arithmetic
hierarchy formalize a considerable part of contemporary complexity
theory.
This includes some advanced topics such
as the Arthur-Merlin hierarchy~\cite{Jerabek:approximatecounting}, hardness
amplification~\cite{Jerabek:thesis},
Toda's theorem~\cite{BKZ:collapsingmodular},
and
the~PCP~Theorem~\cite{Pich:PCP}. We refer to~\cite[Section~5]{MullerPich:SuccinctLB} for a
list of successful formalizations.  Concerning circuit complexity, the
topic of this paper, Je\v{r}\'{a}bek proved that his theory of
approximate counting~\cite{Jerabek:dualweakphp,Jerabek:thesis,Jerabek:approximatecounting}, which sits
below~$\T^2_2$, formalizes Rabin's primality test, and proves that it
is in~$\Ppoly$~\cite[Example~3.2.10, Lemma
3.2.9]{Jerabek:thesis}. Concerning lower bounds, many of the known
(weak) circuit lower bounds can be formalized in a theory of
approximate counting~\cite{MullerPich:SuccinctLB} and thus also in the
theory~$\T^2_2$.
For example, the~$\AC^0$ lower
bound for parity has been formalized in~\cite[Theorem~1.1]{MullerPich:SuccinctLB} via
probabilistic reasoning with Furst, Saxe and Sipser's random
restrictions~\cite{FSS:parity}, and in~\cite[Theorem~15.2.3]{Krajicek:book} via
Razborov's~\cite{Razborov:BAlowerbounds} proof of H\aa stad's switching lemma.

Razborov asked in his seminal work from 1995 for the ``right fragment
capturing the kind of techniques existing in Boolean complexity''
\cite[p.344]{Razborov:BAlowerbounds}.  Showing that any theory that is strong enough
to capture these techniques cannot prove lower bounds for general
circuits would give a precise sense in which current techniques are
insufficient. This however seems to be very difficult. We refer to
\cite[Introduction]{Razborov:PseudoHardkDnf} or \cite[Ch.27-30]{Krajicek:forcingbook} for a
description of the resulting research program, and to \cite{PichSanthanam:StrongCoNondet} for a
recent result.

In contrast to unprovability, the first and final words of
Kraj\'i\v{c}ek's 1995 monograph~\cite{Krajicek:book} ask for consistency
results\footnote{The citations to follow refer not to circuit lower
  bounds but to~$\PTIME \not= \NP$.},
namely to prove the conjecture in question ``for nonstandard models of
systems of bounded arithmetic''. These are ``not ridiculously
pathological structures, and a part of the difficulty in constructing
them stems exactly from the fact that it is hard to distinguish these
structures, by the studied properties, from natural numbers''
\cite[p.xii]{Krajicek:book}. In particular, showing that a given conjecture
is consistent with certain bounded arithmetics, already low ones,
would exhibit a world where both the conjecture and a considerable
part of complexity theory are true.

We therefore interpret consistency results as giving precise evidence for the {\em truth} of the conjecture. This is without doubt  preferable to appealing to intuitions, or alluding to the experience that the conjectures appear to be  theoretically coherent, exactly because a consistency result gives a precise meaning to this coherence.

\subsection{Previous consistency results}

Being well motivated, consistency results are also hard to come by,
and not much is known. In particular, it is unknown
whether~$\NP\not\subseteq\Ppoly$ is consistent with~$\S^1_2$.

It is not straightforward to formalize~$\NP\not\subseteq\Ppoly$
because exponentiation is not provably total in bounded
arithmetics. On the formal level, call a number~$n$ {\em small}
if~$2^n$ exists. A size-$n^c$ circuit can be coded by a binary string
of length at most~$10\cdot n^c\cdot \log(n^c)$, and hence by a number
below~$2^{10\cdot n^c\cdot \log(n^c)}$; this bound exists for
small~$n$.

On the formal level, an~$\NP$-problem is represented by
a~$\Sigma^b_1$-formula~$\varphi(x)$.  A sentence expressing that the
problem defined by~$\varphi(x)$ has size~$n^c$ circuits looks as
follows:
\begin{equation*}
\alpha^c_{\varphi}:=\ \forall n{\in}\Log\ \exists C{<}2^{n^c}\ \forall x{<}2^n\ (C(x){=}1 \leftrightarrow \varphi(x)).
\end{equation*}
Here, the quantifier on~$n$ ranges over small numbers above~$1$. We
think of the quantifier on~$C$ as ranging over circuits of
encoding-size~$n^c$, and of the quantifier on~$x$ as ranging over
length~$n$ binary strings.  Counting the~$\exists$ hidden
in~$\varphi$, this is a
bounded~$\forall\exists\forall\exists$-sentence (namely
a~$\forall\Sigma^b_3$-sentence).

Now more precisely, the central question whether~$\S^1_2$ is
consistent with~$\NP\not\subseteq\Ppoly$ asks for
a~$\Sigma^b_1$-formula~$\varphi(x)$ such
that~$\S^1_2+\big\{\neg \alpha^c_\varphi\mid c\in\N \big\}$ is
consistent. As mentioned a model witnessing this consistency would be
a world where a considerable part of complexity theory is true and
the~$\NP$-problem defined by~$\varphi$ does not have polynomial-size
circuits.  This is faithful in that there also exists
an~$\NP$-machine~$M$ that cannot be simulated by small circuits in the
model. Namely,~$\S^1_2$ proves that~$\varphi(x)$ is equivalent to a
formula
\begin{equation}\label{eq:NPmachine}
\exists y{<}2^{n^d} \q{$y$ is an accepting computation of $M$ on $x$}
\end{equation}
for a suitable~$\NP$-machine~$M$, namely a {\em model-checker}
for~$\varphi$. Here, the constant~$d$ stems from the polynomial
running time of~$M$. We write~$\alpha^c_M:=\alpha^c_\varphi$
for~$\varphi(x)$ equal to \eqref{eq:NPmachine}. One can also fix the
machine~$M$ in advance to a {\em universal} one, namely a
model-checker~$M^*$ for an~$\S^1_2$-provably~$\NP$-complete problem
(e.g.,~$\mathsf{SAT}$).

The predominant approach to the consistency of circuit lower bounds is
based on witnessing theorems: a proof of~$\alpha^c_M$ in some bounded
arithmetic implies a low-complexity algorithm that computes a
witness~$C$ from~$1^n$. E.g., if the theory has feasible witnessing
in~$\P$, then it does not prove~$\alpha^c_\varphi$ for any~$c$ unless
the problem defined by~$\varphi(x)$ is in~$\P$.  However,~$\S^1_2$ is
only known to have feasible witnessing in~$\P$ for
bounded~$\forall\exists$-sentences and~$\alpha^c_\varphi$ is
a~$\forall\exists\forall\exists$-sentence.

Fortunately, a self-reducibility argument implies that the quantifier
complexity of this formula can be reduced. Up to suitable changes
of~$c$, the formula~$\alpha^c_{M^*}$ is~$\S^1_2$-provably equivalent
to the following sentence of lower quantifier complexity:
\begin{equation*}
\begin{array}{lcl}
\beta^c_{M^*} &:=&
\forall n{\in}\Log\ \exists C{<}2^{n^c}\ \exists D{<}2^{n^c}\ \forall x{<}2^{n}\ \forall y{<}2^{n^d} \\
&&\quad (C(x){=}0 \to \neg \q{$y$ is an accepting computation of $M^*$ on $x$}) \ \wedge \\
&&\quad (C(x){=}1 \to \q{$D(x)$ is an accepting computation of $M^*$ on $x$}),
\end{array}
\end{equation*}
where $d$ stems from the polynomial runtime of $M^*$. We define
\begin{equation*}
\q{$\NP\not\subseteq\Ppoly$}\ :=\ \big\{\neg \beta^c_{M^*}\mid
c\in\N\big\}.
\end{equation*}

Note,~$\beta^c_{M^*}$ is a bounded~$\forall\exists\forall$-sentence
(namely a~$\forall\Sigma^b_2$-sentence). For such sentences,~$\S^2_2$
has feasible witnessing in~$\P^{\NP}$ \cite{Buss:bookBA}, and~$\S^1_2$
has feasible witnessing by certain interactive polynomial-time
computations \cite{Krajicek:nocounterexample}. This was exploited by Cook and
Kraj\'i\v{c}ek \cite{CookKrajicek:NPsubsetPpoly} to prove\footnote{$\P^{\NP}_{\mathrm{tt}}$
  denotes polynomial time with non-adaptive queries to
  an~$\NP$-oracle. In \cite{CookKrajicek:NPsubsetPpoly} a distinct but similar
  formalization of~$\NP\not\subseteq\Ppoly$ is used.}
that~\q{$\NP\not\subseteq\Ppoly$} is consistent with~$\S^2_2$
unless~$\PH\subseteq\P^{\NP}$, and with~$\S^1_2$
unless~$\PH \subseteq\P^{\NP}_{\mathrm{tt}}$.
%
%
Since the complexity of witnessing increases with the strength of the
theory, it seems questionable whether this method yields insights for
much stronger theories: by the Karp-Lipton
Theorem~\cite{KarpLipton:advice},~$\PH\not\subseteq\NP^{\NP}$ implies that
\q{$\NP\not\subseteq\Ppoly$} is true, and true sentences are
consistent with any true theory. Moreover, the focus of this work is
on unconditional consistency results.

Using similar methods, a recent line of works~\cite{KrajicekOliveira:Unprovability,BKO:Consistency,BydzovskyMuller:Ultrapowers,CKKO:LEARN}
achieved unconditional consistency results for fixed-polynomial lower
bounds, even for~$\P$ instead of~$\NP$ (based on~\cite{SanthanamWilliams:Uniformity}).  For
example, the main result in~\cite{BKO:Consistency} implies
that~$\S^2_2+\neg \alpha_{\varphi}^c$
and~$\S^1_2+\neg \alpha_{\psi}^c$ are consistent for certain
formulas~$\varphi(x)$ and~$\psi(x)$ that define problems in~$\P^\NP$
and~$\NP$, respectively.  Again it seems questionable whether the
underlying methods can yield insights for much stronger theories: by
Kannan \cite{Kannan:CircuitSize}, the lower bound stated by~$\neg \alpha_{\chi}^c$
is true for some formula~$\chi(x)$ defining a problem
in~$\NP^\NP$. Moreover, the formulas above depend on~$c$ and new ideas
seem to be required to reach the unconditional consistency of
superpolynomial lower bounds.

\subsection{New consistency results}\label{sec:newcons}

The purpose of this paper is to prove the unconditional consistency
of~$\NEXP\not\subseteq\Ppoly$ with the comparatively strong
theory~$\V^0_2$.  Consistency results for~$\V^0_2$ are meaningful,
since~$\V^0_2$ is stronger than~$\T^2_2$ which, as discussed earlier,
can formalize many results in complexity theory.  Our approach is not
via witnessing but via {\em simulating comprehension}.

The problems in~$\NEXP$ are naturally represented on the formal level
by $\hat\Sigma^{1,b}_1$-formulas~$\varphi(x)$: an existentially
quantified set variable followed by a bounded formula.
We discuss three ways to formalize $\NEXP\not\subseteq\Ppoly$, namely with
$\{\neg\alpha^c_{\varphi}\mid c\ge 1\}$ for a
$\hat\Sigma^{1,b}_1$-formula~$\varphi(x)$, with $\{\neg\alpha^c_{M_0}\mid c\ge 1\}$ and with
$\{\neg\beta^c_{M_0}\mid c\ge 1\}$
for a suitable universal $\NEXP$-machine~$M_0$. We now discuss these formalizations;
they are analogous to the formalizations discussed in the previous section.

 The ``direct formalization''
of the consistency of~$\NEXP\not\subseteq\Ppoly$ is based on the
formulas~$\alpha^c_\varphi$.  These are defined similarly as before
but with $\varphi$ a $\hat\Sigma^{1,b}_1$-formula:
\begin{definition}\label{def:alpha}
  Let~$c \in \N$ and let $\varphi = \varphi(x)$ be
  a $\hat\Sigma^{1,b}_1$-formula (with only one free
  variable~$x$, and in particular without free variables of the set
  sort). Define
\begin{equation*}
\alpha_\varphi^c\ :=\ \forall n{\in}\Log\ \exists C{\le} 2^{n^c}
  \forall x{<}2^n\ \big( C(x) \leftrightarrow \varphi(x)\big).
\end{equation*}
\end{definition}

Then our {\em direct formalization} of the consistency
of~$\NEXP\not\subseteq\Ppoly$ is:

\begin{theorem}\label{thm:alphaNEXP} There exists $\varphi(x)\in \hat\Sigma^{1,b}_1$
such that $\V^0_2+\big\{\neg \alpha^c_\varphi\mid c\in \N\big\}$
is consistent.
\end{theorem}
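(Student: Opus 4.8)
The plan is to build, by a compactness/ultrapower argument, a nonstandard model of $\V^0_2$ in which some fixed $\hat\Sigma^{1,b}_1$-definable $\NEXP$-problem provably fails to have polynomial-size circuits, for every standard exponent $c$. The key conceptual move, as announced, is to work via \emph{simulating comprehension} rather than witnessing: one wants a model $\mathcal M$ of $\V^0_2$ containing a nonstandard small number $n$ (so $2^n$ exists) such that for this particular $n$ and every standard $c$, no $C \le 2^{n^c}$ in $\mathcal M$ computes the characteristic function of the relevant $\NEXP$-set on inputs $x < 2^n$. A natural candidate for $\varphi$ is a truth/evaluation predicate or, essentially equivalently, a direct diagonalization: on length-$n$ inputs one wants $\varphi(x)$ to encode something like ``the $x$-th small circuit of size $\le 2^{n^c}$ rejects input $x$'' for the appropriate $c$ read off from $n$, which is a genuinely $\hat\Sigma^{1,b}_1$ property (an existential set quantifier over the exponentially long circuit evaluation, followed by bounded verification) and hence lands in $\NEXP$.

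First I would isolate the combinatorial heart: a counting/diagonalization statement saying that on each length $n$ there are $2^{2^n}$ Boolean functions but only roughly $2^{2^{n^c}\cdot\mathrm{poly}}$ circuits of size $2^{n^c}$, so some function is not computed by any such circuit; and moreover such a function is definable by a $\hat\Sigma^{1,b}_1$-formula uniformly in $n$. The subtle point is that $\V^0_2$, being a \emph{weak} second-order theory given by $\Sigma^{1,b}_0$-comprehension (roughly $\EXP$-time reasoning restricted to the low induction level, below $\V^1_2$), cannot itself carry out the counting argument to \emph{prove} the lower bound — which is exactly what we want, since we only need the negations $\neg\alpha^c_\varphi$ to be \emph{consistent} with it. So the strategy is: take the (true, provable in a stronger metatheory) family $\{\neg\alpha^c_\varphi\}_{c\in\N}$ and show finite subsets are satisfiable in models of $\V^0_2$. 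By compactness it suffices, for each fixed $c_0$, to produce a model of $\V^0_2 + \neg\alpha^{c_0}_\varphi$; but since $\neg\alpha^{c_0}_\varphi$ follows from $\neg\alpha^{c}_\varphi$ for $c\le c_0$ when $\varphi$ is chosen with the diagonalization tuned to the \emph{largest} relevant exponent, and since the standard model already satisfies every $\neg\alpha^c_\varphi$ for the right $\varphi$, the real content is that $\V^0_2$ does not refute any finite conjunction — equivalently, $\V^0_2 \not\vdash \bigvee_{c\le c_0}\alpha^c_\varphi$.

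To get that non-refutation I would use a witnessing-style \emph{lower bound} on what $\V^0_2$ can prove: a proof in $\V^0_2$ of $\exists C{\le}2^{n^c}\,\forall x{<}2^n\,(C(x)\leftrightarrow\varphi(x))$ would, by the known witnessing/realizability theorems for $\V^0_2$ (its $\hat\Sigma^{1,b}_1$-consequences are witnessed by $\EXP$, more precisely by $\FP$ relative to the exponentially-long objects, via Herbrand/KPT-type analysis at this induction level), yield an $\EXP$-algorithm that on input $1^n$ outputs a circuit $C$ of size $2^{n^c}$ agreeing with $\varphi$ on all $2^n$ inputs. But $\varphi$ was diagonalized precisely against all such circuits with indices up to the current length, so no such uniform algorithm exists — a contradiction. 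Thus $\V^0_2$ cannot prove $\alpha^c_\varphi$, and more work in the same vein (choosing $\varphi$ so that a single $\varphi$ defeats all exponents simultaneously by padding/encoding $c$ into $n$) upgrades this to non-provability of the finite disjunctions, giving consistency of the full set.

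The main obstacle I expect is the interplay between the \emph{set-sort} quantifier in $\hat\Sigma^{1,b}_1$ (needed so $\varphi$ genuinely ranges over $\NEXP$ rather than $\NP$) and the precise witnessing theorem available for $\V^0_2$ at this exact point in the hierarchy \eqref{eq:hierarchy}: one must confirm that $\V^0_2$-provable $\hat\Sigma^{1,b}_1$-statements of the shape $\alpha^c_\varphi$ really do witness by computations weak enough (exponential time, not, say, $\EXP^{\NEXP}$) that the diagonal function escapes them — the whole argument collapses if the witnessing class is large enough to include the diagonalization itself. Getting the exponents and the induction level to line up, and packaging the diagonal $\varphi$ so that a \emph{single} formula handles every standard $c$ (so that compactness applies cleanly), is where the care is needed; the rest is the standard translation from ``$\V^0_2$ does not prove $\bigvee_{c\le c_0}\alpha^c_\varphi$'' to ``$\V^0_2 + \{\neg\alpha^c_\varphi\}$ is consistent'' via the completeness theorem.
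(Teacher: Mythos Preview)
Your proposal has a genuine gap: despite announcing you will argue via ``simulating comprehension'', what you actually sketch is a witnessing argument, and it fails on two counts. First, the witnessing step is not available. With $\varphi\in\hat\Sigma^{1,b}_1$ the sentence $\alpha^c_\varphi$ has the form $\forall n\,\exists C\,\forall x\,\big(C(x)\leftrightarrow \exists_2 Y\,\psi\big)$, whose matrix is a Boolean combination of $\hat\Sigma^{1,b}_1$ and $\hat\Pi^{1,b}_1$; there is no theorem saying $\V^0_2$-provable sentences of this shape have $C$ witnessed by an $\EXP$ function of $1^n$, and the actual witnessing class here is large enough to contain any diagonal you can write in $\hat\Sigma^{1,b}_1$. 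You flag this as ``the main obstacle'', but it is fatal rather than a matter of aligning exponents. Second, your diagonalization does not yield a \emph{single} $\varphi$ escaping $\SIZE(n^c)$ for all standard $c$: the function ``the $x$-th small circuit rejects $x$'' is polynomial-time computable and defeats only one fixed $c$ on a prefix of inputs (this is Kannan's argument), and combining across $c$ by ``encoding $c$ into $n$'' requires knowing in advance which $c$ to target at each length --- equivalently, knowing $\NEXP\not\subseteq\Ppoly$. Your remark that the standard model already satisfies every $\neg\alpha^c_\varphi$ is thus assuming the conjecture.

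The paper's proof is completely different and unconditional. Take $\varphi(x):=\neg\PHP(x)$, which is $\hat\Sigma^{1,b}_1$. The theory $\V^0_2$ easily proves the inductive step $\PHP(u)\to\PHP(u+1)$ (remove one pigeon via quantifier-free $\PV(\alpha)$-comprehension), but cannot run the induction because $\PHP$ is $\Pi^{1,b}_1$. If $\alpha^c_{\neg\PHP}$ held, a small circuit $C$ would compute $\neg\PHP$ on an initial segment, the induction would become quantifier-free $\PV$-induction on $\neg C(u)$, and $\V^0_2+\alpha^c_{\neg\PHP}$ would prove $\PHP(x)$. Since $\V^0_2\not\vdash\PHP(x)$ by the exponential bounded-depth Frege lower bound for the pigeonhole principle, $\V^0_2\not\vdash\alpha^c_{\neg\PHP}$ for any $c$, and compactness finishes. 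The external input is a proof-complexity lower bound, not a witnessing theorem or a circuit lower bound.
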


Theorem~\ref{thm:alphaNEXP} can be strengthened to establish the
consistency of $\NEXP\not\subseteq\PHpoly$ (see Section~\ref{sec:phpoly})
but our focus is on $\Ppoly$.

Theorem~\ref{thm:alphaNEXP} is proved in Section~\ref{sec:DirectConsisProof}
but in hindsight is not hard to prove. For~$\varphi(x)$ take a
formula negating the pigeonhole principle: it states that there exists
a set coding an injection from~$\{0,\ldots,x+1\}$
into~$\{0,\ldots,x\}$, and thus is expressible as a $\hat\Sigma^{1,b}_1$-formula.
The intermediate steps in the usual proof of the pigeonhole principle
involve further sets encoding injections, and these can also
expressed with $\hat\Sigma^{1,b}_1$-formulas.
If these formulas were computed by polynomial-size circuits,
then we could use quantifier-free induction to show that the
pigeonhole principle is provable in~$\V^0_2$. But it is well known
that this is not the case (see \cite[Corollary~12.5.5]{Krajicek:book}).

Concerning the faithfulness of the direct formalization we get, as
before, a model of~$\V^0_2$ where a certain~$\NEXP$-machine cannot be
simulated by small circuits. Indeed, for an {\em explicit}~$\NEXP$-machine~$M$ we can write
the formula \eqref{eq:NPmachine} using instead of~$\exists y$ a
quantification~$\exists Y$ for a set variable~$Y$:
\begin{equation}\label{eq:comp2}
\exists Y\q{$Y$ is an accepting computation of $M$ on $x$}.
\end{equation}
Roughly, an {\em explicit} $\NEXP$-machine is one such that
$\S^1_2$ can verify a suitable bound on its runtime;
we defer the details to Section~\ref{sec:explicit}.
It turns out that $\V^0_2$ proves that every
$\hat\Sigma^{1,b}_1$-formula~$\varphi(x)$ is
equivalent to \eqref{eq:comp2} for a suitable~$M$, namely a model-checker
for~$\varphi(x)$. Proving this is not trivial because $\V^0_2$ is
agnostic about the existence of computations of exponential-time
machines. One of our contributions is to prove it; we
give the details in Section~\ref{sec:mc}.
\begin{definition}\label{def:alphaM}
For an explicit $\NEXP$-machine $M$ and $c\in\N$ we set  $\alpha^c_M:=\alpha^c_\psi$ where $\psi$ is the formula~\eqref{eq:comp2}.
\end{definition}


Intuitively,~$\V^0_2$ does not know whether non-trivial
exponential-size sets exist, namely sets not given by bounded
formulas.  But then, how meaningful is the consistency
statement of Theorem~\ref{thm:alphaNEXP} or the corresponding
statement for $\{\neg\alpha^c_M\mid c\ge 1\}$?
These sentences contain (universal and) existential set quantifiers.
It turns out that we
can move again to a suitably modified sentence~$\beta^c_M$ of lower
quantifier complexity, namely a sentence all of whose set quantifiers are universal (i.e.,~$\forall\Pi^{1,b}_1$): such sentences do not entail the
existence of non-trivial large sets. This does not follow from simple
self-reducibility arguments but is a deep result of complexity theory,
namely the Easy Witness Lemma of Impagliazzo, Kabanets and Wigderson
\cite[Theorem~31]{IKW:EasyWitness}.  We use Williams' version as stated
in~\cite[Lemma~3.1]{Williams:Improving} (see
\cite[Theorem~3.1]{Williams:NaturalVersus} for the equivalence):

\begin{lemma}[Easy Witness Lemma] \label{lem:ewl}
  If~$\NEXP\subseteq\Ppoly$, then every~$\NEXP$-machine has
  polynomial-size oblivious witness circuits.
\end{lemma}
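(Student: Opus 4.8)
The plan is to derive this from the two workhorses behind the Impagliazzo--Kabanets--Wigderson theorem: the scaled-down arithmetization of $\NEXP$ (the $\mathsf{PCP}$/$\mathsf{MIP}$ machinery of Babai--Fortnow--Levin--Szegedy and Babai--Fortnow--Lund) and the Nisan--Wigderson hardness-versus-randomness tradeoff, glued by a ``win--win'' argument. Assume $\NEXP\subseteq\Ppoly$, fix an $\NEXP$-machine $M$ running in time $2^{n^a}$, and suppose toward a contradiction that $M$ has no polynomial-size oblivious witness circuits, i.e. for every $b$ there are infinitely many $x\in L(M)$ all of whose accepting computations, viewed as truth tables of functions on $n^a$ bits, require circuits of size $>|x|^b$.

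First I would arithmetize: replace ``$\exists$ an accepting computation of $M$ on $x$'' by a $\mathsf{PCP}$ verifier $V$ that, on input $x$ of length $n$, tosses $\mathrm{poly}(n)$ coins, reads $\mathrm{poly}(n)$ bits of a proof of length $2^{\mathrm{poly}(n)}$, has perfect completeness and soundness $\tfrac12$, and whose honest proof is a low-degree (Reed--Muller) encoding of the computation tableau of $M$ on $(x,w)$, from which the bits of $w$ are recovered by a fixed low-complexity (evaluation/linear) map $\rho$. Two consequences I will want from this, valid in the ``easy'' branch below: that $\NEXP=\EXP$ --- an $\EXP$ machine can brute-force over all $\mathrm{poly}(n)$-size circuits $E$, treat $\mathrm{tt}(E)$ as a candidate proof, and run $V$ over all $2^{\mathrm{poly}(n)}$ coin sequences, the point being that $\NEXP\subseteq\Ppoly$ supplies $\mathrm{poly}(n)$-size circuits for the polynomial-input-length $\NEXP$ (and $\mathsf{coNEXP}$) predicates that pin down a valid proof --- and, via the Babai--Fortnow--Lund instance checkers, that $\EXP\subseteq\Ppoly$ gives $\EXP=\mathsf{MA}$, hence $\NEXP=\mathsf{MA}$.

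The core is the \emph{win--win}. If there is a language in $\E$ of circuit complexity $2^{\Omega(n)}$ at infinitely many input lengths, then the Nisan--Wigderson generator built from it derandomizes the Arthur phase of $\mathsf{MA}$, yielding $\mathsf{MA}=\NP$; combined with $\NEXP=\EXP=\mathsf{MA}$ this forces $\NEXP=\NP$, contradicting the nondeterministic time hierarchy theorem ($\NP\subsetneq\NEXP$). So no such hard language exists. In this ``easy'' branch the same Nisan--Wigderson bookkeeping shows that the relevant $\NEXP$-search can be carried out cheaply, and together with $\NEXP\subseteq\Ppoly$ this yields that the map $(x,j)\mapsto$ ($j$-th bit of a canonical valid $V$-proof for ``$x\in L(M)$'') --- a predicate on $\mathrm{poly}(n)$ input bits --- has a circuit $D$ of size $\mathrm{poly}(n)$; intuitively, were this false the honest proofs would themselves be functions of high circuit complexity, reinstating the hard branch. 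Composing $D$ with the fixed recovery map $\rho$ produces a circuit of size $\mathrm{poly}(|x|)$ on $n^a$ inputs whose truth table is an accepting computation of $M$ on $x$, and --- since $D$ comes from one uniform circuit family with $x$ hardwired --- these circuits are oblivious in Williams' sense. This contradicts the choice of $M$, and the lemma follows.

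I expect the hard part to be twofold. First, the \emph{exponential gap}: an $\NEXP$-witness has $2^{n^a}$ bits but must be described by a $\mathrm{poly}(n)$-size circuit, so one cannot search for it bit-by-bit (even with $\NEXP=\EXP$, the prefixes are already of exponential length); the scaled-down $\mathsf{PCP}$ is precisely the device that turns this exponential-length global object into polynomially many local checks, so that the $\Ppoly$ hypothesis can be applied on polynomial-length inputs. Second, the \emph{parameter balancing} in the win--win: the randomness and query complexity of the $\NEXP$-$\mathsf{PCP}$ and the stretch of the Nisan--Wigderson generator must be tuned, and the infinitely-often versus almost-everywhere bookkeeping handled, so that the derandomization in the hard branch lands strictly inside the forbidden region of the nondeterministic time hierarchy. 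These two ingredients --- the scaled arithmetization of $\NEXP$ and the calibrated hardness-versus-randomness tradeoff --- are the technical heart, and the reason the Easy Witness Lemma is, as noted above, a deep result of complexity theory rather than a self-reducibility argument.
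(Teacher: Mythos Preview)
The paper does not prove this lemma: it is quoted from Impagliazzo--Kabanets--Wigderson (in Williams' formulation) and used as a black box. So there is no paper proof to compare against, and your sketch has to stand on its own.

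You have assembled the right ingredients (BFL arithmetization of $\NEXP$, the collapse $\EXP\subseteq\Ppoly\Rightarrow\EXP=\mathsf{MA}$, Nisan--Wigderson, the nondeterministic time hierarchy), but the wiring has two real gaps. First, your derivation of $\NEXP=\EXP$ is circular: having an $\EXP$ machine ``brute-force over all $\mathrm{poly}(n)$-size circuits $E$ [and] treat $\mathit{tt}(E)$ as a candidate proof'' decides $L(M)$ correctly only if, for every $x\in L(M)$, some small circuit \emph{is} a valid proof --- which is exactly the easy-witness property you are trying to establish. The appeal to $\Ppoly$ circuits for ``$\NEXP$ (and $\mathsf{coNEXP}$) predicates that pin down a valid proof'' does not rescue this: the decision circuit for $L(M)$ produces no witness, and the natural bit-of-canonical-witness predicate is not obviously in $\NEXP$.

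Second, your win--win is on the wrong object. You split on whether $\E$ contains a $2^{\Omega(n)}$-hard language and, in the ``no'' branch, argue that otherwise ``the honest proofs would themselves be functions of high circuit complexity, reinstating the hard branch''. But a hard witness $w_x$, viewed as a Boolean function on $n^a$ bits, is not an $\E$-computable function --- computing it is an $\NEXP$-search problem --- so its hardness yields no hard $\E$-language, and your two branches do not meet. In the actual IKW argument the dichotomy is directly on easy witnesses: if they fail, the hard witnesses are \emph{themselves} plugged into Nisan--Wigderson, and because they are computable \emph{nondeterministically} (guess $w$, verify it is an accepting run of $M$ on $x$, output the requested bit), the derandomization of $\mathsf{MA}=\EXP$ lands in an infinitely-often nondeterministic subexponential-time class, which is what collides with the hierarchy. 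That the hard function is only nondeterministically accessible, and that this still suffices for the derandomization one needs, is the hinge your sketch is missing.
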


An {\em oblivious witness circuit} for a machine~$M$ and input
length~$n$ is a circuit~$D$ with at least~$n$ inputs such that for
every~$x$ of length~$n$, if~$M$ accepts~$x$, then~$\mathit{tt}(D_x)$
encodes an accepting computation of~$M$ on~$x$. Here, the
circuit~$D_x$ is obtained from~$D$ by fixing the first~$n$ inputs to
the bits of~$x$, and~$\mathit{tt}(D_x)$ is the truth table of~$D_x$.
In the statement of the lemma, {\em polynomial-size} refers to
polynomial in~$n$, and the qualifier {\em oblivious} refers to the
fact that~$D$ depends only on the length of~$x$, not on~$x$ itself.

In the language of two-sorted bounded arithmetic the
string~$\mathit{tt}(D_x)$ corresponds to the set~$D_x(\cdot)$ of
numbers accepted by~$D_x$. We thus define the formula $\beta^c_M$
by replacing~$D(x)$ by~$D_x(\cdot)$ and~$\forall y$
by~$\forall Y$:

\begin{definition}\label{def:beta} For $c\in\N$ and an explicit $\NEXP$-machine $M$ we set
\begin{equation*}
\begin{array}{lcl}
\beta_M^c&:=&\forall n{\in}\Log\  \exists C{<}2^{n^c}\ \exists D{<}2^{n^c}\
\forall x{<}2^n\ \forall Y \\
&&\quad (C(x){=}0\ \to\ \neg\q{$Y$ is an accepting
computation of $M$ on $x$})\ \wedge\\
&&\quad (C(x){=}1\ \to\ \q{$D_{x}(\cdot)$ is an accepting computation
of $M$ on $x$}).
\end{array}
\end{equation*}
\end{definition}
In Section~\ref{sec:universal} we define
a suitable universal explicit~$\NEXP$-machine~$M_0$ and arrive at our formalization of $\NEXP\not\subseteq\Ppoly$:
\begin{definition}\label{def:NexpNotPpoly}
$
\q{$\NEXP\not\subseteq\Ppoly$}:= \{\neg\beta^c_{M_0}\mid
c\in\N\}.
$\end{definition}

The main result of this paper is:
\begin{theorem}\label{thm:NEXP}
The theory $\V^0_2$ is consistent with both formalizations of
$\NEXP\not\subseteq\Ppoly$; concretely,
$\V^0_2 + \{ \lnot\alpha^c_{M_0} : c\in \N \}$ and
$\V^0_2 + \{ \lnot\beta^c_{M_0} : c\in \N \}$ are consistent.
\end{theorem}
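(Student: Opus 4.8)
The plan is to obtain both consistency statements from a single model, by cheaply transforming the lower bound that model already carries. By Theorem~\ref{thm:alphaNEXP} fix a model $\mathcal{M}\models\V^0_2$ together with the $\hat\Sigma^{1,b}_1$-formula $\varphi(x)$ negating the pigeonhole principle such that $\mathcal{M}\models\neg\alpha^c_\varphi$ for every $c\in\N$. I will show that this \emph{same} $\mathcal{M}$ satisfies $\neg\alpha^c_{M_0}$ and $\neg\beta^c_{M_0}$ for every $c$, which immediately yields that both $\V^0_2+\{\neg\alpha^c_{M_0}\mid c\in\N\}$ and $\V^0_2+\{\neg\beta^c_{M_0}\mid c\in\N\}$ are consistent.

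There are three transformations. \emph{(i) From $\varphi$ to a machine.} By the main result of Section~\ref{sec:mc}, $\V^0_2$ proves $\varphi(x)$ equivalent to the formula~\eqref{eq:comp2} for a suitable explicit $\NEXP$-machine $M$, namely a model-checker for $\varphi$. Substituting provably equivalent formulas under the bounded quantifiers of $\alpha^c$ gives $\V^0_2\vdash\alpha^c_\varphi\leftrightarrow\alpha^c_M$ for every $c$, hence $\mathcal{M}\models\neg\alpha^c_M$ for every $c$. \emph{(ii) From $M$ to the universal machine $M_0$.} By the properties of the machine $M_0$ constructed in Section~\ref{sec:universal}, there is a many-one reduction $x\mapsto r(x)$ from $L(M)$ to $L(M_0)$ that is length-respecting (length-$n$ inputs go to length-$m(n)$ inputs with $m(n)\le p(n)$ for a fixed polynomial $p$), is computable by circuits of size $\poly(n)$, and is provably correct in $\V^0_2$. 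Composing a reduction circuit with a size-$m^c$ circuit for $L(M_0)$ at length $m=m(n)$ yields a size-$\poly(n)$ circuit for $L(M)$ at length $n$; formalizing this elementary composition gives $\V^0_2\vdash\alpha^c_{M_0}\to\alpha^{c'}_M$ for a suitable $c'=c'(c)$ — here it matters that $n$ small implies $p(n)$ small, so the needed instance of $\alpha^c_{M_0}$ is available. Contraposing, $\mathcal{M}\models\neg\alpha^{c'}_M$ for all $c'$ forces $\mathcal{M}\models\neg\alpha^c_{M_0}$ for all $c$. \emph{(iii) From $\alpha^c_{M_0}$ to $\beta^c_{M_0}$.} It suffices to prove $\V^0_2\vdash\beta^c_{M_0}\to\alpha^c_{M_0}$: arguing in $\V^0_2$ for a given small $n$, the circuit $C$ delivered by $\beta^c_{M_0}$ witnesses the corresponding instance of $\alpha^c_{M_0}=\alpha^c_\psi$ (with $\psi$ the formula~\eqref{eq:comp2} for $M_0$), because if $C(x)=1$ then the set $D_x(\cdot)$ — which exists by the $\Sigma^b_0$-comprehension of $\V^0_2$, circuit evaluation being a bounded predicate — is an accepting computation of $M_0$ on $x$, so $\psi(x)$; and if $C(x)=0$ then no set is such a computation, so $\neg\psi(x)$, and $C$ outputs a single bit. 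Hence $\mathcal{M}\models\neg\beta^c_{M_0}$ for all $c$, and the theorem follows.

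Note that the \emph{converse} implication $\alpha^c_{M_0}\to\beta^{c'}_{M_0}$ — which shows the two formalizations are equivalent, and so that asserting the consistency of $\{\neg\beta^c_{M_0}\}$ is not weaker than asserting it for $\{\neg\alpha^c_{M_0}\}$ — is exactly the place where the Easy Witness Lemma (Lemma~\ref{lem:ewl}) enters; it plays no role in the consistency argument above. I expect the real difficulty to lie not in this assembly but in its two prerequisites: getting $\V^0_2$ to verify the model-checker equivalence of Section~\ref{sec:mc} while remaining agnostic about the existence of exponential-time computations, and building in Section~\ref{sec:universal} a universal explicit $\NEXP$-machine $M_0$ whose reduction from arbitrary explicit $\NEXP$-machines is realized by \emph{polynomial-size} (not merely polynomial-time) circuits, is length-respecting, and is provably correct — the length bookkeeping has to be tight enough that the final circuit composition, and its verification, stay inside the weak theory.
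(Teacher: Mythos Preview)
Your proposal is correct and follows essentially the same route as the paper: start from the model given by Theorem~\ref{thm:alphaNEXP}, pass from the formula $\varphi$ to an explicit $\NEXP$-machine via the model-checker equivalence (Lemma~\ref{lem:mc}), from that machine to $M_0$ via the universal-machine reduction (Lemma~\ref{lem:beta}), and from $\alpha$ to $\beta$ via the easy implication $\beta^c_{M_0}\to\alpha^c_{M_0}$ (Lemma~\ref{lem:betaalpha}). The paper packages steps (ii) and (iii) through the abstract Proposition~\ref{prop:statements} and uses compactness rather than a fixed model, but the content is the same; two small points of precision: the implication in (ii) and the comprehension in (iii) already go through in $\S^1_2(\alpha)$ (via $\Delta^b_1(\alpha)$-comprehension for the set $D_x(\cdot)$), not just $\V^0_2$, and your assessment that the substantive work lies in Sections~\ref{sec:mc} and~\ref{sec:universal} is exactly right.
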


In the notation introduced above, this gives:

\begin{corollary}\label{coro:NEXPsuccinct}
$\V^0_2+\q{$\NEXP\not\subseteq\Ppoly$}$ is consistent.
\end{corollary}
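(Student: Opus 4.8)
The plan is to reduce Corollary~\ref{coro:NEXPsuccinct} (which by Definition~\ref{def:NexpNotPpoly} is the $\beta$-part of Theorem~\ref{thm:NEXP}) to the direct consistency result of Theorem~\ref{thm:alphaNEXP}, by upgrading the lower-bound instance in two steps and then relating the various formalizations inside $\V^0_2$. First, fix the formula $\varphi(x)\in\hat\Sigma^{1,b}_1$ negating the pigeonhole principle that Theorem~\ref{thm:alphaNEXP} supplies, and let $\mathcal{M}$ be a model of $\V^0_2+\{\neg\alpha^c_\varphi\mid c\in\N\}$. By the model-checker construction of Section~\ref{sec:mc} there is an explicit $\NEXP$-machine $M$ (a model-checker for $\varphi$) such that $\V^0_2$ proves $\varphi(x)$ equivalent to the formula~\eqref{eq:comp2} expressing that $M$ has an accepting computation on $x$; consequently $\V^0_2\vdash\alpha^c_\varphi\leftrightarrow\alpha^c_M$ for every $c$, so that $\mathcal{M}\models\{\neg\alpha^c_M\mid c\in\N\}$.

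Second, pass from $M$ to the universal explicit $\NEXP$-machine $M_0$ of Section~\ref{sec:universal}. As $L(M)$ reduces to $L(M_0)$ by a polynomial-time reduction whose correctness is checkable already in $\S^1_2$, composing a putative size-$n^{c'}$ circuit for $M_0$ with a uniform circuit for the reduction gives $\V^0_2\vdash\alpha^{c'}_{M_0}\to\alpha^{c}_M$ for a suitable $c=O(c')$; contraposing and using the first step, $\mathcal{M}\models\{\neg\alpha^{c'}_{M_0}\mid c'\in\N\}$, which is the $\alpha$-part of Theorem~\ref{thm:NEXP}. For the $\beta$-part I would use the easy direction of the equivalence of $\alpha$ and $\beta$, namely $\V^0_2\vdash\beta^c_{M_0}\to\alpha^c_{M_0}$: fixing, for each $n\in\Log$, circuits $C$ and $D$ witnessing $\beta^c_{M_0}$, an input $x$ with $C(x){=}0$ satisfies $\forall Y\,\neg\q{$Y$ is an accepting computation of $M_0$ on $x$}$, hence $M_0$ has no accepting computation on $x$; an input with $C(x){=}1$ has $D_x(\cdot)=\{i:D_x(i){=}1\}$ (a set available in $\V^0_2$ by comprehension, since circuit evaluation is polynomial time and $2^{n^c}$ exists for $n\in\Log$) as one; so $C$ decides on length-$n$ inputs whether $M_0$ accepts, that is, witnesses the corresponding instance of $\alpha^c_{M_0}$. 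Hence $\neg\alpha^c_{M_0}\to\neg\beta^c_{M_0}$ and $\mathcal{M}\models\{\neg\beta^c_{M_0}\mid c\in\N\}$, establishing Corollary~\ref{coro:NEXPsuccinct}.

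The Easy Witness Lemma (Lemma~\ref{lem:ewl}) supplies the converse, $\V^0_2\vdash\alpha^{O(c)}_{M_0}\to\beta^c_{M_0}$: polynomial-size deciding circuits for the universal machine $M_0$ give $\NEXP\subseteq\Ppoly$ in the model, whence Lemma~\ref{lem:ewl} yields polynomial-size oblivious witness circuits for $M_0$, and these together with the deciding circuit witness $\beta^c_{M_0}$. With both directions, the $\alpha$- and $\beta$-formalizations are provably equivalent over $\V^0_2$, so that $\{\neg\beta^c_{M_0}\mid c\in\N\}$, all of whose set quantifiers are universal and which therefore does not postulate the existence of nontrivial exponential-size sets, is a faithful rendering of $\NEXP\not\subseteq\Ppoly$. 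The two steps I expect to be hard are: (i) the model-checker theorem of Section~\ref{sec:mc}, because $\V^0_2$ is agnostic about whether computations of exponential-time machines exist, so the equivalence of $\varphi(x)$ with~\eqref{eq:comp2} must be obtained by constructing succinct descriptions of such computations stepwise from $\V^0_2$'s comprehension and induction rather than by appealing to them; and (ii) formalizing the required instance of the Easy Witness Lemma inside $\V^0_2$, which is the genuinely deep ingredient turning the low-complexity formula $\beta$ into an equivalent formulation of the conjecture.
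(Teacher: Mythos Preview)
Your first two paragraphs are correct and match the paper's route exactly: Theorem~\ref{thm:alphaNEXP} gives statement~C, Lemma~\ref{lem:mc} upgrades~C to~A, Lemma~\ref{lem:beta} upgrades~A to~A0, and Lemma~\ref{lem:betaalpha} (the easy implication $\beta^c_{M_0}\to\alpha^c_{M_0}$) yields~B0, which is precisely Corollary~\ref{coro:NEXPsuccinct}. You have correctly located the end of the proof at the end of your second paragraph.

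Your third paragraph, however, contains a genuine misconception. You do \emph{not} need the converse implication $\alpha^{O(c)}_{M_0}\to\beta^c_{M_0}$ inside $\V^0_2$, and the paper explicitly leaves open whether $\V^0_2$ proves the Easy Witness Lemma (see the remark following Definition~\ref{def:NexpNotPpoly}). The role of the Easy Witness Lemma in the paper is purely external: it is invoked only in the standard model (Proposition~\ref{prop:betatruth}) to certify that $\{\neg\beta^c_{M_0}\mid c\in\N\}$ is \emph{true} if and only if $\NEXP\not\subseteq\Ppoly$, i.e., that the $\beta$-formalization is faithful. That is a metamathematical statement about truth, not a formal derivation inside~$\V^0_2$. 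So your ``hard step~(ii)'' is neither part of the proof of the corollary nor known to be achievable; the only genuinely nontrivial technical step is your~(i), the model-checker construction, and there the difficulty is exactly as you describe.
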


Both $\{ \lnot\alpha^c_{M_0} : c\in \N \}$ and $\{ \lnot\beta^c_{M_0} : c\in \N \}$ are formalizations of
$\NEXP\not\subseteq\Ppoly$. The first has the advantage of being more direct whereas the second has the advantage of having lower quantifier complexity: $\beta^c_{M_0}$ is $\forall\Pi^{1,b}_1$ while $\alpha^c_{M_0}$  is $\forall\Sigma^b_\infty(\Pi^{1,b}_1)$. In addition, being $\forall\Pi^{1,b}_1$ is instrumental for our magnification result discussed below (Theorem~\ref{thm:mag}).
%
It is easy to see that $\V^0_2$ proves that $\{ \lnot\alpha^c_{M_0} : c\in \N \}$
implies $\{ \lnot\beta^c_{M_0} : c\in \N \}$. The converse implication
 is true too, but depends on the
Easy Witness Lemma.
It is open whether $\V^0_2$ proves this implication or the Easy Witness Lemma.

We emphasize here that our formalization
of~$\NEXP \not\subseteq \Ppoly$ through the universal machine~$M_0$
and the $\alpha^c_{M_0}$ and~$\beta^c_{M_0}$ sentences
refers exclusively to the setting of
non-relativized complexity classes.

\medskip

Second we show that~$\NEXP$ can be lowered to just above~$\NP$.
For~$k \in \N$, define~$\log^{(k)}n$ inductively
by~$\log^{(1)}n:= \log n$, and~$\log^{(k+1)}n:=\log\log^{(k)} n$. We
prove:

\begin{theorem}\label{thm:NTIME}
$\V^0_2+\q{$\NTIME(n^{O(\log^{(k)}n)})\not\subseteq\Ppoly$}$ is
  consistent for every positive $k \in\penalty10000 \N$.
\end{theorem}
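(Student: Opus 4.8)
The plan is to re-run the proof of Theorem~\ref{thm:NEXP} with the exponential time bound $2^{n^{O(1)}}$ replaced everywhere by the barely superpolynomial bound $n^{O(\log^{(k)}n)}$. Writing $\mathcal{C}_k:=\NTIME(n^{O(\log^{(k)}n)})$, one takes $N_0$ to be a universal explicit $\mathcal{C}_k$-machine and defines $\alpha^c_{N_0}$, $\beta^c_{N_0}$ and ``$\NTIME(n^{O(\log^{(k)}n)})\not\subseteq\Ppoly$''$:=\{\neg\beta^c_{N_0}\mid c\in\N\}$ as in Definitions~\ref{def:alphaM}, \ref{def:beta} and~\ref{def:NexpNotPpoly}. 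All that the proof of Theorem~\ref{thm:NEXP} asks of $\NEXP$ is that the time bound be time-constructible, nondecreasing, closed under polynomial changes of the exponent and equipped with a nondeterministic time hierarchy, together with a hard $\hat\Sigma^{1,b}_1$-problem whose witness sets respect the bound. Since $\log^{(k)}n\le\log n$, witness sets of size $n^{O(\log^{(k)}n)}\le n\#n$ cause no trouble, the nondeterministic time hierarchy is available for $n^{c\log^{(k)}n}$ versus $n^{c'\log^{(k)}n}$, and so all of this survives the substitution.

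For the hard problem I would use a ``scaled'' negation of the pigeonhole principle. Put $\ell(n):=\log n\cdot\log^{(k)}n\le(\log n)^2$, so $2^{\ell(n)}$ is of order $n^{\log^{(k)}n}$ and always exists, and for $x<2^n$ let $g(x)<2^{\ell(n)}$ be the number formed by the top $\ell(n)$ bits of $x$. Let $\varphi_k(x)\in\hat\Sigma^{1,b}_1$ assert that there is an injection of $\{0,\dots,g(x)+1\}$ into $\{0,\dots,g(x)\}$; its witness is a set of numbers below $O(g(x)^2)\le n^{O(\log^{(k)}n)}$, so a model-checker only guesses and verifies a string of length $n^{O(\log^{(k)}n)}$, and $\varphi_k$ therefore defines a problem in $\mathcal{C}_k$. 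As $x$ ranges over $\{0,\dots,2^n-1\}$ the parameter $g(x)$ sweeps all of $\{0,\dots,2^{\ell(n)}-1\}$ and depends nontrivially on $x$. Exactly as in the proof of Theorem~\ref{thm:alphaNEXP} (Section~\ref{sec:DirectConsisProof}), the intermediate injections occurring in the textbook proof of the pigeonhole principle are again instances $\varphi_k(x')$, with $x'$ obtained by decrementing the top $\ell(n)$ bits of $x$; hence if $\varphi_k$ and these intermediate formulas were computed by polynomial-size circuits, then quantifier-free induction would let $\V^0_2$ prove the pigeonhole principle at every parameter up to $2^{\ell(n)}$, and therefore --- since an injection at a smaller parameter extends trivially to one at a larger parameter --- the pigeonhole principle itself, contradicting its unprovability in $\V^0_2$ (for fixed $k$ the parameter $2^{\ell(n)}\approx n^{\log^{(k)}n}$ is still superpolynomial, which is exactly why one lands strictly above $\NP$ and the independence invoked is the known one, cf.~\cite[Corollary~12.5.5]{Krajicek:book}). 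This gives the consistency of $\V^0_2+\{\neg\alpha^c_{\varphi_k}\mid c\in\N\}$. Repeating Sections~\ref{sec:mc}, \ref{sec:explicit} and~\ref{sec:universal} for $\mathcal{C}_k$ --- $\V^0_2$ proves $\varphi_k$ and its intermediate variants equivalent to ``$Y$ is an accepting computation of $M$ on $x$'' for model-checkers $M\in\mathcal{C}_k$, each explicit because $\S^1_2$ verifies its $n^{O(\log^{(k)}n)}$ bound, and all reducible to the universal machine $N_0$ --- then produces a model of $\V^0_2+\{\neg\alpha^c_{N_0}\mid c\in\N\}$, hence also of $\V^0_2+\{\neg\beta^c_{N_0}\mid c\in\N\}$, because $\V^0_2$ proves the easy implication from the $\alpha$- to the $\beta$-formalization just as for $M_0$ (the correct-circuit component of a witnessing pair already refutes the matching $\alpha$-sentence). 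This is Theorem~\ref{thm:NTIME}.

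The step I expect to require the most care is the analogue of the Easy Witness Lemma (Lemma~\ref{lem:ewl}) needed to certify that the $\beta$-formalization is faithful, namely that $\mathcal{C}_k\subseteq\Ppoly$ implies every $\mathcal{C}_k$-machine has polynomial-size oblivious witness circuits: Lemma~\ref{lem:ewl} is proved for $\NEXP$ via its exponential running time and $\mathcal{C}_k\subseteq\Ppoly$ is far weaker than $\NEXP\subseteq\Ppoly$, so it cannot be used as a black box, and one instead checks that the argument of \cite[Theorem~31]{IKW:EasyWitness} survives the substitution of any barely superpolynomial, time-constructible, nondecreasing bound with a nondeterministic time hierarchy. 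The consistency statement itself is insensitive to this, since it needs only the easy implication $\{\neg\alpha^c_{N_0}\}\vdash\{\neg\beta^c_{N_0}\}$ together with the independence of the scaled pigeonhole principle; so the genuinely new work is (i) this scaling of the Easy Witness Lemma and (ii) confirming that the scaled pigeonhole principle remains independent of $\V^0_2$ at the superpolynomial parameter $2^{\ell(n)}$ --- the feature that keeps the argument on the known side of the open problem of the consistency of $\NP\not\subseteq\Ppoly$ with $\V^0_2$.
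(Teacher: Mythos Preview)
Your overall strategy matches the paper's, but there is one genuine gap concerning the $\beta$-formalization and the Easy Witness Lemma. You define the formalization via $\beta^c_{N_0}$, which asserts \emph{oblivious} witness circuits (a single $D$ chosen before $x$, with $D_x(\cdot)$ as the witness). For the faithfulness direction ``$\{\neg\beta^c_{N_0}\}$ true $\Rightarrow$ $\mathcal C_k\not\subseteq\Ppoly$'' you need: if $\mathcal C_k\subseteq\Ppoly$ then every $\mathcal C_k$-machine has polynomial-size \emph{oblivious} witness circuits. You assert that the IKW argument ``survives the substitution,'' but this is not known. The relevant Easy Witness Lemma for barely superpolynomial time is due to Murray and Williams~\cite{MurrayWilliams:CircuitLB}, and it yields only \emph{non-oblivious} witness circuits (a separate $D$ for each accepted $x$). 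The paper explicitly flags that obliviousness is open in this regime and therefore changes the formalization: it replaces $\beta^c_M$ by a sentence $\gamma^c_M$ in which the witness-circuit quantifier $\exists D$ sits \emph{inside} the $\forall x$, and proves the analogue of Proposition~\ref{prop:betatruth} for $\gamma$ using Murray--Williams. Your consistency argument is unaffected by this (as you note, only $\neg\alpha\to\neg\beta$ is needed, and $\neg\alpha\to\neg\gamma$ holds just as easily), but your \emph{definition} of ``$\NTIME(n^{O(\log^{(k)}n)})\not\subseteq\Ppoly$'' via $\beta$ is not known to be faithful.

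Two smaller points. First, the independence input you need is the \emph{quantitative} bounded-depth Frege lower bound for $\PHP$ (what the paper calls the Gem Theorem, tracing to~\cite{BIKPPW:PHP}), since the pigeonhole parameter is only $n^{\log^{(k)}n}$ rather than $2^n$; the reference \cite[Corollary~12.5.5]{Krajicek:book} used for Theorem~\ref{thm:alphaNEXP} gives only the qualitative statement and does not suffice here. You seem aware of this, but the citation should be adjusted. Second, the paper's scaling of $\PHP$ is via the formula $y{\le}r(x)\wedge\neg\PHP(y)$ with $z=\langle x,y\rangle$ rather than your top-bits trick; both work, but the paper's version makes the bounding-term condition transparently polynomial in $r$.
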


The formalization and proof proceeds similarly and relies on an Easy
Witness Lemma for barely superpolynomial time by Murray and
Williams~\cite{MurrayWilliams:CircuitLB}. Theorem~\ref{thm:NTIME} ``almost'' settles the
central question for the consistency of~$\NP\not\subseteq\Ppoly$ with
a strong bounded arithmetic. Closing the tiny gap, however, seems to
require some new ideas.

\subsection{Simulating comprehension}\label{sec:introcomprehension}

The proof of the consistency of circuit lower
bounds is based on the complexity of constant depth
propositional proofs for the pigeonhole principle. We shall see that
$\V^0_2 + \alpha^c_{M_0}$ (and thus $\V^0_2 + \beta^c_{M_0}$) proves the
pigeonhole principle. This implies
Theorem~\ref{thm:NEXP} as it is well-known that~$\V^0_2$ cannot prove
this principle. Thereby, Theorem~\ref{thm:NEXP} is ultimately based on
the exponential lower bound for
this principle  in
bounded depth Frege systems~\cite{Ajtai:PHP,BIKPPW:PHP}. On a high
level, while the approach based on witnessing uses complexity
theoretic methods, our approach is based on methods that arose from mathematical
logic, in particular forcing (cf.~\cite{AtersiasMuller:forcingBA}).

The $\{\lnot \beta^c_{M_0}\}$ formulation of $\q{$\NEXP\not\subseteq\Ppoly$}$
provides an additional insight into the consistency lower bound.
By the Easy Witness Lemma, the
inclusion $\NEXP\subseteq\Ppoly$ implies that a rich collection of
sets is represented by circuits (via their truth tables).  A weak
theory can quantify over circuits and hence implicitly over this
collection. Thus, intuitively, $\beta^c_{M_0}$~should enable a weak
theory to simulate a two-sorted theory of considerable strength.
More precisely, we show that $\beta^c_{M_0}$ can be used to simulate a
considerable fragment of $\Sigma^{1,b}_1$-comprehension, i.e., a
considerable fragment of~$\V^1_2$.

The sketched idea can be made explicit as follows. By~$\S^1_2(\alpha)$
we denote the two-sorted variant of~$\S^1_2$. Its models consist of
two universes~$M$ and~$\mathcal{X}$ interpreting the number and the
set sort, respectively. Given such a model that additionally
satisfies~$\beta^c_{M_0}$ for some~$c\in\N$,
we will show in Lemma~\ref{lem:V12} that
shrinking~$\mathcal{X}$ to the sets represented by circuits in~$M$
yields a model of~$\V^1_2$.  This has two interesting consequences.
The first is:

\vbox{  
\begin{theorem} \label{thm:conservative}
Let~$\T$ be a theory that contains~$\S^1_2(\alpha)$ but does not
prove all number-sort consequences
of~$\V^1_2$. Then~$\T+\q{$\NEXP\not\subseteq\Ppoly$}$ is consistent.
\end{theorem}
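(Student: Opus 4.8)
The plan is to argue by contraposition: suppose $\T+\q{$\NEXP\not\subseteq\Ppoly$}$ is inconsistent, and derive that $\T$ proves all number-sort consequences of $\V^1_2$. Since $\q{$\NEXP\not\subseteq\Ppoly$}=\{\neg\beta^c_{M_0}\mid c\in\N\}$ and $\T\supseteq\S^1_2(\alpha)$, inconsistency together with compactness means that $\T$ proves the disjunction of finitely many of the $\beta^c_{M_0}$; monotonicity in $c$ (a larger circuit-size bound is weaker, so $\beta^c_{M_0}\to\beta^{c'}_{M_0}$ for $c\le c'$, which $\S^1_2(\alpha)$ verifies) lets us collapse this to a single $c_0$ with $\T\vdash\beta^{c_0}_{M_0}$. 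So it suffices to show that $\S^1_2(\alpha)+\beta^{c_0}_{M_0}$ proves every number-sort consequence of $\V^1_2$. Equivalently, given any number-sort sentence $\theta$ with $\V^1_2\vdash\theta$, and any model $(M,\mathcal{X})\models\S^1_2(\alpha)+\beta^{c_0}_{M_0}$, I want $(M,\mathcal{X})\models\theta$.

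The key step is the model-shrinking construction promised by Lemma~\ref{lem:V12}: from $(M,\mathcal{X})\models\S^1_2(\alpha)+\beta^{c_0}_{M_0}$ pass to the structure $(M,\mathcal{X}')$ where $\mathcal{X}'$ consists exactly of those subsets of $M$ (below the relevant length bounds) that are the truth table $D(\cdot)$ of some circuit $D\in M$. Lemma~\ref{lem:V12} asserts $(M,\mathcal{X}')\models\V^1_2$ — the point being that $\beta^{c_0}_{M_0}$ supplies, via the universal machine $M_0$, enough easy witnesses that $\Sigma^{1,b}_1$-comprehension instances can be realized by circuit truth tables, so the restricted set universe is closed under the comprehension axioms of $\V^1_2$. (Here one uses that the number sort $M$ is untouched, so the $\S^1_2$ part survives, and that circuits are coded by numbers so $\mathcal{X}'$ is "definable inside $M$".) Since $\theta$ is a number-sort sentence, its truth depends only on $M$, not on which collection of sets accompanies it: $(M,\mathcal{X}')\models\theta$ iff $(M,\mathcal{X})\models\theta$. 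As $\V^1_2\vdash\theta$ and $(M,\mathcal{X}')\models\V^1_2$, we get $(M,\mathcal{X}')\models\theta$, hence $(M,\mathcal{X})\models\theta$, as desired. This holds for every model, so $\S^1_2(\alpha)+\beta^{c_0}_{M_0}\vdash\theta$, hence $\T\vdash\theta$, contradicting the hypothesis that $\T$ does not prove all number-sort consequences of $\V^1_2$.

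One caveat to handle carefully: $\beta^{c}_{M_0}$ involves the universal machine $M_0$, whereas $\q{$\NEXP\not\subseteq\Ppoly$}$ is defined precisely via $M_0$, so no mismatch of formalizations arises; but I should make sure the $c$ extracted from compactness is genuinely uniform, which is why the monotonicity remark above is needed. The main obstacle — and the real content — is entirely packaged into Lemma~\ref{lem:V12}: verifying that the shrunken set universe $\mathcal{X}'$ satisfies full $\Sigma^{1,b}_1$-comprehension (and the other $\V^1_2$ axioms such as extensionality and the $\Sigma^{1,b}_0$ scheme). That verification is where the Easy Witness Lemma does its work, converting "$M_0$ has small witness circuits in the model" (which is what the consequent of $\beta^{c_0}_{M_0}$ gives for the relevant instances) into closure of $\mathcal{X}'$ under comprehension. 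Everything in the present theorem beyond invoking that lemma is the soft model-theoretic wrapper — contraposition, compactness, the number-sort/set-sort separation of truth — sketched above.
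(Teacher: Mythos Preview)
Your proof is correct and follows essentially the same route as the paper: contraposition, compactness to extract a single $\beta^{c_0}_{M_0}$ (your monotonicity remark makes explicit a step the paper glosses over), then invoke Lemma~\ref{lem:V12} to shrink the set sort to a model of $\V^1_2$ over the same number sort, and conclude since number-sort sentences are insensitive to the set universe. One small inaccuracy in your commentary: the Easy Witness Lemma is not what powers Lemma~\ref{lem:V12}---that lemma is proved directly from the $\beta$-formulation via the model-checker machinery (Lemmas~\ref{lem:cprh} and~\ref{lem:specialcomp}); the Easy Witness Lemma only enters to justify that the $\beta$-formalization is faithful.
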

}   

By a {\em number-sort} formula we mean one that does not use set-sort
variables.  Note that the corollary refers to number-sort sentences of
arbitrary unbounded quantifier complexity.  It is conjectured
that~$\V^1_2$ has more number-sort consequences than all other
theories mentioned so far. But this is known only for~$\S^1_2$
\cite{Takeuti:BATruthDef,Krajicek:exponentiation}, and there even for~$\forall\Pi^b_1$-sentences.
Theorem~\ref{thm:conservative} directly infers evidence for the
truth of \q{$\NEXP\not\subseteq\Ppoly$} from progress in mathematical
logic on understanding independence.  Loosely speaking, we view it in
line with the belief that it is mathematical logic that ultimately
bears on fundamental complexity-theoretic conjectures (see e.g.\ again
the preface of \cite{Krajicek:book}).

The second consequence is:

\begin{theorem}\label{thm:mag}
  If~$\S^1_2(\alpha)$ does not prove \q{$\NEXP\not\subseteq\Ppoly$},
  then~$\V^1_2$ does not prove \q{$\NEXP\not\subseteq\Ppoly$}.
\end{theorem}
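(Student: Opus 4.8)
The plan is to argue directly, exploiting that the lower bound \q{$\NEXP\not\subseteq\Ppoly$} is formalized through the $\forall\Pi^{1,b}_1$ sentences $\beta^c_{M_0}$. Suppose $\S^1_2(\alpha)$ does not prove \q{$\NEXP\not\subseteq\Ppoly$}, so that $\S^1_2(\alpha)\not\vdash\neg\beta^c_{M_0}$ for some $c\in\N$; fix a model $(M,\mathcal{X})\models\S^1_2(\alpha)+\beta^c_{M_0}$. By Lemma~\ref{lem:V12}, passing to the submodel $(M,\mathcal{X}')$ with the same number sort $M$ and with $\mathcal{X}'\subseteq\mathcal{X}$ equal to the class of those sets of $\mathcal{X}$ that are represented by a circuit coded in $M$ yields $(M,\mathcal{X}')\models\V^1_2$. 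To conclude it suffices to show that the same instance $\beta^c_{M_0}$ still holds in $(M,\mathcal{X}')$: then $\V^1_2+\beta^c_{M_0}$ is consistent, so $\V^1_2\not\vdash\neg\beta^c_{M_0}$, and hence $\V^1_2$ does not prove \q{$\NEXP\not\subseteq\Ppoly$}, as claimed.

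So the crux is that $\beta^c_{M_0}$ transfers downwards from $(M,\mathcal{X})$ to $(M,\mathcal{X}')$, and this is exactly where its quantifier shape is used. Written out, $\beta^c_{M_0}$ is $\forall n{\in}\Log\;\exists C{<}2^{n^c}\;\exists D{<}2^{n^c}\;\forall x{<}2^n\;\forall Y\;\mu$, where every quantifier other than $\forall Y$ ranges over the number sort and the matrix $\mu=\mu(n,C,D,x,Y)$ has no set quantifier at all: $Y$ occurs in it only as a parameter, and the conjunct \q{$D_x(\cdot)$ is an accepting computation of $M_0$ on $x$} is really a bounded number-sort statement about $D$ and $x$, since a bit of $D_x(\cdot)$ is obtained from $D$ and $x$ by a bounded $\S^1_2$-formula for circuit evaluation. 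Now fix $n\in\Log$ in $(M,\mathcal{X}')$, hence in $(M,\mathcal{X})$, and let $C,D<2^{n^c}$ witness $\beta^c_{M_0}$ at $n$ in $(M,\mathcal{X})$. For each $x<2^n$: if $C(x){=}1$, the conjunct on $D_x(\cdot)$ is a bounded number-sort statement, so it holds in $(M,\mathcal{X})$ iff it holds in $(M,\mathcal{X}')$; if $C(x){=}0$, then every $Y\in\mathcal{X}'$ already lies in $\mathcal{X}$, so $(M,\mathcal{X})\models\mu(n,C,D,x,Y)$, and since $\mu$ has no set quantifiers its truth at the parameter $Y$ is the same in both structures, so $(M,\mathcal{X}')\models\mu(n,C,D,x,Y)$. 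Hence $C,D$ witness $\beta^c_{M_0}$ at $n$ in $(M,\mathcal{X}')$; as $n$ was arbitrary, $(M,\mathcal{X}')\models\beta^c_{M_0}$.

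Conceptually this is a magnification phenomenon: the single instance $\beta^c_{M_0}$ of $\NEXP\subseteq\Ppoly$ is already strong enough to force the circuit-represented sets of an $\S^1_2(\alpha)$-model to obey $\Sigma^{1,b}_1$-comprehension, so refuting $\beta^c_{M_0}$ over the weak base theory $\S^1_2(\alpha)$ is no easier than over the much stronger $\V^1_2$. The real work is isolated in Lemma~\ref{lem:V12}, whose proof in turn rests on the Easy Witness Lemma (Lemma~\ref{lem:ewl}) together with the universality of $M_0$; granting that lemma, the proof of Theorem~\ref{thm:mag} itself is the short absoluteness argument above, and the one point that must be got right is the bookkeeping there, namely that every set quantifier of $\beta^c_{M_0}$ is universal and that its only ``positive witness'' is carried by a circuit rather than by a quantified set. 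This is precisely why we formalize \q{$\NEXP\not\subseteq\Ppoly$} by the $\forall\Pi^{1,b}_1$ sentences $\beta^c_{M_0}$ and not by the higher-complexity sentences $\alpha^c_{M_0}$: the analogue for $\alpha^c_{M_0}$ would need to transfer a sentence with existential set quantifiers to the submodel $(M,\mathcal{X}')$, which is not in general preserved under passing to substructures.
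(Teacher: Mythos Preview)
Your proof is correct and follows the same approach as the paper: take a model of $\S^1_2(\alpha)+\beta^c_{M_0}$, shrink the set sort via Lemma~\ref{lem:V12} to get a model of $\V^1_2$, and observe that $\beta^c_{M_0}$ persists downward because all its set quantifiers are universal---the paper states this last step in one line, while you spell out the absoluteness explicitly. One small correction to your commentary: the proof of Lemma~\ref{lem:V12} does not rest on the Easy Witness Lemma~\ref{lem:ewl}; that lemma is used only to justify that the $\beta$-formalization is faithful (Proposition~\ref{prop:betatruth}), whereas Lemma~\ref{lem:V12} is proved via Lemma~\ref{lem:cprh} and the universality of~$M_0$ through Lemma~\ref{lem:beta}.
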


This is a {\em magnification result} on the hardness of proving
circuit lower bounds: it infers strong hardness (for~$\V^1_2$) from
weak hardness (for~$\S^1_2(\alpha)$). The term magnification has been
coined in \cite{OliveiraSanthanam:Magnification} in the context of circuit lower bounds where such
results are currently intensively investigated (cf.\
\cite{CHOPRS:BeyondNatural}). In proof complexity such results are rare so far. An
example in propositional proof complexity appears in
\cite[Proposition~4.14]{MullerPich:SuccinctLB}.
Magnification results are interesting because they reveal
inconsistencies in common beliefs about what is and what is not within
the reach of currently available techniques.
Theorem~\ref{thm:mag} might
foster hopes to complete
Razborov's program to find a precise barrier in circuit complexity (cf.~Remark~\ref{rem:hope}).

\section{Consistency of the direct formalization } \label{sec:direct}

In this section we provide the details of the simple proof of
Theorem~\ref{thm:alphaNEXP}. We begin by recalling the necessary
preliminaries on bounded arithmetic. This will be needed also in later
sections. We refer to \cite[Ch.5]{Krajicek:book} for the missing details.

\subsection{Preliminaries: bounded arithmetic}

Bounded arithmetics have
language $x{\le}y$, $0$, $1$, $x{+}y$, $x{\cdot}y$, $\floor{x{/}2}$, $x{\#}y$, $|x|$,
and built-in equality~$x{=}y$.  Note that Cantor's pairing $\langle
x,y\rangle$ is given by a term. Iterating it gives $\langle
x_1,\ldots,x_k\rangle$ for $k>2$. A number~$x$ is called \emph{small}
if it satisfies the formula $\exists y\ x{=}|y|$. We
abbreviate $\exists y\ x{=}|y|$ by $x{\in}\Logold$
and $x{\in}\Logold \wedge 1{<}x$ by $x{\in}\Logold_{>1}$.  The
quantifiers $\forall x{\in}\Logold_{>1}$
and $\exists x{\in}\Logold_{>1}$ range over small numbers above~$1$.
If~$x = |y|$, we write~$2^x$ for~$1\#y$ and similarly for other
exponential functions.  E.g., a formula of the
form $\forall x{\in} \Log\ \ldots\ 2^{x^2} \ldots$ stands for the
formula $\forall x \forall y\ (1{<}x\wedge x{=}|y|\to \ldots\ y\#y
\ldots)$.

\paragraph{Theories.}
The theories of bounded arithmetic are given by a set~$\BASIC$ of
universal sentences determining the meaning of the symbols, plus
induction schemes. For a set of formulas~$\Phi$, the set (of the
universal closures) of formulas
$$
\varphi(\bar x,0)\wedge\forall y{<}z\ (\varphi(\bar x,y)\to\varphi(\bar
x,y+1))\to \varphi(\bar x,z),
$$
for~$\varphi\in\Phi$, is the scheme
of~\emph{$\Phi$-induction}. Restricting to small numbers~$z$ gives the
scheme of~\emph{$\Phi$-length induction}; formally, replace~$z$
by~$|z|$ above. Here, and throughout, when writing a formula~$\psi$
as~$\psi(\bar x)$ we mean that {\em all} free variables of~$\psi$ are
among~$\bar x$.

The set~$\Sigma^b_\infty$ contains all bounded formulas,
and~$\Sigma^b_i,\Pi^b_i$, for~$i\in\N$, are subsets
of~$\Sigma^{b}_\infty$ that are defined by counting alternations of bounded
quantifiers~$\exists x{\le }t, \forall x{\le}t$, not counting sharply
bounded ones~$\exists x{\le }|t|, \forall x{\le}|t|$.  In
particular,~$\Sigma^b_0=\Pi^b_0$ is the set of sharply bounded
formulas.  The theories~$\T^i_2$ are defined
by~$\BASIC + \text{$\Sigma^b_i$-induction}$.  The theories~$\S^i_2$
are defined by~$\BASIC + \text{$\Sigma^b_i$-length-induction}$. Full
bounded arithmetic~$\T_2:=\bigcup_{i\in\N}\T^i_2$
has~$\Sigma^{b}_\infty$-induction.


\paragraph{Two-sorted theories.}
Two-sorted bounded arithmetics are obtained by adding a new set of
variables~$X,Y,\ldots$ of the {\em set sort}. Original
variables~$x,y,\ldots$ are of the {\em number sort}.  We shall use
capital letters also for number-sort variables. Therefore, for
clarity, from now on we write~$\exists_2 X$ and~$\forall_2X$ for
quantifiers on set-sort variables~$X$.  The language is enlarged by
adding a binary relation~$x{\in}X$ between the number and the set
sort.  A {\em number-sort} formula is one that uses only the number
sort. In particular, it has no set-sort parameters. By a {\em term} we
mean a term in the number sort.  We write~$X{\le}z$ for~$\forall
y\ (y{\in}X\to y{\le}z)$.

Models have the form~$(M,\mathcal X)$ where~$M$ is a universe for the
number sort and~$\mathcal X$ is a universe for the set sort. The
symbol~$\in$ is interpreted by a subset of~$M\times\mathcal X$. The
standard model is~$(\N,[\N]^{<\omega})$ where~$[\N]^{<\omega}$ is the
set of finite subsets of~$\N$; the number sort symbols are interpreted
as usual over~$\N$ and~$\in$ by actual element-hood.

The sets~$\Sigma^b_\infty(\alpha),\Sigma^b_i(\alpha),\Pi^b_i(\alpha)$
are defined as~$\Sigma^b_\infty,\Sigma^b_i,\Pi^b_i$, allowing free
set-variables and the symbol~$\in$, but not allowing set-sort
quantifiers, nor set-sort equalities~$X{=}Y$.  Another name for the
set~$\Sigma^b_{\infty}(\alpha)$ is~$\Sigma^{1,b}_0$.
The
theories~$\T^i_2(\alpha)$,~$\S^i_2(\alpha)$, and~$\T_2(\alpha)$, are
given by~$\BASIC$ and analogous induction schemes as before,
namely~$\Sigma^b_i(\alpha)$-induction,~$\Sigma^b_i(\alpha)$-length
induction, and~$\Sigma^b_\infty(\alpha)$-induction,
respectively. Additionally, we add the following axioms with the set
sort. Recalling the notation~$X{\leq}z$ introduced above, the new
axioms are~(the universal closures of):

\medskip
\begin{tabular}{lll}
{\em set-boundedness axiom}: & $\exists z \ X{\le}z$. \\
{\em extensionality axiom}: & $X{\le}z\wedge Y{\le}z\wedge \forall y{\le} z\ (y{\in}X\leftrightarrow y{\in}Y)\to X{=}Y$.
\end{tabular}
\medskip

\noindent We add the scheme of (bounded)
\emph{$\Delta^b_1(\alpha)$-comprehension}, given by (the universal
closures of) the formulas
\begin{equation}
 \exists_2 Y{\le} z\ \forall y{\le}z\ \big(y\in
  Y\leftrightarrow\varphi(\bar X,\bar x,y)\big),
\label{eq:comprehensionscheme}
\end{equation}
where~$\varphi(\bar X,\bar x,y)$ is~$\Delta^b_1(\alpha)$ with respect
to~the theory defined over the two-sorted language as~$\BASIC$
plus~$\Sigma^b_1(\alpha)$-length-induction, i.e., this theory
proves~$\varphi(\bar X,\bar x,y)$ equivalent to both
a~$\Pi^b_1(\alpha)$-formula and a~$\Sigma^b_1(\alpha)$-formula.

For example, this scheme implies that there is a set~$Y$ as described
when $\varphi(\bar X,\bar x,y)$ is $f^{\bar X}(\bar x,y){=}1$
where $f^{\bar X}(\bar x,y)$ is a function that
is $\Sigma^b_1(\alpha)$-definable in~$\S^1_2(\alpha)$. The superscript
indicates that $\bar X$ comprises all the free variables of the set
sort that appear in the $\Sigma^b_1(\alpha)$-formula that
defines $f^{\bar X}(\bar x,y)$.  It is well known~\cite{Buss:bookBA} that
these are precisely the functions that are computable in polynomial
time with oracles denoted by the set variables. We do not
distinguish~$\S^1_2$ (or~$\S^1_2(\alpha)$) from its variant in the
language~$\PV$ (resp.,~$\PV(\alpha)$) which has a symbol for all
polynomial time functions (resp., with oracles denoted by the set
variables). We shall often use that $\S^1_2(\alpha)$ proves induction
for quantifier-free $\PV(\alpha)$-formulas
(cf. \cite[Lem\-ma~5.2.9]{Krajicek:book}). We write
quantifier-free $\PV(\alpha)$-formulas with latin capital letters;
e.g., $F(\bar X,\bar x)$.

\paragraph{A piece of notation.}
For formulas~$\varphi(Y,\bar X,\bar x)$ and~$\psi(\bar Z,\bar z,u)$ we
write
$$
\varphi\big( \psi(\bar Z,\bar z,\cdot),\bar X,\bar x \big)
$$
for the formula obtained from~$\varphi$ by replacing every atomic
subformula of the form~$t{\in}Y$, for~$t$ a term, by the
formula~$\psi(\bar Z,\bar z,t)$, preceded by any necessary renaming
of the bound variables of~$\varphi$ to avoid the capturing of free
variables.  We use this notation only for formulas~$\varphi$ without
set equalities.

\paragraph{Genuine two-sorted theories.}
It is easy to see that the theories $\T^i_2(\alpha),\S^i_2(\alpha)$
have the same number sort consequences as~$\T^i_2,\S^i_2,$
respectively. Also $\T^i_2(\alpha),\S^i_2(\alpha)$ are conservative
over their subtheories without $\Delta^b_1(\alpha)$-comprehension.
Intuitively, the two-sorted versions of bounded
arithmetics are the usual ones plus syntactic sugar.  Genuine
set-sorted theories are obtained from~$\T_2(\alpha)$ by adding
\emph{(bounded) $\Phi$-comprehension} for certain sets of 
formulas~$\Phi$, i.e.,~\eqref{eq:comprehensionscheme}
for $\varphi(\bar X,\bar x,y)$ in~$\Phi$.

The set~$\Sigma^{1,b}_\infty$ contains all two-sorted formulas with
quantifiers of both sorts, but bounded number-sort quantifiers. Again
we disallow set equalities.
The
sets~$\Sigma^{1,b}_i,\Pi^{1,b}_i$, for~$i\in\N$, are subsets
of~$\Sigma^{1,b}_\infty$ defined by counting the alternations of
set quantifiers (and not counting number quantifiers).
A~$\hat\Sigma^{1,b}_1$-formula is of the form
\begin{equation}
\exists_2 Y\ \varphi(\bar X,Y,\bar x)
\label{eqn:generichatSigma}
\end{equation}
where~$\varphi(\bar X,Y,\bar x)$ is a $\Sigma^{1,b}_0$-formula.

For~$i\in\N$ the theory~$\V^i_2$ is given
by $\Sigma^{1,b}_i$-comprehension. In particular, $\V^0_2$ is given
by $\Sigma^{1,b}_0$-comprehension. It has the same
number-sort consequences as~$\T_2$.

\begin{remark}\label{rem:setbound}
  Sometimes, the sets~$\Sigma^{1,b}_i(\alpha)$ are defined with
  bounded set quantifiers $\exists X{\le t}$ and~$\forall
  X{\le}t$. The difference is not essential: for
  every $\Sigma^{1,b}_\infty$-formula~$\varphi(\bar X,Y,\bar x)$ there
  is a term~$t(\bar x)$ such that $\S^1_2(\alpha)$ proves
\begin{equation*}
t(\bar x){\le}y\to \big( \varphi(\bar X,Y,\bar
x)\leftrightarrow\varphi(\bar X,Y^{\le y},\bar x)\big)
\end{equation*}
where $Y^{\le y}$ stands for $\psi(Y,y,\cdot)$
with $\psi(Y,y,u):=(u{\le}y\wedge
u{\in}Y)$. By $\Delta^b_1(\alpha)$-comprehen\-sion, $\exists_2Y\varphi$
is $\S^1_2(\alpha)$-provably equivalent to $\exists_2Y{\le}t(\bar
x)\ \varphi$. It follows that every $\Sigma^{1,b}_i(\alpha)$-formula
is $\S^1_2(\alpha)$-provably equivalent to one with bounded set sort
quantifiers.
\end{remark}

\begin{remark} Disallowing
set equalities is convenient but inessential in the sense
that $\V^i_2$ does not change when set equalities are allowed
in~$\Sigma^{1,b}_i$.  Indeed, let $\varphi(\bar X,\bar x)$ be
a $\Sigma^{1,b}_i$-formula except that set equalities are
allowed. Then there is a $\Sigma^{1,b}_i$-formula $\varphi^*(\bar
X,\bar x,u)$ (without set equalities and) with bounded set quantifiers
such that $\S^1_2(\alpha)$ proves
 $$
 \exists u\ \big( \varphi(\bar X,\bar x)\leftrightarrow\varphi^*(\bar X,\bar x,u) \big).
 $$
\end{remark}

\begin{proof} The formula $\varphi^*$  is defined by a
straightforward recursion on~$\varphi$. For example, if $\varphi$
is $X_1{=}X_2$, then $\varphi^*$ is~$\forall y{\le} u\ (y{\in}X_1\to
y{\in}X_2)\wedge \forall y{\le} u\ (y{\in}X_2\to y{\in}X_1)$; a~$u$
witnessing the equivalence is any common upper bound on $X_1$
and~$X_2$. If $\varphi$ is $\exists_2Y \psi(\bar X,Y,\bar x)$
and $\psi^*=\psi^*(\bar X,Y,\bar x,u)$ is already defined,
then $\varphi^*$ is $\exists_2Y{\le} t(\bar x,u)\ \psi^*(\bar X,Y,\bar
x, u)$ where the term~$t$ is chosen according to the previous remark.
\end{proof}

\paragraph{Circuits.}\label{sec:circ}
A circuit with~$s$ gates is coded by a number
below~$2^{10\cdot s\cdot|s|}$.  On the formal level we shall only
consider small circuits, i.e.,~$s\in \Logold$,
so~$2^{10\cdot s\cdot|s|}$ exists.  We use capital letters $C,D,E$ for
number variables when they are intended to range over circuits.  There
is a $\PV$-function~$\eval(C,x)$ that (in the standard model) takes a
circuit~$C$ with, say, $n\le |C|$ input gates, and evaluates it on
inputs~$x<2^{n}$.  This means that the input gates of~$C$ are assigned
the bits of the length-$n$ binary representation of~$x$; we
assume $\eval(C,x)=0$ if $x\ge 2^n$ or if $C$ does not code a circuit.

It is notationally convenient to have circuits take finite
tuples $\bar x=(x_1,\ldots, x_k)$ as inputs; formally, such a circuit
has~$k$ sequences of input gates, the $i$-th taking the bits
of~$x_i$. Again, $\eval(C,\bar x)$ denotes the evaluation function; it
outputs~$0$ if any~$x_i$ has length bigger than the length of its
allotted input sequence. Our circuits have exactly one output gate,
so~$\S^1_2$ proves $\eval(C,\bar x){<}2$.  We write~$C(\bar x)$ for
the quantifier-free $\PV$-formula $\eval(C,\bar x){=}1$; in some
places we also write $C(\bar x){=}1$ and $C(\bar x){=}0$ instead
of $C(\bar x)$ and $\neg C(\bar x)$, respectively.

For a circuit~$C$ taking~$(\ell+k)$-tuples as inputs and
an~$\ell$-tuple~$\bar x$ we let~$C_{\bar x}$ be the circuit obtained
by fixing the first~$\ell$ inputs to~$\bar x$; it takes~$k$-tuples as
inputs. Formally,~$C_{\bar x}$ is a~$\PV$-term with
variables~$C,\bar x$ and~$\S^1_2(\alpha)$
proves~$(C_{\bar x}(\bar y) \leftrightarrow C(\bar x,\bar y))$
and~$|C_{\bar x}|{\le}|C|$.

\begin{lemma}\label{lem:pvcircuit}
  For every quantifier-free $\PV$-formula~$F(\bar x)$ there is
  a~$c\in\N$ such that~$\S^1_2$ proves
$$
\forall n{\in} \Log\ \exists C{<}2^{n^c}\ \forall \bar x{<}2^n\ \big( C(\bar x)\leftrightarrow F(\bar x)\big).
$$
\end{lemma}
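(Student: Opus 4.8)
The plan is to prove this by structural induction on the term and formula structure of $F$, building circuits whose size grows polynomially in the input length $n$. The base case and the obstacle both concern the $\PV$-function symbols: since $F$ is quantifier-free $\PV$, it is a Boolean combination of atomic formulas $t_1(\bar x) = t_2(\bar x)$ and $t_1(\bar x) \le t_2(\bar x)$, where the $t_i$ are $\PV$-terms. So it suffices to show that for each $\PV$-term $t(\bar x)$ there is a constant $c$ such that $\S^1_2$ proves: for all small $n$ there is a circuit $C < 2^{n^c}$ with $|C|$ input gates grouped appropriately, such that for all $\bar x < 2^n$, $\eval(C, \bar x)$ equals the binary representation of $t(\bar x)$ (suitably truncated), with provably $|t(\bar x)| \le n^c$. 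Here I would first argue the polynomial length bound: each $\PV$-term is built by composition from the basic functions, each of which has a polynomial growth rate provable in $\S^1_2$ (e.g.\ $|x \# y| = |x|\cdot|y| + 1$, $|x + y| \le \max(|x|,|y|)+1$, $|x\cdot y| \le |x|+|y|$), so by $\S^1_2$-provable induction on the term structure $|t(\bar x)| \le p_t(n)$ for a polynomial $p_t$ whenever $|x_i| \le n$.

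Next I would do the induction on term structure to construct the circuits. For a variable $x_i$, the circuit just copies its input bits. For the basic arithmetic operations $x + y$, $x \cdot y$, $\lfloor x/2\rfloor$, $x \# y$, $|x|$, and the comparison $x \le y$, there are standard small (indeed $\AC^0$ or $\mathsf{NC}^1$, certainly polynomial-size) circuits computing them on inputs of length $\le n^c$, and crucially $\S^1_2$ can \emph{verify} that these fixed circuit constructions compute the right function: this is a quantifier-free $\PV$-fact about the $\eval$ function applied to an explicitly given circuit, proved by $\S^1_2$'s quantifier-free $\PV$-induction on the relevant iteration (e.g.\ carry propagation for addition, or the grade-school multiplication circuit). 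For a composed term $t = f(s_1(\bar x), \dots, s_m(\bar x))$, I compose the circuits: by induction each $s_j$ has a circuit $C_j < 2^{n^{c_j}}$ computing it with output length $\le n^{c_j}$; feeding these outputs into the circuit $C_f$ for $f$ on the appropriate input lengths, and composing in the obvious way, yields a circuit of size polynomial in $n$, and $\S^1_2$ proves correctness of the composite by combining the correctness statements for $C_f$ and the $C_j$ (a purely propositional/equational manipulation over the $\eval$ equations). The total size stays below $2^{n^c}$ for $c$ the max of the relevant exponents plus a constant; the existence of this bound is available since $n$ is small.

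Finally, for the Boolean structure of $F$: given circuits for the atomic subformulas (each of which, by the above, is computed by a circuit outputting a single bit once we append a small equality- or comparison-test circuit to the term circuits), I combine them with $\wedge$, $\vee$, $\neg$ gates according to the formula $F$ — this adds only $O(1)$ gates per connective in $F$, hence a constant number total since $F$ is fixed. Correctness of the combined circuit follows in $\S^1_2$ by propositional reasoning from the correctness of the subcircuits. The main obstacle I anticipate is purely bookkeeping: carefully stating and chaining the $\S^1_2$-provable correctness assertions for the individual gadget circuits (addition, multiplication, $\#$) so that the composition argument goes through uniformly, since each gadget's correctness is its own little $\S^1_2$ proof by $\PV(\alpha)$-quantifier-free induction; once those are in hand, the inductive composition and the polynomial size bound are routine. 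Note the lemma is stated for $\PV$ (no set sort), so no $\alpha$-oracle gates are needed, which simplifies matters.
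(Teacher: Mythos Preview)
The paper states this lemma without proof, treating it as a standard formalization fact, so there is no paper proof to compare against directly. Your structural-induction approach is natural, but it has a gap: you construct circuits $C_f$ only for the basic function symbols of bounded arithmetic ($+$, $\cdot$, $\#$, $|\cdot|$, $\lfloor\cdot/2\rfloor$), and then in the composition step you assume a circuit $C_f$ exists for an arbitrary $\PV$-function symbol $f$. But the $\PV$ language contains a symbol for every function definable by Cobham's scheme, and in particular symbols introduced by \emph{limited recursion on notation}: $f(\bar x,0)=g(\bar x)$ and $f(\bar x,s_i(y))=h_i(\bar x,y,f(\bar x,y))$ truncated to length $|t(\bar x,y)|$. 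Your induction on term structure never produces $C_f$ for such $f$; the induction must be on the $\PV$-rank of the function symbol as well, and the recursion case is the nontrivial one.

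The fix is to unroll the recursion: given by induction circuits for $g$ and the $h_i$, build $C_f$ by chaining $|y|\le n$ copies of the $C_{h_i}$, each fed by the output of the previous, starting from $C_g(\bar x)$. The length truncation in Cobham's scheme guarantees the intermediate values have length $\le |t(\bar x,y)|$, polynomial in $n$, so the composite has polynomial size; correctness is proved in $\S^1_2$ by quantifier-free $\PV$-induction on $|y|$. Alternatively, and more in the spirit of the paper, go through machines: by Lemma~\ref{lem:pvmachine} every $\PV$-function has an explicit $\P$-machine with $\S^1_2$-verified behaviour, and the Cook--Levin tableau construction turning a polynomial-time computation into a polynomial-size circuit is itself a $\PV$-definable map whose correctness $\S^1_2$ proves by quantifier-free induction along the time steps. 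This route handles all $\PV$-symbols uniformly and avoids the case analysis on the Cobham scheme.
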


\noindent On the formal level, if~$Y$ is a set and~$C$ is a circuit,
then we say that~$Y$ is {\em represented} by~$C$ if~$\forall
y\ (C(y)\leftrightarrow y{\in}Y)$.  In our notation, such set~$Y$ is
written~$C(\cdot)$, or~$\eval(C,\cdot){=}1$. More precisely, for a
formula~$\varphi(Y,\bar X,\bar x)$ and a circuit~$C$ we write
$$
\varphi\big( C(\cdot),\bar X,\bar x \big),
$$
for the formula obtained from~$\varphi$ by replacing every formula of
the form $t{\in}Y$ by~$C(t)$, i.e., by $\eval(C,t){=}1$.  Note that if
the set~$Y$ is represented by a circuit with~$n$ inputs,
then $Y{<}2^n$, provably in~$\S^1_2$.  For example, we shall use
circuits to represent computations of exponential-time
machines~$M$. Using the notation introduced in Section
\ref{sec:explicitmachine},
$$
\q{$C(\cdot)$ is a halting computation of $M$ on $\bar x$}
$$
is a~$\Pi^b_1$-formula with free variables~$C,\bar x$ stating that the
circuit~$C$ represents a halting computation of~$M$ on~$\bar x$.

\subsection{Consistency of the direct formalization for $\NEXP$}\label{sec:DirectConsisProof}

The set of~$\hat\Sigma^{1,b}_1$-formulas without free variables of the
set sort is a natural class of formulas defining, in the standard
model, all the problems in~$\NEXP$. For such a formula~$\psi$ it is
straightforward to write down a set of sentences (a.k.a.\ a theory)
stating that~$\psi$ does not have polynomial-size circuits. We
explicitly define this direct formalization of~$\NEXP\not\subseteq\Ppoly$ as
the set of all sentences of the form~$\neg\alpha^c_{\psi}$, for $c \in
\N$, for the sentence~$\alpha^c_{\psi}$ defined in the introduction,
and then argue that its consistency with~$\V^0_2$ follows from known
lower bounds in proof complexity.

We are ready to prove Theorem~\ref{thm:alphaNEXP}.

\begin{proof}[Proof of Theorem~\ref{thm:alphaNEXP}:]
The {\em
    (functional) pigeonhole principle}~$\PHP(x)$ is
  the following~$\Pi^{1,b}_1$-formula:
\begin{align*}
\forall_2 X\ \big(&\exists y{\le}x{+}1\ \forall z{\le}x\ \neg\langle y,z\rangle{\in}X\ \vee \\
&  \exists y{\le} x{+}1\  \exists z{\le}x\ \exists z'{\le}x\ (\neg z{=}z'\wedge\langle y,z\rangle{\in}X\wedge \langle y,z'\rangle{\in}X)\ \vee \\
&  \exists y{\le}x{+}1\ \exists y'{\le} x{+}1\ \exists z{\le}x \ (\neg y{=}y'\wedge\langle y,z\rangle{\in}X\wedge \langle y',z\rangle{\in}X)
 \big).
\end{align*}
Note that~$\psi = \psi(x) := \neg\PHP(x)$ is (logically equivalent to)
a~$\hat\Sigma^{1,b}_1$-formula. For the sake of contradiction assume
that~$\V^0_2+\big\{\neg \alpha^c_{\psi}\mid c\in \N\big\}$ is
inconsistent. By compactness, there exists $c\in\N$ such that~$\V^0_2$
proves~$\alpha^c_{\psi}$.

\medskip

\noindent{\em Claim:} $\V^0_2+\alpha^c_{\psi}$ proves $\PHP(x)$.

\medskip

The claim implies the theorem: it is well known
\cite[Corollary~12.5.5]{Krajicek:book} that there is an expansion~$(M,R^M)$
of a model~$M$ of~$\BASIC$ by an interpretation~$R^M\subseteq M$ of a
new predicate~$R$ such that~$R^M$ is bounded and
witnesses~$\neg\PHP(n)$ for some~(nonstandard)~$n\in M$, and,
further,~$(M,R^M)$ models induction for bounded
formulas. Let~$\mathcal{Y}$ be the collection of bounded sets
definable in~$(M,R^M)$ by bounded formulas. Then~$(M,\mathcal{Y})$ is
a model of~$\V^0_2$ with~$R^M\in\mathcal Y$,
so~$(M,\mathcal Y)\models\neg\PHP(n)$.

\medskip

We are left to prove the claim. Argue in~$\V^0_2$ and
set~$n:=\max\{|x|,2\}$. Then~$\alpha^c_{\psi}$ gives a circuit~$C$
such that
\begin{equation*}
\forall u{\le}x\ (\neg C(u)\leftrightarrow\PHP(u)).
\end{equation*}

We observe that~$\V^0_2$ proves that~$\PHP(x)$ is inductive, i.e.,
\begin{equation}
\PHP(0)\wedge \forall u{<}x\ (\PHP(u)\to\PHP(u+1)). \label{eqn:phpind}
\end{equation}
Indeed, if~$X$ is a set that witnesses $\neg\PHP(u+1)$, then we construct a
set~$Y$ that witnesses $\neg\PHP(u)$ as follows. If there does not
exist any $v{\leq}u{+}1$ with $\langle v,u\rangle{\in}X$, then the
set~$Y := X$ itself is the witness we want. On the other hand, if
there exists $v{\leq}u{+}1$ with~ $\langle v,u\rangle{\in}X$, then
let $Y$ be the set of pairs $z = \langle x,y\rangle$ such that the two
projections $x = \pi_1(z)$ and~$y = \pi_2(z)$ satisfy the
formula $\varphi(x,y,u,v)$ below, for the fixed parameters $u$
and~$v$:
\begin{align*}
\varphi(x,y,u,v) := x{\le}u \wedge y{<}u\wedge\big((x{>}v \wedge \langle x{-}1,y\rangle{\in}X)
\vee (x{<}v \wedge \langle x,y\rangle{\in}X)\big).
\end{align*}
Here,~$x{-}1$ denotes the (truncated) predecessor~$\PV$-function.
In the definition of~$Y$ we used the two projections~$\pi_1$ and~$\pi_2$,
also as~$\PV$-functions. Since the definition of~$Y$ is a
quantifier-free~$\PV(\alpha)$-formula, the set~$Y$ exists by
quantifier-free~$\PV(\alpha)$-comprehension, and it is clear by
construction that it witnesses~$\neg\PHP(u)$.

To complete the proof, plug~$\neg C(u)$ for~$\PHP(u)$
in~\eqref{eqn:phpind} and quantifier-free~$\PV(\alpha)$-induction
gives~$\neg C(x)$, and hence~$\PHP(x)$.
\end{proof}

\begin{remark} The model~$(M,\mathcal{X})$ that witnesses the above
  consistency is a model of~$\V^0_2$ where~$\PHP(n)$ fails for some
  nonstandard~$n\in M$: otherwise~$\alpha^{1}_{\neg\PHP}$ would be
  true and witnessed by trivial circuits that always reject.
\end{remark}

\subsection{A strengthening to $\PHpoly$}\label{sec:phpoly}

While our focus is on $\Ppoly$, in this section we point out a version of
Theorem~\ref{thm:alphaNEXP} stating the consistency of $\NEXP\not\subseteq\PHpoly$.


For $i>0$, let $T_i(e,t,x)$~denote a universal
 $\Sigma^b_i$-formula: for every
$\Sigma^b_i$-formula $\varphi(x)$, there are
$e,d\in\N$ such that  $\V^0_2$ (in fact, $\S^1_2$ \cite[Corollary 6.1.4]{Krajicek:book})
proves
$$\varphi(x)\leftrightarrow T_i(e,2^{|x|^d+d},x).
$$
Intuitively, the parameter
$|x|^d+d$ serves as a runtime bound of a suitable model-checker coded by $e$. Thus, the
formulas $T_i(e,2^{|x|^d+d},x)$ for varying $c,d\in\N$ define (in the standard model) precisely the problems in the $i$-th level $\Sigma^\PTIME_i$of the polynomial hierarchy $\PH$.

We incorporate nonuniformity as follows. Again, let $\pi_1,\pi_2$ be the $\PV$-functions computing the projections for pairs $\langle x,y\rangle$. Define
$$T_i'(a,x):=T_i(\pi_1(a),2^{|a|},\langle\pi_2(a),x\rangle).
$$
Thus, $a$ determines the runtime bound and some ``advice'' $\pi_2(a)$.
Then $Q\subseteq\N$ is in $\PHpoly$
 if there exists $i>0$ and a function~$a(n)$ such
that $|a(n)|$ is polynomially bounded in~$n$ and such that for all $x$ we have
$x\in Q$ if and only if  $T_i^\prime(a(|x|),x)$ is true (in the standard model).

\begin{definition}\label{def:alpha}
  Let $i, c \in \N$
  and let $\varphi = \varphi(x)$ be
  a $\hat\Sigma^{1,b}_1$-formula (with only one free
  variable~$x$, and in particular without free variables of the set
  sort). Define
\begin{equation*}
\alpha_\varphi^{i,c}\ :=\ \forall n{\in}\Log\ \exists a{\le} 2^{n^c}
  \forall x{<}2^n\ \big( T^\prime_i(a,x) \leftrightarrow \varphi(x)\big).
\end{equation*}
\end{definition}

It is clear that $\bigl\{\neg \alpha^{i,c}_\varphi\mid i, c\in \N \bigr\}$ is true if and only if the $\NEXP$-problem defined by $\varphi(x)$ does not belong to $\PHpoly$. Hence, the following states the consistency of $\NEXP\not\subseteq\PHpoly$:

\begin{theorem}\label{thm:deltaNEXP} There exists $\varphi(x)\in \hat\Sigma^{1,b}_1$
such that $\V^0_2+\bigl\{\neg \alpha^{i,c}_\varphi\mid i, c\in \N \bigr\}$
is consistent.
\end{theorem}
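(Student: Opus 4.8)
The plan is to adapt the proof of Theorem~\ref{thm:alphaNEXP} essentially verbatim, replacing the single circuit $C$ by a ``circuit in the polynomial hierarchy'' encoded by a number $a$. Concretely, suppose for contradiction that $\V^0_2 + \{\neg\alpha^{i,c}_\varphi : i,c\in\N\}$ is inconsistent, where $\varphi(x) := \neg\PHP(x)$ as before. By compactness there are fixed $i, c\in\N$ such that $\V^0_2$ proves $\alpha^{i,c}_\varphi$. The goal is again to show that $\V^0_2 + \alpha^{i,c}_\varphi$ proves $\PHP(x)$, which gives a contradiction by the same model-theoretic fact (\cite[Corollary~12.5.5]{Krajicek:book}) used in the proof of Theorem~\ref{thm:alphaNEXP}: there is an expansion $(M,R^M)$ of a model of $\BASIC$ satisfying bounded-formula induction in which $R^M$ witnesses $\neg\PHP(n)$ for a nonstandard $n$, and taking $\mathcal{Y}$ to be the bounded sets defined by bounded formulas over $(M,R^M)$ yields a model of $\V^0_2$ with $R^M\in\mathcal{Y}$, hence $(M,\mathcal{Y})\models\neg\PHP(n)$.

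So the crux is the claim: $\V^0_2 + \alpha^{i,c}_\varphi$ proves $\PHP(x)$. Arguing in $\V^0_2$, set $n := \max\{|x|,2\}$; then $\alpha^{i,c}_\varphi$ provides a number $a\le 2^{n^c}$ with $\forall u\le x\, (T'_i(a,u)\leftrightarrow\neg\PHP(u))$, i.e.\ $\forall u\le x\,(\neg T'_i(a,u)\leftrightarrow\PHP(u))$. The proof of Theorem~\ref{thm:alphaNEXP} already establishes, in $\V^0_2$, that $\PHP$ is inductive, namely $\PHP(0)\wedge\forall u<x\,(\PHP(u)\to\PHP(u{+}1))$, via the explicit quantifier-free $\PV(\alpha)$-definable construction of the witness $Y$ for $\neg\PHP(u)$ from one for $\neg\PHP(u{+}1)$. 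Substituting $\neg T'_i(a,u)$ for $\PHP(u)$ turns this into the statement that $\neg T'_i(a,u)$ is inductive. To conclude $\neg T'_i(a,x)$ — hence $\PHP(x)$ — we need induction for the formula $\neg T'_i(a,u)$ (with $u$ the induction variable and $a,x$ as parameters), up to $x$. Since $T'_i$ is a $\Sigma^b_i$-formula, $\neg T'_i$ is $\Pi^b_i$, and $\V^0_2 \supseteq \T_2$ has $\Sigma^b_\infty$-induction, so this induction is available. This is the one new ingredient compared to Theorem~\ref{thm:alphaNEXP}, where the substituted formula $\neg C(u)$ was quantifier-free and only quantifier-free $\PV(\alpha)$-induction (available already in $\S^1_2(\alpha)$) was needed; here we genuinely use the full bounded induction of $\V^0_2$.

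I expect the only real subtlety — and the step to write carefully — is bookkeeping around the definition of $T'_i$ and the bound on $a$: one must check that $\alpha^{i,c}_\varphi$ really does, for $n=\max\{|x|,2\}$, hand back an $a$ for which $T'_i(a,u)$ is meaningful and correct for all $u\le x$ (in particular that the built-in runtime bound $2^{|a|}$ in the definition $T'_i(a,x)=T_i(\pi_1(a),2^{|a|},\langle\pi_2(a),x\rangle)$ is large enough to simulate the relevant $\Sigma^b_i$ computation on inputs of length $n$, which follows since $|a|$ is polynomial in $n$ and we are free to increase $c$). None of this requires new mathematics beyond the universality property of $T_i$ quoted in the excerpt. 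Everything else — the inductivity of $\PHP$, the model-theoretic separation — is imported unchanged from the proof of Theorem~\ref{thm:alphaNEXP}. Thus the proof is a short variation: replace ``circuit $C$'' by ``$\PH$-circuit $a$'', replace ``quantifier-free $\PV(\alpha)$-induction'' by ``$\Pi^b_i$-induction in $\V^0_2$'', and otherwise repeat the earlier argument.
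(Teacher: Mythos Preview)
Your proposal is correct and follows the paper's approach exactly: the paper's proof simply states that one replaces $C(x)$ by $T'_i(a,x)$ in the proof of Theorem~\ref{thm:alphaNEXP} and leaves the details to the reader, and you have filled those in faithfully, including the key observation that the induction step now requires $\Pi^b_i$-induction (available in $\V^0_2 \supseteq \T_2$) rather than merely quantifier-free $\PV(\alpha)$-induction. Your parenthetical worry about the runtime bound $2^{|a|}$ is unnecessary, since $\alpha^{i,c}_\varphi$ directly hands you an $a$ satisfying the equivalence $T'_i(a,u)\leftrightarrow\varphi(u)$ for all $u<2^n$; there is nothing further to verify about ``meaningfulness'' of $T'_i$.
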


This
is proved in almost exactly the same way as
the just-given proof of Theorem~\ref{thm:alphaNEXP}.
The only difference is that, working in a model
of $\V^0_2 + \alpha^{i,c}_\varphi$,
the circuit~$C(x)$
is replaced with the formula $T^\prime_i(a,x)$ for
an advice string $a \le 2^{|x|^c}$.  The
details are left to the reader.

\section{Formally verified model-checkers} \label{sec:mc}

We shall need to formally reason about certain straightforwardly
defined exponential time machines, namely model-checkers and universal
machines. A model-checker~$M_\varphi$ for a
formula~$\varphi(\bar X,\bar x)$ has oracle access to~$\bar X$ and, on
input~$\bar x$, decides whether~$\varphi(\bar X,\bar x)$ is true. For
example, by nesting a loop for each bounded
quantifier,~$\Sigma^{1,b}_0$-formulas have straightforward
model-checkers that run in exponential time and polynomial space.  We
define such model-checkers with care, so that~$\S^1_2(\alpha)$
verifies their time and space bounds as well as their
correctness. This correctness statement has to be formulated carefully
because, in general,~$\S^1_2(\alpha)$ cannot prove that a halting
computation of~$M^{\bar X}_\varphi$ on~$\bar x$ exists. Thus, proving
correctness means to show that {\em if} a computation exists, {\em
  then} it does what it is supposed to do. To prove this we use some
constructions that are similar in spirit to those in~\cite{BeckmannBuss:localImprove}.

\subsection{Preliminaries: explicit machines}\label{sec:explicit}

In short, a machine will be called \emph{explicit} if the
theory~$\S^1_2(\alpha)$ proves that its halting computations terminate
within a specified number of steps, using no more than a specified
amount of space in its work tapes, and by querying its oracles no
further than a specified position.

\paragraph{Machine model.}
Our model of computation is the multi-tape oracle Turing machine with
one-sided infinite tapes (i.e., cells indexed by~$\N$) and an alphabet
containing~$\{0,1\}$. The content of cell~$0$ is fixed to a fixed
symbol marking the end of the tape. At the start, the heads scan
cell~$1$.  The machines can be deterministic or
non-deterministic. Such a machine~$M$ has read-only input tapes, and
work tapes and oracle tapes. If there are~$k$ input tapes, then its
inputs are~$k$-tuples~$\bar x=(x_1,\ldots, x_k)$ of numbers with the
length-$|x_i|$ binary representation of~$x_i$ written on the~$i$-th
input tape. The length of the input is~$|\bar
x|=\max_{i}|x_i|$. If~$M$ does not have oracle tapes, then it is a
machine {\it without oracles}. If~$M$ has~$\ell \geq 1$ oracle tapes,
then we write~$M^{\bar X}$ for the machine with
oracles~$\bar X=(X_1,\ldots,X_\ell)$.  When the machine enters a
special query state, it moves to one out of~$2^\ell$ many special
answer states which codes the answers to the~$\ell$ queries written on
the~$\ell$ oracle tapes, i.e., whether the number written (in binary)
on the~$i$-th oracle tape belongs to~$X_i$ or not.

A {\em partial space-$s$ time-$t$ query-$q$ computation
  of~$M^{\bar X}$ on~$\bar x$} comprises~$t+1$ configurations, the
first one being the starting configuration, every other being a
successor of the previous one, and repeating halting configurations,
if any.  Being {\em space-$s$} means that the largest visited cell on
each tape is at most~$s$, and being {\em query-$q$} means that the
largest visited cell on each oracle tape is at most most~$|q|$; in other words,
all queries have length at most~$|q|$. Query lengths are bounded by~$|q|$
instead of~$q$ so that all queries are restricted to have polynomial length.

\paragraph{Coding computations.} \label{sec:premachines} Fix a
machine~$M$. Let~$s,t,q \in \N$ and consider a partial space-$s$,
time-$t$, query-$q$ computation of~$M$ on an unspecified input with
unspecified oracles.  A configuration is coded by
an~$(s{+}1)$-tuple~$(q,c_0,\ldots,c_{s-1})$ of numbers:~$q$ codes the
current state of the machine;~$c_i$ codes,~for each tape, a {\em
  position bit} indicating whether the index of the currently scanned
cell is at most~$i$ and, for each work or oracle tape, the content of
cell~$i$. We assume that these numbers are smaller than~$M$ (the
machine is (coded by) a number), so we get an $(s{+}1)\times (t{+}1)$
matrix of such numbers.  This matrix is coded by the set~$Y$ of
numbers bounded by~$\langle s,t,|M|\rangle$ that contains exactly
those~$\langle i,j,k\rangle$ such that~$i \leq s$,~$j \leq t$,~$k<|M|$
and the~$(i,j)$-entry of the matrix has~$k$-bit~$1$.

The details of the encoding are irrelevant. What is required is that
there is a $\PV(\alpha)$-function~$f^Y$ such that $f^Y(t,s,q,j)$
gives, about the $j$-th configuration, a number coding the state, the
positions of the heads, the contents of the cells they scan, and the
numbers that are written in binary in the first~$|q|$ cells of the
oracle tapes. In the encoding sketched above, to find the position of
a specific head, $f^Y$~uses binary search to find~$i\le s$ where its
position bit flips; computing the oracle queries is possible because
the oracle tapes contain numbers below~$2^{|q|}$.
Having~$f^Y$, it is straightforward to write a
natural $\Pi^b_1(\alpha)$-formula stating
\begin{equation}\label{eq:formcomp}
  \q{$Y$ is a partial space-$s$ time-$t$ query-$q$ computation of~$M^{\bar X}$ on~$\bar x$}.
\end{equation}
The free variables of this formula are $Y,\bar X,\bar
x,s,t,q$. Exceptionally, we shall also consider~$M$ on the formal
level, in which case $M$~is an additional free {\em number variable}.
All quantifiers in the $\Pi^b_1(\alpha)$-formula \eqref{eq:formcomp}
can be $\S^1_2(\alpha)$-provably bounded by $p(s,t,|q|,|M|,|\bar x|)$
for a polynomial~$p$, where $|\bar x|$ stands for $|x_1|,\ldots,|x_k|$.
If~$M$ is a machine without oracles, the formula
is $\S^1_2(\alpha)$-provably equivalent to the one with~$q=0$, and we
omit `query-$q$'. We also omit `space~$s$' if~$s=t$. Further,
replacing `partial' by `halting' or `accepting' or `rejecting' are
obvious modifications of the formula.

\paragraph{Explicit machines.}\label{sec:explicitmachine}

Binary search gives a~$\PV(\alpha)$-function~$\textit{time}^Y(s,t)$
such that, provably in~$\S^1_2(\alpha)$,
if~$Y$ is a halting time-$t$
space-$s$ query-$q$ computation of~$M^{\bar X}$ on~$\bar x$,
then~$\textit{time}^Y(s,t)$ is the minimal~$j\le t$ such that
the~$j$-th configuration in~$Y$ is halting. We make the further
assumption that~$M$ never writes blank (but can write a copy of this symbol), so
heads leave marks on visited cells. Binary search can then
compute the maximal non-blank cell in the~$j$-th configuration on any
tape.  By quantifier-free induction
for~$\PV(\alpha)$-formulas, $\S^1_2(\alpha)$~proves that this cell number is
non-decreasing for~$j=0,1,\ldots, t$. Hence, there is
a~$\PV(\alpha)$-function~$\textit{space}^Y(s,t)$ such that, provably
in~$\S^1_2(\alpha)$, if~$Y$ is a halting time-$t$ space-$s$ query-$q$
computation of~$M^{\bar X}$ on~$\bar x$, then~$\textit{space}^Y(s,t)$
is the maximal cell visited in~$Y$ on any tape.  Similarly, there is
a~$\PV(\alpha)$-function~$\textit{query}^Y(s,t)$ that computes the
maximal cell visited on a query tape.

\begin{definition}
  A machine~$M$ is {\em explicit} if there are
  terms~$s(\bar x),t(\bar x),q(\bar x)$ such that
$$
\begin{array}{lcl}
\S^1_2(\alpha)&\vdash&
  \q{$Y$ is a halting
  space-$s'$ time-$t'$
  query-$q'$ computation of~$M^{\bar X}$ on~$\bar x$} \to \\
  &&\quad
  \textit{time}^Y(s',t')\le t(\bar x)\wedge
  \textit{space}^Y(s',t')\le s(\bar
  x)\wedge\textit{query}^Y(s',t')\le |q(\bar x)|.
\end{array}
$$
We say that the terms~$s=s(\bar x),t=t(\bar x),q=q(\bar x)$ {\em
  witness} that~$M$ is explicit. Further, if~$r(\bar x)$ is another
term, then we say that~$r=r(\bar x)$ {\em witnesses} that~$M$ is an

\medskip

\noindent
\begin{tabular}{llll}
{\em explicit $\NEXP$-machine} & if it is non-deterministic & with $t=s=q=r$; \\
{\em explicit $\EXP$-machine} & if it is deterministic & with $t=s=q=r$; \\
{\em explicit $\PSPACE$-machine} & if it is deterministic & with $t=q=r$
and $s=|r|$; \\
{\em explicit $\NP$-machine} & if it is non-deterministic & with $t=s=|r|$ and $q=r$; \\
{\em explicit $\P$-machine} & if it is deterministic & with $t=s=|r|$ and $q=r$.
\end{tabular}
\end{definition}

Observe that, if $s,t,q$ witness that $M$ is explicit,
and $s'=s'(\bar x)$, $t'=t'(\bar x)$, $q'=q'(\bar x)$ are terms such
that $\S^1_2 \vdash s(\bar x){\le}s'(\bar x) \wedge t(\bar
x){\le}t'(\bar x)\wedge q(\bar x){\le}q'(\bar x)$, then
also $s',t',q'$ witness that $M$ is explicit. E.g., if $r$ witnesses
that $M$ is an explicit~$\P$-machine, then $r$ also witnesses that $M$
is an explicit~$\PSPACE$-machine.

Given an explicit machine~$M$, we omit `space-$s$ time-$t$ query-$q$'
in \eqref{eq:formcomp} and its variations with `halting', `accepting'
or `rejecting'.  E.g. for an explicit~$\EXP$-machine~$M$, say
witnessed by~$r=r(\bar x)$, we have a~$\Pi^b_1(\alpha)$-formula
\begin{equation}
\q{$Y$ is an accepting computation of $M^{\bar X}$ on $\bar x$}.
\label{eqn:thepib1formula}
\end{equation}
This means that~$Y$ is a space-$r(\bar x)$ time-$r(\bar x)$
query-$r(\bar x)$ computation of~$M^{\bar X}$ on~$\bar x$ that ends in
an accepting halting configuration, and all queries \q{$z\in X$?}
during the computation satisfy~$z<2^{|r(\bar x)|}$. In particular,
\begin{equation}
Y{\le}\langle r(\bar x),r(\bar x),|M|\rangle \label{eqn:referredtolater}
\end{equation}
provably in~$\S^1_2(\alpha)$. Furthermore, all quantifiers in
the~$\Pi^b_1(\alpha)$-formula~\eqref{eqn:thepib1formula} can
be~$\S^1_2(\alpha)$-provably bounded by~$p(r(\bar x),|M|,|\bar x|)$
for a polynomial~$p$, where~$|\bar x|$ stands
for~$|x_1|,\ldots,|x_k|$.

Thereby, our mode of speech follows~\cite[Definition~8.1.2]{Krajicek:book}
in that the time bound is used to determine the bound on the oracle
tapes.

\paragraph{Polynomial-time computations.}
It is well-known that $\S^1_2$ formalizes polynomial time computations. We shall use this in the form of the following lemma.

For an explicit~$\P$-machine~$M$, its computations~$Y$ can be coded by
numbers~$y$ and we get a~$\Pi^b_1(\alpha)$-formula
$$
\q{$y$ is a halting computation of $M^{\bar X}$ on $\bar x$}.
$$
Here,~$y$ is a number sort variable, and the free variables
are~$\bar X,\bar x,y$. If~$M$ has a special output tape, we agree that
the output of a computation is the number whose binary representation
is written in cells~$1,2,\ldots$ up to the first cell not containing a
bit.  We have a~$\PV(\alpha)$-function~$\out_M$ such that, provably
in~$\S^1_2(\alpha)$, if~$y$ is a halting computation of~$M^{\bar X}$
on~$\bar x$, then~$\out_M(y,j)$ is the content of cell~$j$ of the
output tape in the halting configuration in case this is a bit;
otherwise~$\out_M(y,j){=}2$. In particular,~$\S^1_2(\alpha)$
proves~$\out_M(y,j){\leq}2$,

\begin{lemma}\label{lem:pvmachine}
  For every~$\PV(\alpha)$-function~$f^{\bar X}(\bar x)$ there are an
  explicit~$\P$-machine~$M$ and
  a~$\PV(\alpha)$-function~$g^{\bar X}(\bar x)$ such
  that~$\S^1_2(\alpha)$ proves
\begin{eqnarray*}
&&\big(\q{$y$ is a halting computation of $M^{\bar X}$ on $\bar x$} \leftrightarrow y{=}g^{\bar X}(\bar x)\big) \wedge\\
&&\big(j{<}|f^{\bar X}(\bar x)|\to \out_M(g^{\bar X}(\bar x),j{+}1){=}\bit(f^{\bar X}(\bar x),j)\big) \wedge\\
&&\big(j{\ge}|f^{\bar X}(\bar x)|\to \out_M(g^{\bar X}(\bar x),j{+}1){=}2\big).
\end{eqnarray*}
\end{lemma}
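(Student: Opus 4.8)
The plan is to argue by induction on a Cobham-style definition of the $\PV(\alpha)$-function $f^{\bar X}$, that is, on a derivation of $f$ from the basic functions --- the $\BASIC$-symbols together with the characteristic functions of the oracles $\bar X$ --- by composition and limited recursion on notation. At each stage we produce, in the metatheory, a standard code $M$ of a deterministic oracle Turing machine, a term $r(\bar x)$ witnessing that $M$ is an explicit $\P$-machine, and a $\PV(\alpha)$-function $g^{\bar X}$, and we verify that $\S^1_2(\alpha)$ proves the three displayed conjuncts. The function $g^{\bar X}(\bar x)$ will always be taken to be ``run $M^{\bar X}$ on $\bar x$ for $r(\bar x)$ steps and output the number coding the resulting sequence of configurations''; since every configuration, and the length of the sequence, are of polynomial length, this is manifestly a $\PV(\alpha)$-function, and $\S^1_2(\alpha)$ proves the attendant polynomial length bound on $g^{\bar X}(\bar x)$, which in particular yields that $g^{\bar X}(\bar x)$ codes a genuine halting computation once $M$ has been shown to halt within $r(\bar x)$ steps.

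For the base case one checks by inspection that each basic function is computed by an obvious fixed machine $M$ in time, space and query-length polynomial in $|\bar x|$; for the oracle functions $M$ copies $y$ onto an oracle tape, queries, and writes the answer bit. The second and third displayed conjuncts --- that $\out_M$ reads off exactly the bits of $f^{\bar X}(\bar x)$ --- are then immediate from the description of $M$. For the first conjunct, the implication from right to left holds because $g^{\bar X}(\bar x)$ is built to be a halting computation; the implication from left to right follows from determinism: using the one-step successor function on configuration codes (a $\PV(\alpha)$-function) and quantifier-free $\PV(\alpha)$-induction on the step count, any $y$ satisfying ``$y$ is a halting computation of $M^{\bar X}$ on $\bar x$'' must agree with the iterated successor applied to the start configuration at every step, hence codes the same sequence as $g^{\bar X}(\bar x)$, so $y{=}g^{\bar X}(\bar x)$.

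For composition $f^{\bar X}(\bar x)=h^{\bar X}(g_1^{\bar X}(\bar x),\dots,g_m^{\bar X}(\bar x))$, let $M$ run the machines supplied by the induction hypothesis for $g_1,\dots,g_m$ in turn on $\bar x$, copy each output onto a fresh region of tape, and finally run the machine for $h$ on the tuple of outputs; the computation of $M$ on $\bar x$ is obtained from the subcomputations by a cut-and-paste that shifts time and space indices and substitutes the correct input strings, an operation that is again quantifier-free $\PV(\alpha)$-definable, so $g^{\bar X}$ is $\PV(\alpha)$ and is $\S^1_2(\alpha)$-provably equal to both the direct simulation and this cut-and-paste; its defining properties and the witnessing term $r$ are then assembled from the induction hypotheses using that $\S^1_2(\alpha)$ can reason about concatenated computations and that all intermediate values stay of $\S^1_2$-provably polynomial length. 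For limited recursion on notation with base function $g^{\bar X}$, step function $h^{\bar X}$ and length bound $k(\bar x,y)$, let $M$ unwind the recursion in $|y|$ rounds: it runs the machine for $g$ on $\bar x$, and in round $i$ runs the machine for $h$ on $(\bar x,\floor{y/2^{|y|-i}},v)$ where $v$ is the value carried over from round $i{-}1$, truncated to length $k$. Since there are only polynomially many rounds this is again a polynomial-time machine, $g^{\bar X}$ is $\PV(\alpha)$, and the correctness and boundedness statements are proved in $\S^1_2(\alpha)$ by quantifier-free $\PV(\alpha)$-induction (equivalently $\Sigma^b_1(\alpha)$-length induction) on the round number $i\le|y|$, matching the $i$-th carried value against $f^{\bar X}(\bar x,\floor{y/2^{|y|-i}})$.

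The delicate point is not the construction but the bookkeeping needed so that $\S^1_2(\alpha)$ proves the uniqueness clause and the explicitness bounds through the composition and recursion steps. Since $\S^1_2(\alpha)$ cannot in general prove that a halting computation of a submachine exists, the argument must be organized so that it is precisely the existence of $g^{\bar X}(\bar x)$ as a genuine halting computation that furnishes the instances of the induction hypothesis actually needed, while the left-to-right direction of the uniqueness clause is derived purely from determinism by quantifier-free induction, never presupposing termination. Making the shifts and paddings in the cut-and-paste of computations genuinely quantifier-free $\PV(\alpha)$-definable, and checking that the polynomial bookkeeping closes up at each step, is where all the (routine but tedious) work lies; this is essentially the two-sorted, oracle version of the standard formalization of Cobham's characterization in $\S^1_2$, carried out in the spirit of the constructions of~\cite{BeckmannBuss:localImprove}.
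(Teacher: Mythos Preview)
The paper does not actually prove this lemma: it is introduced with the sentence ``It is well-known that $\S^1_2$ formalizes polynomial time computations. We shall use this in the form of the following lemma,'' and no proof is given. So there is no ``paper's own proof'' to compare against; the authors treat the result as folklore.

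Your proposal is the standard way one would establish such a statement, and it is correct. The induction on a Cobham-style derivation of $f^{\bar X}$, with $g^{\bar X}(\bar x)$ defined as the $\PV(\alpha)$-function that simulates $M^{\bar X}$ on $\bar x$ for $|r(\bar x)|$ many steps and returns the coded transcript, is exactly the right organization. You have correctly isolated the only genuinely non-automatic point: the left-to-right direction of the uniqueness clause must be obtained from determinism by quantifier-free $\PV(\alpha)$-induction on the step counter, never presupposing that a halting computation exists, while the right-to-left direction and the output clauses are read off from the explicit construction of $g^{\bar X}$. Your remark that this is ``the two-sorted, oracle version of the standard formalization of Cobham's characterization in $\S^1_2$'' is precisely the status the paper accords the lemma, and your pointer to constructions in the spirit of~\cite{BeckmannBuss:localImprove} matches the paper's own citation of that work for related bookkeeping in the model-checker lemmas.
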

\noindent In the statement of the lemma,~$\bit(n,i)$ is
a~$\PV$-function computing the~$i$-bit of the binary representation
of~$n$, i.e.,~$\bit(n,i)=\floor{n/2^i} \ \mathrm{ mod }\ 2$ (in the
standard model). In particular, we have~$\bit(n,i)=0$ for~$i\ge |n|$.

\subsection{Deterministic model-checkers}

For every~$\Sigma^{1,b}_0$-formula~$\varphi = \varphi(\bar X,\bar{x})$
in the language~$\PV(\alpha)$ we define its \emph{bounding
term}~$\bt_{\varphi}(\bar x)$ as follows:

\begin{enumerate} \itemsep=0pt
\item $\bt_\varphi=0$ if~$\varphi$ is atomic,
\item $\bt_\varphi=\bt_\psi$ if $\varphi=\neg\psi$,
\item $\bt_\varphi=\bt_\psi+\bt_\theta$ if $\varphi=(\psi\wedge\theta)$,
\item $\bt_\varphi=\bt_\psi(\bar x,t(\bar x))+t(\bar x)$ if $\varphi=\exists y{\le}t(\bar x)\ \psi(\bar X,\bar x,y)$.
\end{enumerate}

\begin{lemma}\label{lem:mcdet}\
  For every $\Sigma^{1,b}_0$-formula $\varphi=\varphi(\bar X,\bar x)$
  there are an explicit $\PSPACE$-machine~$M_\varphi^{\bar X}$, a $\Sigma^{1,b}_0$-formula~$\Comp_\varphi(\bar X,\bar x,u)$,
  terms $r_\varphi(\bar x),s_\varphi(\bar x)$,
  and a
  polynomial $p_{\varphi}(m,\bar n)$, such that
\begin{enumerate}[label=(\alph*), ref=\alph*]\itemsep=0pt
\item $\S^1_2(\alpha)\vdash \q{$Y$ is an accepting computation of~$M_\varphi^{\bar X}$ on $\bar x$} \to \varphi(\bar X,\bar x)$,
\label{lem:mcdet.a}
\item $\S^1_2(\alpha)\vdash \q{$Y$ is a rejecting computation of~$M_\varphi^{\bar X}$ on $\bar x$} \to \neg\varphi(\bar X,\bar x)$,
\label{lem:mcdet.b}
\item $\S^1_2(\alpha)\vdash  \q{$\Comp_\varphi(\bar X,\bar x,\cdot)$ is a halting computation of~$M_\varphi^{\bar X}$ on $\bar x$}$,
\label{lem:mcdet.c}
\item $\S^1_2(\alpha)\vdash\ r_\varphi(\bar x) \leq
  p_{\varphi}(\bt_{\varphi}(\bar x),|\bar x|)$
\label{lem:mcdet.bound1},
\item $r_{\varphi}(\bar x),s_\varphi(\bar x)$
  witness~$M^{\bar X}_\varphi$ as explicit~$\EXP$- and $\PSPACE$-machines,
  respectively.
\label{lem:mcdet.bound2}
\end{enumerate}
In addition, if~$\varphi = \varphi(\bar X,\bar x)$ is
a~$\Pi^b_1(\alpha)$-formula, then there are a
term~$t_\varphi(\bar x)$ and a
quantifier-free~$\PV(\alpha)$-for\-mu\-la~$\Comp_\varphi(\bar{X},\bar{x},w,u)$
such that
\begin{enumerate}[resume*]\itemsep=0pt
\item $\T^1_2(\alpha)\vdash\
\exists w{\le}t_\varphi(\bar x)\
\q{$\Comp_\varphi(\bar X,\bar x,w, \cdot)$ is a halting
computation of~$M_\varphi^{\bar X}$ on $\bar x$}$,
\label{lem:mcdet.f}
\item $\S^1_2(\alpha)\vdash\ \varphi(\bar X,\bar x)\to $
\hbox to 2in{\q{$\Comp_\varphi(\bar X,\bar x,t_\varphi(\bar x), \cdot)$ is
an accepting computation of~$M_\varphi^{\bar X}$ on $\bar x$}}.
\label{lem:mcdet.g}
\end{enumerate}
\end{lemma}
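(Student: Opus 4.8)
The plan is to construct $M_\varphi$, the formula $\Comp_\varphi$, the terms $r_\varphi$ and $s_\varphi$ (and, in the $\Pi^b_1(\alpha)$ case, the term $t_\varphi$) and the polynomial $p_\varphi$ simultaneously by recursion on the build-up of $\varphi$, assembling at each step the requisite $\S^1_2(\alpha)$-derivations (and, for \eqref{lem:mcdet.f}, a $\T^1_2(\alpha)$-derivation) from those for the immediate subformulas. The machine $M_\varphi$ is the obvious recursive evaluator. For atomic $\varphi$ — an identity $F(\bar X,\bar x){=}G(\bar X,\bar x)$, an inequality, or a membership $t{\in}X_i$ of $\PV(\alpha)$-terms — it uses Lemma~\ref{lem:pvmachine} to evaluate the relevant $\PV(\alpha)$-terms and then compares them or queries the oracle, so $M_\varphi$ is an explicit $\P$-machine, and hence an explicit $\EXP$- and $\PSPACE$-machine with a common polynomial time/space witness; then $\bt_\varphi=0$, \eqref{lem:mcdet.bound1} holds with a fixed polynomial, and $\Comp_\varphi(\bar X,\bar x,u)$ extracts the $u$-th bit of the set-code of the computation $g^{\bar X}(\bar x)$ supplied by Lemma~\ref{lem:pvmachine}, whose $\S^1_2(\alpha)$-verified correctness gives \eqref{lem:mcdet.a}--\eqref{lem:mcdet.c}. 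For $\varphi=\neg\psi$ the machine flips the verdict of $M_\psi$; for $\varphi=\psi\wedge\theta$ it runs $M_\psi$ and, only if that accepts, re-initializes the work tapes and runs $M_\theta$. In both cases $r_\varphi,s_\varphi$ are suitable sums of the subformula terms — re-using work space is what keeps $M_\varphi$ polynomial-space, while exponential time is automatic since all quantifier bounds are terms and so below $2^{\mathrm{poly}(|\bar x|)}$ — the formula $\Comp_\varphi$ is obtained from $\Comp_\psi$ (and $\Comp_\theta$) by a shift and concatenation, and \eqref{lem:mcdet.bound1} follows from $p_\psi(\bt_\psi,|\bar x|)+p_\theta(\bt_\theta,|\bar x|)\le p_\varphi(\bt_\psi{+}\bt_\theta,|\bar x|)$ for a suitable $p_\varphi$.

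The case $\varphi=\exists y{\le}t(\bar x)\,\psi(\bar X,\bar x,y)$ carries the real bookkeeping. Here $M_\varphi$ runs the loop $y=0,1,\dots,t(\bar x)$, re-initializing the work tapes each round and running $M_\psi$ on $(\bar x,y)$, accepting at the first acceptance and rejecting if the loop completes; up to a $\mathrm{poly}(|\bar x|)$ overhead per round its runtime is at most $(t(\bar x){+}1)\cdot r_\psi(\bar x,t(\bar x))$, and since $t$ is a $\PV$-term we have $|t(\bar x)|\le\mathrm{poly}(|\bar x|)$, so the induction hypothesis \eqref{lem:mcdet.bound1} for $\psi$ bounds this by a polynomial in $t(\bar x)$, $\bt_\psi(\bar x,t(\bar x))$ and $|\bar x|$; as $ab\le(a{+}b)^2$, a polynomial in $\bt_\varphi(\bar x)=\bt_\psi(\bar x,t(\bar x))+t(\bar x)$ and $|\bar x|$ dominates it, which is \eqref{lem:mcdet.bound1} for $\varphi$ and fixes $r_\varphi$, while $|s_\varphi|$ grows only by the $\mathrm{poly}(|\bar x|)$ bits of the loop counter. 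The formula $\Comp_\varphi(\bar X,\bar x,u)$ decodes $u$ into a round $y$ and an offset, locates the round at which the loop halts by a bounded quantifier $\exists y{\le}t(\bar x)$ — which is exactly why, in general, $\Comp_\varphi$ is $\Sigma^{1,b}_0$ rather than quantifier-free — and returns the corresponding bit of $\Comp_\psi(\bar X,\bar x,y,\cdot)$ or of the glue; that this is a halting computation, \eqref{lem:mcdet.c}, is checked transition-by-transition, each transition being certified \emph{uniformly in $u$} from \eqref{lem:mcdet.c} for $\psi$ instantiated at the relevant $y$, so that no exponential-length induction is needed and $\S^1_2(\alpha)$ suffices; likewise \eqref{lem:mcdet.a} and \eqref{lem:mcdet.b} follow because an accepting (respectively rejecting) computation must halt in the loop with $M_\psi$ accepting some $(\bar x,y)$ (respectively rejecting all of them), and \eqref{lem:mcdet.a}, \eqref{lem:mcdet.b} for $\psi$ translate this to $\varphi$ (to $\neg\varphi$).

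For a $\Pi^b_1(\alpha)$-formula we may assume $\varphi=\forall y{\le}t(\bar x)\,\chi(\bar X,\bar x,y)$ with $\chi\in\Sigma^b_0(\alpha)$ (contracting the universal block into one quantifier, provably in $\S^1_2(\alpha)$). Then $M_\varphi$ has a single outer loop whose body is a polynomial-time test of $\chi$, so its whole computation is determined by one number $w$: the counter value at which the loop halts, namely $w=t(\bar x){+}1=:t_\varphi(\bar x)$ if every $\chi(\bar X,\bar x,y)$ holds, and otherwise the least $y$ with $\neg\chi(\bar X,\bar x,y)$. Hence the bits of the computation are a genuinely quantifier-free $\PV(\alpha)$-function of $(\bar X,\bar x,w,u)$, which we take as $\Comp_\varphi(\bar X,\bar x,w,u)$. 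For \eqref{lem:mcdet.g} we argue in $\S^1_2(\alpha)$: assuming $\varphi(\bar X,\bar x)$, every round is an accepting polynomial-time subcomputation since the relevant instance $\chi(\bar X,\bar x,y)$ holds by hypothesis, so the $u$-th transition of $\Comp_\varphi(\bar X,\bar x,t_\varphi(\bar x),\cdot)$ is again certified uniformly in $u$, with no induction. For \eqref{lem:mcdet.f} we argue in $\T^1_2(\alpha)$: by $\Sigma^b_1(\alpha)$-minimization (available in $\T^1_2(\alpha)$), either no $y{\le}t(\bar x)$ falsifies $\chi$ and we take $w:=t_\varphi(\bar x)$, or there is a least such $y$ and we take $w:=y$; in the second case the rounds below $w$ are accepting subcomputations precisely because $w$ is \emph{least}, and round $w$ rejects, so $\Comp_\varphi(\bar X,\bar x,w,\cdot)$ describes a halting computation.

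I expect the main obstacle to be exactly the tension between \eqref{lem:mcdet.c}, \eqref{lem:mcdet.g} on one side and \eqref{lem:mcdet.f} on the other: the recursive description $\Comp_\varphi$ and all of its transition checks must be arranged to be \emph{uniform in the step index}, so that everything except the search for the least counterexample in \eqref{lem:mcdet.f} stays inside $\S^1_2(\alpha)$, while $\Comp_\varphi$ must still be a legitimate $\PV(\alpha)$- (respectively $\Sigma^{1,b}_0$-) formula of the required shape. Getting the encodings of configurations and computations, the runtime estimates, and the bounding-term recursion to line up consistently across all of these cases — atomic, negation, conjunction, bounded existential, and the $\Pi^b_1(\alpha)$ refinement — is where the real work lies; the complexity-theoretic content is minimal, the difficulty being entirely one of careful formalization, much as in \cite{BeckmannBuss:localImprove}.
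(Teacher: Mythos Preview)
Your proposal is correct and follows essentially the same inductive construction as the paper, including the identification of $\T^1_2(\alpha)$-minimization as the sole reason (f) needs more than $\S^1_2(\alpha)$. The one implementation difference is that the paper has $M_\varphi$ run \emph{all} iterations of the $y$-loop regardless, recording acceptance in a flag bit rather than halting early; this makes the timing of the simulated $M_\psi$-steps independent of subcomputation outcomes, so that the extraction of the $y$-th subcomputation from an arbitrary given $Y$ (needed for (a) and (b), not just for $\Comp_\varphi$) is a fixed quantifier-free $\PV(\alpha)$-function, and (a) is then proved by binary-searching for the time the flag flips --- a point your sketch of (a),(b) glosses over but which your design can accommodate with the same quantifier-free induction.
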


\begin{proof}
  Call a $\Sigma^{1,b}_0$-formula~$\varphi = \varphi(\bar X,\bar x)$
  \emph{good} if it satisfies~(a)--(e). Observe that
  all $\Sigma^b_0(\alpha)$-formulas are good: they
  are $\S^1_2(\alpha)$-provably equivalent to formulas of the
  form $f^{\bar X}(\bar x){=}1$ for
  some $\PV(\alpha)$-function~$f^{\bar X}(\bar x)$, and we can choose
  a machine according to Lemma~\ref{lem:pvmachine}. Recall that an
  explicit $\P$-machine is also an explicit $\PSPACE$-machine and
  explicit $\EXP$-machine (in this case, all three witnessed by the same term).

We leave it to the reader to check that the good formulas are closed
under Boolean combinations.  We are then left to show that if
\begin{equation}\label{eq:phi}
\varphi(\bar X,\bar x)\ =\ \exists y{\le}t(\bar
x)\ \psi(\bar X,\bar x,y)
\end{equation}
for a term~$t(\bar x)$ and a good formula~$\psi = \psi(\bar X,\bar
x,y)$, then~$\varphi$ is good.  To lighten the notation, in the
following we drop any reference to the set-parameters~$\bar X$ in the
formulas, and to the oracles~$\bar X$ in machines, since they remain
fixed throughout the proof.

The machine~$M_\varphi$ runs a loop searching for a~$y$
in~$\{0,\ldots,t(\bar x)\}$ that satisfies~$\psi$.
On input~$\bar x$, it writes~$y:=0$ on a work tape and then loops: it
checks whether~$y\le t(\bar x)$ and, if so, it updates~$y:=y+1$ and
runs~$M_\psi$ on~$(\bar x, y)$; otherwise it halts. It accepts or
rejects according to a {\em flag} bit~$b$ stored in its state
space:~$b$ is initially set to~$0$, and it is set to 1 when and if
an~$M_\psi$-run accepts.

To prove~(a)--(e) we want a
quantifier-free~$\PV(\alpha)$-formula~$D(Y,\bar x,y,u)$ that extracts
the~$M_\psi$-computation simulated in the~$y$-loop. More precisely,
we want~$\S^1_2(\alpha)$ to prove that, if~$Y$ is a halting
computation of~$M_\varphi$ on~$\bar x$, then~$D(Y,\bar x,y,\cdot)$ is
a halting computation of~$M_\psi$ on~$(\bar x,y)$.  For this, we
design the details of~$M_\varphi$ in a way so that the~$j$-th step of
the computation of~$M_\psi$ on~$(\bar x,y)$ is simulated
by~$M_\varphi$ at a time easily computed from~$\bar x,y,j$.

\medskip

{\em Description of~$M_\varphi$.}
Set~$r(\bar x):=r_\psi(\bar x,t(\bar x))$ where~$r_\psi(\bar x,y)$ is
the term claimed to exist for~$\psi$. Note that~$\S^1_2(\alpha)$
proves that~$r_\psi(\bar x,y)\le r(\bar x)$ for~$y \le t(\bar
x)$. Additionally to properties~(a)--(e) for~$\psi$, we assume
inductively that~$\S^1_2(\alpha)$ proves that the halting
configuration of~$M_\psi$ on~$(\bar x,y)$ equals the initial
configuration except for the state, that is,~$M_\psi$ cleans all
worktapes and moves all heads back to cell~$1$ before it halts.

Our machine initially computes~$t = t(\bar x)$ and~$r=r(\bar x)$ and
two binary {\em clocks} initially set to~$0^{|t|}$ and~$0^{|r|}$. The
terms are evaluated using explicit~$\P$-machines according to
Lemma~\ref{lem:pvmachine}.  The initial settings of the clocks are
simply computed by scanning the binary representations of~$t$ and~$r$
that were computed at the start. This initial computation of terms,
and initialization of clocks, takes time exactly~$\textit{ini}(\bar
x)$ for some~$\PV$-function~$\textit{ini}(\bar
x)$. Further,~$\S^1_2(\alpha)$ proves~$\textit{ini}(\bar x) \leq
|t_\textit{i}(\bar x)|$ for a suitable term~$t_\textit{i}(\bar x)$.

The~$y$-loop is implemented as follows. First update~$y$, the value of
the first clock. To do this, sweep over the first clock, and then
back, in exactly~$(2|t|+2)$ steps, doing the following: copy~$y$
without leading~$0$'s to some tape, so this tape holds the
length-$|y|$ binary representation of~$y$ (as expected by~$M_\psi$);
increase the clock by~$1$ if~$y<t$, and reset it to~$0^{|t|}$
if~$y=t$; in the latter case store a bit signaling this; this signal
bit halts the computation (in the next~$y$-loop) instead of doing
the~$y$-update.
After this~$y$-update, simulate~$r$ steps of~$M_\psi$ on~$(\bar x,y)$
by an inner loop: in~$2|r|+2$ steps sweep twice over the second
clock. If its value was smaller that~$r$, then increase it by~$1$ and
simulate the next step of~$M_\psi$'s computation; this can mean
repeating the halting computation.  If its value was not smaller
than~$r$, then set the clock back to~$0^{|r|}$.
Thus, exactly~$2|r|+3$ steps are spent for one step of~$M_\psi$ and
one~$y$-loop takes exactly~$t_{\ell}(\bar x) := (r(\bar x)+1)
\cdot (2|r(\bar x)|+3)$ steps.

If the signal bit halts the computation, then our machine first cleans
all tapes and moves heads back to cell~$1$, before halting. We omit a
description of this final polynomial time computation. It can be
implemented to take exactly~$\mathit{fin}(\bar{x})$ steps for
a~$\PV$-function~$\mathit{fin}(\bar x)$, and~$\S^1_2$
proves~$\mathit{fin}(\bar{x}) \leq |t_{\textit{f}}(\bar x)|$ for a
suitable term~$t_{\textit{f}}(\bar x)$.

Thus~$M_\varphi$ runs in time
exactly~$\textit{ini}(\bar x)+(t(\bar x)+1)\cdot t_{\ell}(\bar
x)+\textit{fin}(\bar x)$.  It simulates~$r$ steps of~$M_\psi$
on~$(\bar x,y)$ at times
\begin{equation}
t(\bar x,y,j):=\mathit{ini}(\bar x)+y\cdot t_{\ell}(\bar
x)+(j+1)\cdot (2|r(\bar x)|+3) \label{eqn:times}
\end{equation}
for~$j<r(\bar x)$.

\medskip

\textit{Explicitness: proof of (d)--(e).}
 Let~$s_{\psi}(\bar x,y)$ be the term that
witnesses~$M_\psi$ as an explicit~$\PSPACE$-machine.  Let~$Y$ be a
halting computation of~$M_\varphi$ on~$\bar x$. There is
a~$\PV(\alpha)$-function that from~$\bar x$ computes (a number coding)
the initial computation of terms and clocks, and~$\S^1_2(\alpha)$
proves its halting configuration is as
described. Clearly,~$\S^1_2(\alpha)$ proves that the
first~$\textit{ini}(\bar x)$ steps of~$Y$ coincide with this
computation. In particular,~$\S^1_2(\alpha)$ proves that the clocks
computed in~$Y$ have the desired length. Similarly, there is
a~$\PV(\alpha)$-function that from~$\bar x,y,j$ computes (a number
coding) the space-$|s_{\psi}(\bar x,y)|$ configuration of~$M_\psi$ at
time~$t(\bar x,y,j)$ in~$Y$.

We prove, by quantifier-free induction, that the computation~$Y$
simulates the steps of~$M_\psi$ at times~$t(y,j) := t(\bar x,y,j)$
for~$y\leq t$ and~$j<r$.  Assume this holds for time~$t(y,j)$. We
verify it for time~$t(y,j+1)$ or time~$t(y+1,0)$ depending on
whether~$j<r$ or~$j=r$. Assume the former; the latter case is
similar. Compute the time-$(2|r|+3)$ computation (that sweeps twice
over the clock and simulates one more step of~$M_{\psi}$) starting at
the configuration at time~$t(y,j)$; then~$Y$ must coincide with this
computation between time~$t(y,j)$ and time~$t(y,j+1)$.  Hence,~$Y$
simulates a step of~$M_\psi$ at time~$t(y,j+1)$. Similarly,
quantifier-free induction proves that the~$M_\psi$-configurations at
the times~$t(y,j)$ in~$Y$ are successors of each others.  This yields a
quantifier-free~$\PV(\alpha)$-formula~$D(Y,\bar{x},y,u)$ as desired.

From the configuration at time~$\textit{ini}(\bar x)+(t+1)\cdot
t_{\ell}(\bar x)$ one can compute the final~$\textit{fin}(\bar
x)$ steps of the clean-up computation before~$M_\varphi$ halts, and
the last~$\textit{fin}(\bar x)$ steps of~$Y$ must coincide with
that. Hence,~$\S^1_2(\alpha)$ proves that the configuration of~$Y$ at
time~$\textit{ini}(\bar x)+(t+1)\cdot t_{\ell}+\textit{fin}(\bar
x)$ is halting. 	
Recalling that~$\textit{ini}(\bar x) \leq
|t_{\textit{i}}(\bar x)|$ and~$\textit{fin}(\bar x) \leq
|t_{\textit{f}}(\bar x)|$, this implies that the term
\begin{eqnarray*}
r_\varphi(\bar x) & := &
|t_{\textit{i}}(\bar x)|+(t(\bar x)+1)\cdot
t_{\ell}(\bar x)+ |t_{\textit{f}}(\bar x)|
\end{eqnarray*}
witnesses~$M_{\varphi}$ as an explicit $\NEXP$-machine. Choose a term $s_\varphi(\bar x)$ such that $\S^1_2$-provably $s_\varphi(\bar x)\ge  r_\varphi(\bar x)$ and
$$
|s_\varphi(\bar x)|\ge |t_{\textit{i}}(\bar x)|+(|t(\bar x)|+1)+(|r(\bar x)|+1)+
|s_{\psi}(\bar x,t(\bar x))|+|t_{\textit{f}}(\bar x)|.
$$
Then $s_\varphi(\bar x)$ witnesses~$M_{\varphi}$ as an explicit $\PSPACE$-machine. This shows (e).

For~(d), recall $t_{\ell}(\bar x) = (r(\bar x)+1)\cdot (2|r(\bar x)|+3)$ and
hence $r_{\varphi}(\bar x) \le p(r(\bar x),t(\bar x),|\bar x|)$ for a
suitable polynomial~$p$, provably in~$\S^1_2$. Recalling that $r(\bar x) = r_{\psi}(\bar x,t(\bar x))$, and that by~(d)
for~$\psi$ we
have $r_{\psi}(\bar x,y) \leq p_{\psi}(\bt_{\psi}(\bar x,y),|\bar
x|,|y|)$ provably in~$\S^1_2$,
from $\bt_{\varphi}(\bar x) = \bt_{\psi}(\bar x,t(\bar x)) + t(\bar
x)$ we get, also provably in~$\S^1_2$,
that $r_{\varphi}(\bar x) \leq p_{\varphi}(\bt_{\varphi}(\bar x),|\bar
x|)$ for a suitable polynomial~$p_\varphi$.
%
%

\medskip

\textit{Correctness: proof of (a)--(c).}
For~(a) argue in~$\S^1_2(\alpha)$ and
suppose~$Y$ is an accepting computation of $M_\varphi$ on~$\bar
x$. Being accepting means that the final state has flag $b=1$, while
the starting state has flag $b=0$.  By binary search we find a time
when~$b$ flips from $0$ to~$1$. This time determines $y_0\le t$ such
that the $y_0$~loop accepts. Then $Z:=D(Y,\bar{x},y_0,\cdot)$ is an
accepting computation of~$M_\psi$ on~$(\bar x, y_0)$.  Note that $Z$
exists by $\Delta^b_1(\alpha)$-comprehension. Then~(a) for~$\psi$
implies $\psi(\bar x,y_0)$ and thus~$\varphi(\bar x)$.

For~(b), argue in~$\S^1_2(\alpha)$ and suppose~$Y$ is a rejecting
computation of~$M_\varphi$ on~$\bar x$, so the flag is~$0$ in the
final configuration.  Let~$y\le t$. Then~$D(Y,\bar{x},y,\cdot)$ is a
rejecting computation of~$M_\psi$ on~$(\bar x, y)$: otherwise the~$y$
loop sets the flag to~$1$ and then binary search finds a time where
the flag flips from~$1$ to~$0$ in~$Y$ which contradicts the working
of~$M_\varphi$.  Then~(b) for~$\psi$ implies~$\neg\psi(\bar
x,y)$. As~$y$ was arbitrary, we get~$\neg\varphi(\bar x)$.

For~(c), it is easy to construct from~$\Comp_\psi$ a
formula~$\Comp_{\psi,0}$ such that~$\S^1_2(\alpha)$ proves that the
set~$\Comp_{\psi,0}(\bar x,y,\cdot)$ is the computation of
the~$y$-loop of~$M_\varphi$ on~$\bar x$ with flag~$0$ stored in the
state space. There is an analogous formula~$\Comp_{\psi,1}$ for
flag~$1$. These formulas just stretch the computation described
by~$\Comp_\psi$ and interleave it with the trivial updates of the
clocks.  The desired formula~$\Comp_\varphi(\bar x,u)$ `glues
together' these computations, plus the initial~$\textit{ini}(\bar x)$
steps of initialization, and the final~$\textit{fin}(\bar x)$ steps of
clean-up. We sketch the definition of~$\Comp_\varphi(\bar x,u)$:
from~$u$ we can compute~$y$ such that the truth value
of~$\Comp_\varphi(\bar x,u)$ is one of the bits in the code of the
computation of the~$y$-loop of~$M_\varphi$ on~$\bar x$, or one of the
bits in the code of the initial or final
computation. Then~$\Comp_\varphi(\bar x,u)$ states
\begin{equation}\label{eq:hat}
\begin{array}{lcl}
( \exists z{<}y\ \psi(\bar x,z)\wedge\Comp_{\psi,1}(\bar x,y,u)) \vee
  ( \neg\exists z{<}y\ \psi(\bar x,z)\wedge\Comp_{\psi,0}(\bar
  x,y,u)).
\end{array}
\end{equation}
%
%

\medskip

\textit{Proof of (f)--(g).} Assume~$\varphi$ is
a~$\Pi^b_1(\alpha)$-formula. We modify the given construction as
follows. Up to~$\S^1_2(\alpha)$-provable equivalence we have
\begin{equation*}
\varphi(\bar{X},\bar{x}) = \forall y{\le}t(\bar x)\ g^{\bar X}(\bar
x,y){=}1
\end{equation*}
where~$t(\bar x)$ is a term and~$g^{\bar X}(\bar x,y)$ is
a~$\PV(\alpha)$-function. As before, we drop any reference to the
set-parameters~$\bar X$, and to the oracles~$\bar X$, since they will
stay fixed throughout the proof. We define~$M_{\varphi}$ similarly as
before with the role of~$M_\psi$ played by a~$\P$-machine
checking~$g(\bar x,y){=}1$ according to Lemma~\ref{lem:pvmachine}.
The only difference is in the flag bit: it is initially set to~$1$,
and it is set to~$0$ when and if a~$y$-loop rejects
(meaning~$\neg g(\bar x,y){=}1$).

In this case we can choose~$r$ small, i.e., equal to~$|r'|$ for some
term~$r'=r'(\bar x)$, so there is
a $\PV(\alpha)$-function~$h(\bar x,y)$ that computes (a number that
codes) the computation of the $y$-loop of~$M_\varphi$.
Then $\Comp_\varphi(\bar x,w,u)$ `glues together' these computations
plus suitable initial and final computations.  The only problem is to
determine the flag~$b$ stored in the states of~$M_\varphi$. For this
we need to know the minimal~$w\le t$ such that $\neg g(\bar x,w){=}1$
holds, or take $w=t+1$ if $\varphi(\bar x)$ holds.  Such~$w$ exists
provably in~$\T^1_2(\alpha)$. This shows~(f)
for $t_\varphi(\bar x):=t(\bar x)+1$. For~(g),
assuming~$\varphi(\bar x)$ we can take $w=t+1$ directly since in this
case the flag bit is always~1 provably in~$\S^1_2(\alpha)$.
\end{proof}

\begin{remark}
The proof shows that the quantifier complexity of~$\Comp_\varphi$ is
close to that of~$\varphi$. If~$\varphi\in\Sigma^b_{0}(\alpha)$,
then~$\Comp_\varphi$ is a quantifier
free~$\PV(\alpha)$-formula. If~$\varphi\in\Sigma^b_{i}(\alpha)$
for~$i>0$, then~$\Comp_\varphi$ is a Boolean combination
of~$\Sigma^b_{i}(\alpha)$-formulas. Note that if the outer quantifier
in~\eqref{eq:phi} is sharply bounded, i.e.,~$t(\bar x)=|t'(\bar x)|$
for some term~$t'(\bar x)$, then the~$y$-bounded quantifiers
in~\eqref{eq:hat} are sharply bounded too.
\end{remark}

\subsection{Optimality remarks}

This subsection offers some remarks stating that
Lemma~\ref{lem:mcdet}.\ref{lem:mcdet.f} cannot be improved in certain
respects. This material is not needed in the following.

\begin{remark}
For our definition of~$M^{\bar X}_\varphi$, one cannot
replace~$\T^1_2(\alpha)$ by~$\S^1_2(\alpha)$ in
Lemma~\ref{lem:mcdet}.\ref{lem:mcdet.f} unless~$\S^1_2=\T^1_2$.
\end{remark}

\begin{proof}
Let~$\varphi(x)=\exists y{\le} x\ \psi(y,x)$ for~$\psi$ a
quantifier-free~$\PV$-formula, and assume \eqref{lem:mcdet.f} holds
for~$\S^1_2(\alpha)$ instead of~$\T^1_2(\alpha)$. We
show~$\S^1_2(\alpha)$ proves that, if there is~$y\le x$ such
that~$\psi(y,x)$, then there is a minimal such~$y$.  Argue
in~$\S^1_2(\alpha)$ and
suppose~$\varphi(x)$. By~$\Delta^b_1(\alpha)$-comprehension and
\eqref{lem:mcdet.f} there is a halting computation~$Y$ of~$M_\varphi$
on~$x$. By \eqref{lem:mcdet.b} it cannot be rejecting, so is
accepting.  Our proof of \eqref{lem:mcdet.a} gives~$\psi(y_0,x)$
for~$y_0\le x$ such that the flag~$b$ flips from 0 to 1 in loop~$y_0$.
We claim~$y_0$ is minimal. This is clear if~$y_0=0$. Otherwise we
had~$b=0$ after the loop on~$y_0-1$ (in~$Y$). For contradiction,
assume there is~$y_1<y_0$ with~$\psi(y_1,x)$. Then the loop on~$y_1$
would set~$b=1$. By quantifier-free induction we find a time
between~$y_1$ and~$y_0-1$ where~$b$ flips from 1 to 0. This
contradicts the working of~$M_\varphi$.
\end{proof}

Fix {\em any} machines~$M_\varphi$ satisfying the lemma. Call a
formula {\em true} if its universal closure is true in the standard
model.

\begin{remark}
In Lemma~\ref{lem:mcdet}.\ref{lem:mcdet.f} the auxiliary~$\exists w$
cannot be omitted. There is
a~$\Sigma^{b}_1(\alpha)$-formula~$\varphi(X, x)$ such that for all
quantifier-free~$\PV(\alpha)$-formulas~$C(X,x,u)$ the following is not
true:
\begin{equation*}
\q{$C(X,x,\cdot)$ is a halting computation of~$M_\varphi^{X}$ on $x$}.
\end{equation*}
\end{remark}

\begin{proof}
Otherwise every $\Sigma^b_1(\alpha)$-formula~$\varphi(X,x)$ is
equivalent to a quantifier-free $\PV(\alpha)$-formula~$D(X,x)$.
Let $A\subseteq\N$ be such that $\NP^A\not\subseteq\P^A$ and
choose $Q$ in $\NP^A\setminus\penalty10000\P^A$. Choose
a $\Sigma^b_1(\alpha)$-formula~$\varphi(X,x)$ defining $Q$
in~$(\N,A)$, the model where~$X$ is interpreted by~$A$. Note $D(X,x)$
defines in~$(\N,A)$ a problem
in~$\P^A$. Then $(\varphi(X,x)\leftrightarrow\penalty10000 D(X,x))$ fails
in $(\N,A)$ for some~$x$, and hence also in $(\N,A')$ for some
bounded $A'\subseteq A$ (Remark~\ref{rem:setbound}). Thus, this
equivalence is not true.
\end{proof}

\begin{remark}
Lemma~\ref{lem:mcdet}.\ref{lem:mcdet.f} does not extend to much more
complex formulas. There is a~$\Pi^{b}_2(\alpha)$-formula~$\varphi(X,
x)$ such that for all terms~$t$ and all
quantifier-free~$\PV(\alpha)$-formulas~$C$ the following is not true:
\begin{equation*}
\exists w{\le}t(x)\q{$C(X,x,w, \cdot)$ is a halting computation
  of~$M_\varphi^{X}$ on $x$}.
\end{equation*}
\end{remark}

\begin{proof}
Note this is a~$\Sigma^b_2(\alpha)$-formula, so for
every~$A\subseteq\N$ defines in~$(\N,A)$ a problem
in~$(\Sigma^\P_2)^A$.  Choose~$A$ such that~$(\Pi^\P_2)^A\neq
(\Sigma^\P_2)^A$ and argue similarly as before.
\end{proof}

\subsection{Non-deterministic model-checkers}

We shall also need model-checkers
for~$\hat\Sigma^{1,b}_1$-formulas. As a first step we prove a
technical lemma showing how to convert an explicit
oracle~$\PSPACE$-machine~$M^Y$ into an explicit~$\NEXP$-machine~$N$
that first guesses the oracle~$Y$ on a \emph{guess tape}, and then
simulates~$M^Y$. As usual, we need to show that~$\S^1_2(\alpha)$ is
able to prove that this construction does what is claimed.

\begin{lemma} \label{lem:mcnondet}\ For every
  explicit~$\PSPACE$-machine~$M^{Y,\bar X}$ that, as
  explicit~$\EXP$-machine, is witnessed by term~$r_M(\bar x)$, there
  are an explicit~$\NEXP$-machine~$N^{\bar X}$, a term~$r_N(\bar x)$,
  a polynomial~$p_N(m,\bar n)$, and
  quantifier-free~$\PV(\alpha)$-formulas~$F,G,H$ such that
\begin{enumerate}[label=(\alph*), ref=\alph*]\itemsep=0pt
\item $\begin{array}[t]{lcl}
\S^1_2(\alpha)&\vdash&
\q{$Z$ is an accepting  computation of~$M^{Y,\bar X}$ on $\bar x$} \to \\
&&\q{$F(Z,Y,\bar X,\bar x,\cdot)$ is an accepting
computation of~$N^{\bar X}$ on $\bar x$}.
\end{array}$  \label{lem:mcnondet.a}
\item $\begin{array}[t]{lcl}
\S^1_2(\alpha)&\vdash&
\q{$Z$ is an accepting  computation of~$N^{\bar X}$ on $\bar x$} \to \\
&&\q{$G(Z, \bar X,\bar x,\cdot)$ is an accepting
computation of~$M^{H(Z,\bar X,\bar x,\cdot), \bar X}$ on $\bar x$}
\end{array}
$  \label{lem:mcnondet.b}
\item $\S^1_2(\alpha)\vdash\ r_N(\bar x)\le p_N(r_M(\bar x),|\bar x|)$,
\label{lem:mcnondet.c}
\item The term $r_N(\bar x)$ witnesses $N^{\bar X}$ as explicit $\NEXP$-machine.
\label{lem:mcnondet.d}
\end{enumerate}
\end{lemma}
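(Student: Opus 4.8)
The plan is to let $N^{\bar X}$ be the machine that, on input $\bar x$, first computes $r:=r_M(\bar x)$ (using an explicit $\P$-machine as provided by Lemma~\ref{lem:pvmachine}) and sets up an $|r|$-bit clock, then non-deterministically writes $2^{|r|}$ bits onto a designated work tape, its \emph{guess tape}, which we read as the characteristic sequence of a set $H\le 2^{|r|}$, and finally simulates $M^{Y,\bar X}$ on $\bar x$ step by step, answering every query ``$z\in Y$'' by scanning the guess tape to cell $z$, and relaying every query to an $X_i$ to $N$'s own $i$-th oracle tape (which simply holds whatever $M$'s $i$-th oracle tape would hold). Following the design idea of the proof of Lemma~\ref{lem:mcdet}, the simulation is made \emph{clean}: initializing the term $r$ and the clock takes exactly $\textit{ini}_0(\bar x)$ steps for a $\PV$-function $\textit{ini}_0$ bounded $\S^1_2$-provably by $|t_{\textit i}(\bar x)|$; writing the guess tape takes exactly $2^{|r|}$ steps; each of the first $r$ steps of $M$ is simulated in exactly $c_N(\bar x)$ steps of $N$, where $c_N(\bar x)$ is a term large enough for one sweep of the clock, one full sweep of the guess tape (so a query step costs the same regardless of which cell is read, and a non-query step does a harmless dummy sweep), and one step of $M$ including the relayed $\bar X$-query and the recombination of the $Y$-answer bit with the $\bar X$-answers into $M$'s next state; and a final phase of length exactly $\textit{fin}(\bar x)\le|t_{\textit f}(\bar x)|$ inspects the simulated $M$-configuration and makes $N$ halt-accept iff it is accepting.

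Put $r_N(\bar x):=|t_{\textit i}(\bar x)|+2^{|r_M(\bar x)|}+r_M(\bar x)\cdot c_N(\bar x)+|t_{\textit f}(\bar x)|$. Since $2^{|r_M(\bar x)|}\le 2r_M(\bar x)$ and $c_N(\bar x)$ is polynomially bounded in $r_M(\bar x)$ and $|\bar x|$, all provably in $\S^1_2$, this is $\le p_N(r_M(\bar x),|\bar x|)$ for a suitable polynomial $p_N$, which is (c). For (d) (with $t=s=q=r_N$) one checks that $r_N$ also bounds space — the guess tape has $\le 2^{|r_M|}\le r_N$ cells, the simulated work tapes of $M$ stay within its space bound $|r_M|\le r_N$ — and that $N$'s queries, being exactly $M$'s queries to $\bar X$, have length $\le|r_M|\le|r_N|$; the remaining content of (d), that halting computations of $N$ do not exceed these bounds, is proved in $\S^1_2(\alpha)$ by quantifier-free $\PV(\alpha)$-induction on a step counter, showing that any halting computation $Y'$ of $N$ on $\bar x$ agrees step for step with the canonical schedule above. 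The point to be careful about is to phrase this invariant purely in terms of the \emph{states and head positions} recorded in $Y'$ and never to assert a value of a guessed bit: the guessing phase has length exactly $2^{|r_M|}$ and each simulation block length exactly $c_N$ irrespective of the guesses, so the schedule, hence $\textit{time}^{Y'}=r_N$ and the space and query bounds, follow.

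For (a), given a computation $Z$ of $M^{Y,\bar X}$ on $\bar x$, let $F(Z,Y,\bar X,\bar x,u)$ be the quantifier-free $\PV(\alpha)$-function that outputs the $u$-th bit of the concatenation of (i) the canonical initialization computed from $\bar x$, (ii) the characteristic string of $Y$ on $\{0,\dots,2^{|r_M|}-1\}$, placed on the guess tape, (iii) for each $j<r_M$ the translation of the $j$-th configuration of $Z$ into $N$'s $j$-th simulation block, with the $\bar X$-answers copied from $Z$, and (iv) the final accepting phase. Using $\Delta^b_1(\alpha)$-comprehension to view $F(Z,Y,\bar X,\bar x,\cdot)$ as a set, $\S^1_2(\alpha)$ proves by quantifier-free induction that it is a legal accepting computation of $N^{\bar X}$ on $\bar x$; the only nontrivial step is that in a query step of $Z$ the $Y$-answer — a query of length $\le|r_M|$, hence a number $<2^{|r_M|}$ — agrees with the bit $N$ reads off the guess tape, which holds because by (ii) that tape is the characteristic string of $Y$. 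For (b), given a computation $Z$ of $N^{\bar X}$ on $\bar x$, let $H(Z,\bar X,\bar x,u)$ be the $u$-th bit of the guess tape in $Z$, and let $G(Z,\bar X,\bar x,u)$ extract from the $j$-th simulation block of $Z$ the $j$-th configuration of the simulated $M$-run; by quantifier-free induction $\S^1_2(\alpha)$ proves that $G(Z,\bar X,\bar x,\cdot)$ is a run of $M$ on $\bar x$ whose $Y$-oracle behaves according to $H(Z,\bar X,\bar x,\cdot)$ and whose $\bar X$-oracles behave according to $\bar X$, that it is halting (the simulation runs for $r_M$ blocks), and that it is accepting (since $Z$ is accepting, $N$'s final phase was reached and accepted, which by construction happens only when the simulated configuration is accepting). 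This is exactly the claimed statement about $G$ and $H$, so (a)--(d) all hold.

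The main obstacle I expect is engineering a \emph{single} simulated step of $M$: forcing it to have a fixed length $c_N(\bar x)$ independent of the guessed oracle and of $M$'s current state — hence the full guess-tape sweep and the dummy sweep on non-query steps — and getting the query mechanism right, i.e.\ reading the $Y$-answer bit from the guess tape while simultaneously performing a genuine $\bar X$-query on $N$'s own oracle tapes and merging the answers so as to reproduce $M$'s query/answer transition. On the proof-theoretic side, the corresponding delicate point, already visible in the argument for (d), is that every inductive invariant must be a quantifier-free $\PV(\alpha)$-formula that survives the non-determinism of $N$, constraining states, head positions and the deterministic part of the tape contents but never the bits $N$ guessed. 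Neither step requires an idea beyond those already used in the proof of Lemma~\ref{lem:mcdet}.
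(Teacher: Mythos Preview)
Your proposal is correct and follows essentially the same approach as the paper: guess $Y$ on a dedicated tape, simulate $M^{Y,\bar X}$ in fixed-length blocks with full guess-tape sweeps, and verify everything by quantifier-free $\PV(\alpha)$-induction, with $F$ assembling the pieces and $G,H$ projecting them out.

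One point of emphasis worth noting: the paper makes explicit that the reason the induction goes through is that $M$ is an explicit $\PSPACE$-machine, so its configurations are coded by \emph{numbers}; hence a $\PV(\alpha)$-function with $Z$ as oracle can extract the simulated $M$-configuration at block~$j$ and compute from it the entire (exponentially long) next block. Your schedule-only invariant for~(d) cleverly sidesteps this by tracking only states, clocks and head positions, which indeed suffices for the time/space/query bounds. But for~(a) and~(b) you do need to track the simulated $M$-configuration across blocks, and there the $\PSPACE$ hypothesis is what makes ``extract the $j$-th configuration'' a $\PV(\alpha)$-function; you use this implicitly when you write $G$, and it would be worth saying so, since it is precisely where the hypothesis on~$M$ is consumed.
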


\begin{proof}
  Set~$r=r_M(\bar x)$. By assumption, the triple of
  terms~$r_M(\bar x),r_M(\bar x),r_M(\bar x)$ witnesses
  that~$M^{Y,\bar X}$ is explicit. In particular, every
  query~$\q{$z\in Y$?}$ made by~$M^{Y,\bar X}$ on~$\bar x$
  satisfies~$|z| \leq |r|$ and hence~$z <2^{|r|}$. The
  machine~$N^{\bar X}$ on~$\bar x$ guesses a binary string~$Y$ of
  length~$2^{|r|}$ on a {\em guess tape} and then
  simulates~$M^{Y,\bar X}$ on~$\bar x$ as follows: an oracle
  query~$\q{$z\in Y$?}$ of~$M^{Y,\bar X}$ is answered reading
  cell~$z{+}1$ on the guess tape. As in the proof of
  Lemma~\ref{lem:mcdet}, to prove~(a)--(d) we need to design the
  details of~$N$ in a way so that the~$j$-th step of the computation
  of~$M$ is simulated by~$N$ at a time easily computed
  from~$\bar x,j$. To reduce notation, in the following we drop any
  reference to the oracles~$\bar X$ as they will remain fixed
  throughout the proof.

\medskip

{\em Description of~$N$.}  The machine~$N$ on~$\bar x$ first
computes~$r$ and two binary clocks initialized to~$0^{|r|+1}$
and~$0^{|r|}$, respectively.  To write~$Y$ of length~$2^{|r|}$ on the
guess tape the machine checks whether the first clock equals~$2^{|r|}$
and, if not, increases it by one and moves one cell to the right on
the guess tape. This is done in exactly~$2|r|+5$ steps.  Once the
clock equals~$2^{|r|}$, the machine moves back to cell~$1$ on the
guess tape and non-deterministically writes~$0$ or~$1$ in each step,
except in the step that finally rebounds on cell~$0$ to cell~$1$. The
terms are computed with explicit~$\P$-machines according to
Lemma~\ref{lem:pvmachine}.  The initial computation of terms, and
initialization of clocks, takes time exactly~$\mathit{ini}(\bar x)$
for some~$\PV$-function~$\mathit{ini}(\bar x)$.  Therefore, the guess
of~$Y$ takes
exactly~$\mathit{guess}(\bar x) := \mathit{ini}(\bar x) +
2^{|r|}\cdot(2|r|+5)+2^{|r|}+1$ steps. Moreover,~$\S^1_2$
proves~$\mathit{guess}(\bar x) \leq t_{\textit{g}}(\bar x)$,
where
\begin{equation*}
t_{\textit{g}}(\bar x) :=
|t_{\textit{i}}(\bar x)|+ 2^{|r_M(\bar x)|}\cdot(2|r_M(\bar
x)|+5)+2^{|r_M(\bar x)|}+1,
\end{equation*}
for a suitable term~$t_{\textit{i}}(\bar x)$ such that~$\S^1_2$
proves~$\mathit{ini}(\bar x) \leq |t_{\textit{i}}(\bar x)|$.

The machine simulates~$r$ steps of~$M^Y$ using the second
clock. Comparing this clock with~$r$ and updating it takes~$2|r|+2$
steps.  If the value of the clock is less than~$r$, then a step
of~$M^Y$ is simulated by reading the~$(z{+}1)$-cell of the guess tape
where~$z$ is the content of~$M^Y$'s oracle tape for~$Y$. This is done
as follows. The machine moves forward over the guess tape, and rewinds
back to cell~$1$. With each step forward it increases the first clock
by one and checks whether it equals~$z$ or~$2^{|r|}$. If and when the
clock equals~$z$, it stores the {\em oracle bit} read on the guess
tape in its state space. Otherwise, i.e.,~$z{\geq}2^{|r|}$, the
machine stores oracle bit~$0$. When the clock equals~$2^{|r|}$, the
scan of the guess tape ends, and the rewinding to cell~$1$ starts (in
the next step). Doing this takes time
exactly~$2^{|r|}\cdot(2|r|+4)+2^{|r|}+1$ and the oracle bit is stored
at time~$\min\{z,2^{|r|}\}\cdot (2|r|+4)$.  Thus, when the value of
the second clock is less than~$r$, one step of~$M^Y$ is simulated in
exactly
\begin{equation*}
t_{\textit{s}}(\bar x) := (2|r_M(\bar x)|+2)+2^{|r_M(\bar x)|}
\cdot(2|r_M(\bar x)|+4)+2^{|r_M(\bar x)|}+2
\end{equation*}
steps. Otherwise, the simulation halts in an accepting or rejecting
state according to~$M^Y$'s state. In total, the machine runs for
exactly~$\mathit{guess}(\bar x)+r\cdot t_{\textit{s}}(\bar
x)+(2|r|+2)$ steps. The steps of~$M^Y$
on~$\bar x$ are simulated at times
\begin{equation*}
t(\bar x,j) := \mathit{guess}(\bar x)+(j+1)\cdot t_{\textit{s}}(\bar x)
\end{equation*}
for~$j<r_M(x)$. The runtime is bounded by the
term
\begin{equation*}
r_N(\bar x) := t_{\textit{g}}(\bar x)+r_M(\bar x)\cdot t_{\textit{s}}(\bar x)
+ (2|r_M(\bar x)|+2)
\end{equation*}

\medskip

{\em Explicitness.}  We argue that this bound on the runtime of~$N$ can
be verified in~$\S^1_2(\alpha)$, given a halting computation~$Z$
of~$N$ on~$\bar x$. Note that, unlike the simulation in
Lemma~\ref{lem:mcdet}, a single step is simulated in possibly
exponential time~$t_{\textit{s}}(\bar x)$. However, this possibly
exponential time computation is simply described: Since $M^Y$ is an
explicit $\PSPACE$-machine, its configurations can be coded by
numbers. Now, given a number coding the configuration of~$M^Y$
within~$Z$ at time $t(j) := t(\bar x,j)$, say with $Y$-oracle
query~$z$, and given a time $i<t_{\textit{s}}(\bar x)$, we can compute
the configuration of the clocks and the state of the (to-be-)stored
oracle-bit at time $t(j) + i$. Now, quantifier-free induction suffices
to prove that the oracle bit is stored at the desired time and equals
the content of the $(z{+}1)$-cell of the guess
tape (or 0 if $z \geq 2^{|r|}$). Quantifier-free induction proves that
the configurations of~$M^Y$ within~$Z$ at times~$t(j)$ for $j<r$ are
successors of those preceding them.  In particular, $\S^1_2(\alpha)$
proves that the configuration at time~$r_N(\bar x)$ is halting. Space
and query bounds can be similarly verified, so $N$ is explicit and
witnessed by~$r_N(\bar x)$.

\medskip

{\em Proof of (a)--(d).}  For~(a), the quantifier-free formula~$F$
concatenates an initial polynomial-time computation of the terms and
clocks, a guess of~$Y$, and a simulation of~$Z$. Each configuration of
the guess of~$Y$ is computable in polynomial time. The simulation
of~$Z$ stretches each step of~$M^Y$ to a time~$t_\textit{s}(\bar x)$
computation, each configuration of which is easily computed from~$Y$
and~$Z$ in polynomial time. Quantifier-free induction proves that
a~$Y$-query~$z$ in~$Z$ is answered according to the bit in
the~$(z{+}1)$-cell on the guess tape.

For~(b), the quantifier-free formula~$H$ extracts the guess~$Y$
from~$Z$ and the quantifier-free formula~$G$ extracts the simulated
computation at the times~$t(\bar x,j)$ for~$j<r_M(\bar x)$.

For~(c) and~(d), we already argued that the term~$r_N(\bar x)$
witnesses~$N$ as an explicit~$\NEXP$-machine. The claim
that~$r_N(\bar x) \leq p_N(r_M(\bar x),|\bar x|)$ holds for a suitable
polynomial~$p_N$ follows by inspection, and~$\S^1_2(\alpha)$ proves
it.
\end{proof}

Now we can state the lemma that proves that
every $\hat\Sigma^{1,b}_1$-formula has a formally verified
model-checker. In its statement, the bounding term~$\bt_\psi(\bar x)$
of a $\hat\Sigma^{1,b}_1$-formula $\psi = \psi(\bar X,\bar x)$ as in
Equation~\eqref{eqn:generichatSigma} is defined
to be the bounding term~$\bt_{\varphi}(\bar x)$ of its
maximal $\Sigma^{1,b}_0$
subformula $\varphi = \varphi(Y,\bar X,\bar x)$.

\begin{lemma}\label{lem:mc} For
  every~$\hat\Sigma^{1,b}_1$-formula~$\psi = \psi(\bar X,\bar x)$,
  there exists an explicit~$\NEXP$-machine~$N^{\bar X}_\psi$, a
  term~$r_\psi(\bar x)$, and a polynomial~$p_\psi(m,\bar n)$, such
  that
\begin{enumerate}[label=(\alph*), ref=\alph*]\itemsep=0pt
\item
  $\V^0_2 \vdash\ \psi(\bar X,\bar x) \to \exists_2 Y
  \q{$Y$ is an accepting computation of $N^{\bar X}_\psi$ on $\bar
    x$}.$  \label{lem:mc.a}
\item
  $\S^1_2(\alpha)\vdash\ \neg\psi(\bar X,\bar x) \to \neg\exists_2 Y
  \q{$Y$ is an accepting computation of $N^{\bar X}_\psi$ on $\bar
    x$}.$
\label{lem:mc.b}
\item $\S^1_2(\alpha)\vdash\ r_\psi(\bar x) \leq
  p_{\psi}(\bt_{\psi}(\bar x),|\bar x|)$,
\label{lem:mc.c}
\item the term~$r_\psi(\bar x)$
  witnesses~$N^{\bar X}_\psi$ as
  explicit~$\NEXP$-machine.
\label{lem:mc.d}
\end{enumerate}
Furthermore, if the maximal~$\Sigma^{1,b}_0$-subformula of~$\psi$ is
a~$\Pi^b_1(\alpha)$-formula, then
\begin{enumerate}[resume*]\itemsep=0pt
\item
  $\S^1_2(\alpha) \vdash \psi(\bar X,\bar x) \leftrightarrow \exists_2
  Y \q{$Y$ is an accepting computation of $N^{\bar X}_\psi$ on $\bar
    x$}.$  \label{lem:mc.e}
\end{enumerate}
\end{lemma}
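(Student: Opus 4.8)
The plan is to obtain $N^{\bar X}_\psi$ by composing the two preceding lemmas. Write the given $\hat\Sigma^{1,b}_1$-formula as $\psi(\bar X,\bar x)=\exists_2 Y\ \varphi(Y,\bar X,\bar x)$ with $\varphi$ its maximal $\Sigma^{1,b}_0$-subformula. First I apply Lemma~\ref{lem:mcdet} to $\varphi$, reading the pair $(Y,\bar X)$ as the tuple of set-sort parameters (so its machine gets one oracle tape for $Y$ and one for each component of $\bar X$). This yields an explicit $\PSPACE$-machine $M^{Y,\bar X}_\varphi$, a $\Sigma^{1,b}_0$-formula $\Comp_\varphi(Y,\bar X,\bar x,u)$, and terms $r_\varphi,s_\varphi$, where $r_\varphi$ also witnesses $M^{Y,\bar X}_\varphi$ as an explicit $\EXP$-machine. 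I then feed $M^{Y,\bar X}_\varphi$, with $\EXP$-witness $r_\varphi$, into Lemma~\ref{lem:mcnondet}, obtaining the explicit $\NEXP$-machine $N^{\bar X}$ that guesses $Y$ on its guess tape and simulates $M^{Y,\bar X}_\varphi$, together with the term $r_N$, the polynomial $p_N$, and the quantifier-free $\PV(\alpha)$-formulas $F,G,H$. I set $N^{\bar X}_\psi:=N^{\bar X}$ and $r_\psi:=r_N$. Item~(d) is then exactly Lemma~\ref{lem:mcnondet}(d). Item~(c) follows by composing the bound $r_N(\bar x)\le p_N(r_\varphi(\bar x),|\bar x|)$ of Lemma~\ref{lem:mcnondet}(c) with the bound $r_\varphi(\bar x)\le p_\varphi(\bt_\varphi(\bar x),|\bar x|)$ of Lemma~\ref{lem:mcdet}(d): since $\bt_\psi=\bt_\varphi$ by definition and the polynomials may be taken monotone, $p_\psi(m,\bar n):=p_N(p_\varphi(m,\bar n),\bar n)$ works.

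For item~(a) I argue in $\V^0_2$. Assume $\psi(\bar X,\bar x)$ and fix a set $Y$ with $\varphi(Y,\bar X,\bar x)$. Using $\Sigma^{1,b}_0$-comprehension — this is the step that needs $\V^0_2$ rather than just $\S^1_2(\alpha)$, because $\Comp_\varphi$ is merely $\Sigma^{1,b}_0$ — I materialize the (bounded) set $W:=\Comp_\varphi(Y,\bar X,\bar x,\cdot)$, which by Lemma~\ref{lem:mcdet}(c) is a halting computation of $M^{Y,\bar X}_\varphi$ on $\bar x$. Since $M^{Y,\bar X}_\varphi$ is a model-checker, $\S^1_2(\alpha)$ proves that every halting computation of it is accepting or rejecting; if $W$ were rejecting, Lemma~\ref{lem:mcdet}(b) would give $\neg\varphi(Y,\bar X,\bar x)$, contradicting our assumption, so $W$ is accepting. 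By Lemma~\ref{lem:mcnondet}(a), $F(W,Y,\bar X,\bar x,\cdot)$ is then an accepting computation of $N^{\bar X}_\psi$ on $\bar x$; as $F$ is quantifier-free $\PV(\alpha)$, this set exists (even by $\Delta^b_1(\alpha)$-comprehension) and witnesses the existential conclusion of~(a).

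Item~(b) is the converse direction and is proved in $\S^1_2(\alpha)$ by contraposition: given an accepting computation $Z$ of $N^{\bar X}_\psi$ on $\bar x$, Lemma~\ref{lem:mcnondet}(b) says that $G(Z,\bar X,\bar x,\cdot)$ is an accepting computation of $M^{Y',\bar X}_\varphi$ on $\bar x$, where $Y':=H(Z,\bar X,\bar x,\cdot)$; both $Y'$ and that computation are cut out by quantifier-free $\PV(\alpha)$-formulas, so they exist by $\Delta^b_1(\alpha)$-comprehension, and Lemma~\ref{lem:mcdet}(a) then yields $\varphi(Y',\bar X,\bar x)$, hence $\psi(\bar X,\bar x)$. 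For item~(e), the direction ``$\Leftarrow$'' is just~(b), which already holds in $\S^1_2(\alpha)$; for ``$\Rightarrow$'' I rerun the argument of~(a), but now $\varphi(Y,\bar X,\bar x)$ is $\Pi^b_1(\alpha)$, so the ``furthermore'' clause of Lemma~\ref{lem:mcdet} provides a quantifier-free $\PV(\alpha)$-formula $\Comp_\varphi(\bar X,\bar x,w,u)$ with the property (Lemma~\ref{lem:mcdet}(g)) that, assuming $\varphi(Y,\bar X,\bar x)$, the slice $\Comp_\varphi(Y,\bar X,\bar x,t_\varphi(\bar x),\cdot)$ is an accepting computation of $M^{Y,\bar X}_\varphi$ on $\bar x$. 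With $w$ fixed to the term $t_\varphi(\bar x)$ this slice is quantifier-free $\PV(\alpha)$, hence materializable in $\S^1_2(\alpha)$ by $\Delta^b_1(\alpha)$-comprehension, and pushing it through $F$ as in~(a) produces the witness — entirely within $\S^1_2(\alpha)$.

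The substantive work has already been expended in Lemmas~\ref{lem:mcdet} and~\ref{lem:mcnondet}; what remains is bookkeeping, and the one point needing genuine care is tracking which theory proves which clause. The asymmetry is exactly that clause~(a) must turn an arbitrary $\Sigma^{1,b}_0$-described computation into a set — which needs $\Sigma^{1,b}_0$-comprehension, i.e.\ $\V^0_2$ — whereas in clauses~(b) and~(e) everything that must become a set ($Y'$, the extracted computations, and in the $\Pi^b_1(\alpha)$ case the fixed slice of $\Comp_\varphi$) is cut out by a quantifier-free $\PV(\alpha)$-formula and so is available already in $\S^1_2(\alpha)$. A secondary point is the routine observation that a halting computation of the model-checker $M^{Y,\bar X}_\varphi$ is provably accepting or rejecting, which is what lets clause~(a) upgrade Lemma~\ref{lem:mcdet}(b),(c) to the statement that $\Comp_\varphi(Y,\bar X,\bar x,\cdot)$ is an \emph{accepting} computation.
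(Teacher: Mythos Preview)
Your proposal is correct and follows essentially the same approach as the paper: compose Lemma~\ref{lem:mcdet} (applied to $\varphi$ with set parameters $(Y,\bar X)$) with Lemma~\ref{lem:mcnondet}, use $\Sigma^{1,b}_0$-comprehension in $\V^0_2$ to materialize $\Comp_\varphi$ for~(a), use $\Delta^b_1(\alpha)$-comprehension in $\S^1_2(\alpha)$ for~(b), and invoke Lemma~\ref{lem:mcdet}(g) to avoid the comprehension bottleneck in~(e). Your closing paragraph on the $\V^0_2$/$\S^1_2(\alpha)$ asymmetry and the accepting-or-rejecting observation makes explicit exactly the points the paper leaves implicit.
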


\begin{proof}
  Let~$\psi(\bar X,\bar x) = \exists_2 Y\ \varphi(Y,\bar X,\bar x)$
  where~$\varphi = \varphi(Y,\bar X,\bar x)$ is
  a~$\Sigma^{1,b}_0$-formula. Recall that the bounding term of~$\psi$
  is~$\bt_\psi(\bar x) = \bt_\varphi(\bar x)$. In what follows, to
  lighten the notation, we drop any reference to the set
  parameters~$\bar X$ in formulas, and to the oracles~$\bar X$ in
  machines, since they remain fixed throughout the proof.

Let~$M^Y_\varphi$ be the explicit~$\PSPACE$-machine given by
Lemma~\ref{lem:mcdet} applied to~$\varphi$.  Let~$r_\varphi$
and~$p_\varphi$ be the term and the polynomial also given by that
lemma. By Lemma~\ref{lem:mcdet}.\ref{lem:mcdet.bound2}, the term~$r_\varphi$
witnesses~$M^Y_\varphi$ as explicit~$\EXP$-machine. Therefore,
Lemma~\ref{lem:mcnondet} applies to~$M^Y_\varphi$ and~$r_\varphi$ and we
get an explicit~$\NEXP$-machine~$N_\psi$, a term~$r_\psi$, and a
polynomial~$p_\psi$.  We prove~(a)--(e) using the
quantifier-free~$\PV(\alpha)$-formulas~$F,G,H$ also given by
Lemma~\ref{lem:mcnondet}, and
the~$\Sigma^{1,b}_0$-formula~$\Comp_\varphi$ given by
Lemma~\ref{lem:mcdet}.

For~(a), argue in~$\V^0_2$ and assume~$\psi(\bar x)$ holds. Choose~$Y$
such that~$\varphi(Y,\bar x)$ holds. By Lemma~\ref{lem:mcdet}.\ref{lem:mcdet.c}, the
set~$Z:=\Comp_\varphi(Y,\bar x,\cdot)$ is a halting computation
of~$M^Y_\varphi$ on~$\bar x$. Note that~$Z$ exists
by~$\Sigma^{1,b}_0$-comprehension, which defines the
theory~$\V^0_2$. By Lemma~\ref{lem:mcdet}.\ref{lem:mcdet.b}, the computation~$Z$
cannot be rejecting, so it is accepting. By
Lemma~\ref{lem:mcnondet}.\ref{lem:mcnondet.a}, the set~$F := F(Z,Y,\bar x,\cdot)$ is an
accepting computation of~$N_\psi$ on~$\bar x$. Note that~$F$ exists
by~$\Delta^b_1(\alpha)$-comprehension.

For~(b), argue in~$\S^1_2(\alpha)$ and assume~$Y$ is an accepting
computation of~$N_\psi$ on~$\bar x$. By Lemma~\ref{lem:mcnondet}.\ref{lem:mcnondet.b}
we have that~$G(Y,\bar x,\cdot)$ is an accepting computation
of~$M_\varphi^Z$ on~$\bar x$, for~$Z := H(Y,\bar x,\cdot)$. Note
that~$Z$ exists by~$\Delta^b_1(\alpha)$-comprehension. By
Lemma~\ref{lem:mcdet}.\ref{lem:mcdet.a} we get that~$\varphi(Z,\bar x,\cdot)$ holds.
Thus~$\psi(\bar x)$ follows.

For~(c) and~(d), refer to
Lemma~\ref{lem:mcnondet}.\ref{lem:mcnondet.c}, the choices of~$r_\psi$
and~$p_\psi$, and the fact
that~$\bt_{\psi}(\bar x) = \bt_{\varphi}(\bar x)$. This also gives the
claim that~$r_\psi(\bar x)$ witnesses~$N_\psi$ as
explicit~$\NEXP$-machine.

For~(e), argue in~$\S^1_2(\alpha)$.  If~$\neg\psi(\bar x)$ holds,
use~(b). If~$\psi(\bar x)$ holds, choose~$Y$ such
that~$\varphi(Y,\bar x)$ holds. Then Lemma~\ref{lem:mcdet}.\ref{lem:mcdet.g}
and~$\Delta^b_1(\alpha)$-comprehension imply that there exists an
accepting computation~$Z$ of~$M_\varphi^{Y}$ on~$\bar x$. Now argue as
in~(a).
\end{proof}

\section{Consistency for $\NEXP$} \label{sec:consfornexp}

In this section we define a suitable universal explicit $\NEXP$-machine $M_0$.
We verify the claim from the introduction that both theories
$\{\neg\alpha^c_{M_0}\mid c\ge 1\}$ and $\{\neg\beta^c_{M_0}\mid\penalty10000 c\ge 1\}$
formalize $\NEXP \not\subseteq \Ppoly$.
We finally prove that the
consistency of both formalizations with the theory~$\V^0_2$ follows
from Theorem~\ref{thm:alphaNEXP} and our work on
formally-verified model-checkers.


\subsection{A universal machine} \label{sec:universal}

A canonical~$\NEXP$-complete problem
called $Q_0$ is:
\begin{quote} \emph{Given~$\langle N,x,t\rangle$ as input,
    where~$N$ is a (number coding a) non-deterministic machine, and~$x$
    and~$t$ are numbers written in binary, does~$N$ accept~$x$
    in at most~$t$ steps?}
\end{quote}
A non-deterministic exponential-time machine~$M_0$ for~$Q_0$, on
input~$\langle N,x,t\rangle$, guesses and verifies a time-$t$
computation of~$N$ on~$x$. We ask for an implementation of this so
that a weak theory can verify its correctness.  This is a quite direct
consequence of Lemmas~\ref{lem:mcdet} and~\ref{lem:mc}.

\begin{lemma} \label{lem:M0} There exists an
  explicit $\NEXP$-machine~$M_0$ with one input-tape and without
  oracles, such that for every explicit $\NEXP$-machine~$M$ with one
  input-tape and without oracles, say witnessed by the term~$t_M(x)$,
  there are quantifier-free $\PV(\alpha)$-formulas $F(Z,x,u)$
  and~$G(Z,x,u)$ such that
\begin{enumerate}[label=(\alph*), ref=\alph*] \itemsep=0pt
\item $\begin{array}[t]{lcl}
\S^1_2(\alpha)& \vdash &
    \q{$Z$ is an accepting computation of $M$ on $x$} \to \\
    &&\;\; \q{$F(Z,x,\cdot)$ is an accepting
    computation of $M_0$ on $\langle
    M,x,t_M(x)\rangle$},
\end{array}$
\label{lem:M0.a}
\item $\begin{array}[t]{lcl}
\S^1_2(\alpha)&\vdash&
    \q{$Z$ is an accepting computation of $M_0$ on $\langle M,x,t_M(x)\rangle$} \to \\
    &&\;\; \q{$G(Z,x,\cdot)$ is an accepting computation of $M$ on $x$}.
\end{array}$
\label{lem:M0.b}
\end{enumerate}
In particular,
\begin{enumerate}[resume*]\itemsep=0pt
\item $\begin{array}[t]{lcl}
\S^1_2(\alpha)&\vdash&
  \exists_2 Z \q{$Z$ is an accepting computation of $M_0$ on $\langle M,x,t_M(x)\rangle$}
  \leftrightarrow \\
&&\;\; \exists_2 Z \q{$Z$ is an accepting computation of $M$ on $x$}.
\end{array}$
\label{lem:M0.c}
\end{enumerate}
\end{lemma}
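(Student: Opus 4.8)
The plan is to let $M_0$ be the formally verified model-checker that Lemma~\ref{lem:mc} attaches to a single, fixed $\hat\Sigma^{1,b}_1$-formula $\psi_0$ expressing the $\NEXP$-complete problem $Q_0$, and then to extract the quantifier-free translations $F$ and $G$ from the construction underlying that lemma, i.e.\ from Lemmas~\ref{lem:mcdet} and~\ref{lem:mcnondet}, which already carry out all the ``verification in a weak theory'' work. So essentially only some bookkeeping remains.

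In detail, I would fix $\PV$-terms for the projections $\pi^3_1,\pi^3_2,\pi^3_3$ of a triple $\langle N,x,t\rangle$ and let $\varphi_0 = \varphi_0(Y,v)$ be the $\Pi^b_1(\alpha)$-formula obtained from the formalized statement ``$Y$ is an accepting space-$t$ time-$t$ computation of $N$ on $x$'' of Section~\ref{sec:premachines} under the substitution $N := \pi^3_1(v)$, $x := \pi^3_2(v)$, $t := \pi^3_3(v)$; it has free variables $Y$ (set) and $v$ (number). I would then set $\psi_0(v) := \exists_2 Y\ \varphi_0(Y,v)$, a $\hat\Sigma^{1,b}_1$-formula with no free set variables whose maximal $\Sigma^{1,b}_0$-subformula $\varphi_0$ is $\Pi^b_1(\alpha)$, apply Lemma~\ref{lem:mc} to it, and put $M_0 := N_{\psi_0}$. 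By item~(d) of Lemma~\ref{lem:mc}, $M_0$ is an explicit $\NEXP$-machine; it has a single input tape (its input is the one number $v$); and it has no oracle tapes, since for $\psi_0$ having no free set variables the machine $N_{\psi_0}$ produced through Lemmas~\ref{lem:mcdet} and~\ref{lem:mcnondet} only guesses the witness set and has no oracle to consult. The one elementary fact to pin down is: for an explicit $\NEXP$-machine $M$ without oracles and one input tape, witnessed by $t_M(x)$, and for $v := \langle M,x,t_M(x)\rangle$, the formula $\varphi_0(Y,v)$ is, provably in $\S^1_2(\alpha)$, the same as ``$Y$ is an accepting computation of $M$ on $x$'', because unfolding the latter via the explicit-machine conventions yields ``$Y$ is an accepting space-$t_M(x)$ time-$t_M(x)$ query-$t_M(x)$ computation of $M$ on $x$'', the query clause is vacuous since $M$ has no oracle tapes and may be dropped, and the encoding of computations from Section~\ref{sec:premachines} uses $\pi^3_1(v)=M$ as its machine parameter, so the two encodings literally coincide.

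Granting this, $F$ and $G$ come from chaining the explicit translations already on the table. For $F$, I would argue in $\S^1_2(\alpha)$: from ``$Z$ is an accepting computation of $M$ on $x$'' the elementary fact gives $\varphi_0(Z,v)$ with $v=\langle M,x,t_M(x)\rangle$; Lemma~\ref{lem:mcdet}, item~(g) (applicable as $\varphi_0\in\Pi^b_1(\alpha)$), then makes $W := \Comp_{\varphi_0}(Z,v,t_{\varphi_0}(v),\cdot)$ an accepting computation of $M_{\varphi_0}^{Z}$ on $v$, where $M_{\varphi_0}$ is the explicit $\PSPACE$-machine of Lemma~\ref{lem:mcdet}; and Lemma~\ref{lem:mcnondet}, item~(a), applied to $M_{\varphi_0}$ with its guessed oracle set to $Z$, makes $F'(W,Z,v,\cdot)$ an accepting computation of $M_0$ on $v$, for the quantifier-free $\PV(\alpha)$-formula $F'$ of that lemma. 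Since $\Comp_{\varphi_0}(\cdot,\cdot,w,\cdot)$ is quantifier-free $\PV(\alpha)$ in the $\Pi^b_1(\alpha)$ case and $t_M$, the pairing and the projections are terms, the composite
\[
F(Z,x,u)\ :=\ F'\bigl(\Comp_{\varphi_0}(Z,\langle M,x,t_M(x)\rangle,t_{\varphi_0}(\langle M,x,t_M(x)\rangle),\cdot),\ Z,\ \langle M,x,t_M(x)\rangle,\ u\bigr)
\]
is a quantifier-free $\PV(\alpha)$-formula, and the chain above is a proof of~(a). Symmetrically, for~(b): from ``$Z$ is an accepting computation of $M_0$ on $\langle M,x,t_M(x)\rangle$'', Lemma~\ref{lem:mcnondet}, item~(b), makes $G'(Z,v,\cdot)$ an accepting computation of $M_{\varphi_0}^{H'(Z,v,\cdot)}$ on $v$, Lemma~\ref{lem:mcdet}, item~(a), gives $\varphi_0(H'(Z,v,\cdot),v)$, and the elementary fact turns this into ``$H'(Z,v,\cdot)$ is an accepting computation of $M$ on $x$''; so $G(Z,x,u):=H'(Z,\langle M,x,t_M(x)\rangle,u)$ works. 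Item~(c) then follows at once by chaining~(a) and~(b) and forming the witnessing sets by $\Delta^b_1(\alpha)$-comprehension. I expect the main (if mild) obstacle to be exactly the elementary fact above, i.e.\ carefully matching the encoding conventions so that $\varphi_0(Y,\langle M,x,t_M(x)\rangle)$ is on the nose the explicit-machine unfolding of ``$Y$ is an accepting computation of $M$ on $x$'', together with the routine check that the $\cdot$-substitutions keep everything quantifier-free; the substantive content was already discharged in Lemmas~\ref{lem:mcdet} and~\ref{lem:mcnondet}.
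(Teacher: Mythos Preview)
Your proposal is correct and follows essentially the same approach as the paper: define $M_0$ by applying Lemma~\ref{lem:mcdet} to the $\Pi^b_1(\alpha)$-formula expressing ``$Z$ is an accepting time-$t$ computation of $N$ on $x$'' (with the triple-projections substituted in), then Lemma~\ref{lem:mcnondet} to guess the oracle; extract $F$ via Lemma~\ref{lem:mcdet}.\ref{lem:mcdet.g} followed by Lemma~\ref{lem:mcnondet}.\ref{lem:mcnondet.a}, and $G$ via Lemma~\ref{lem:mcnondet}.\ref{lem:mcnondet.b} followed by Lemma~\ref{lem:mcdet}.\ref{lem:mcdet.a}. The only cosmetic difference is that you route the construction through Lemma~\ref{lem:mc} and then unpack its proof, whereas the paper invokes Lemmas~\ref{lem:mcdet} and~\ref{lem:mcnondet} directly; the resulting machine and translations are the same, and your choice $G(Z,x,u):=H'(Z,\langle M,x,t_M(x)\rangle,u)$ is indeed the right one.
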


\begin{proof} Let~$\pi_1,\pi_2,\pi_3$ be~$\PV$-functions that
  extract~$x_1,x_2,x_3$ from~$z=\langle
  x_1,x_2,x_3\rangle$. Define~$\Pi^{b}_1$-formulas as follows:
\begin{align*}
& \varphi_1(Z,z)\ :=\ \varphi_2(Z,\pi_1(z),\pi_2(z),\pi_3(z)),\\
& \varphi_2(Z,N,x,t)\ :=\ \q{$Z$ is an accepting time-$t$ computation of $N$ on $x$}.
\end{align*}
Let~$M_1^Z$ be the machine given by Lemma~\ref{lem:mcdet} applied
to~$\varphi_1 = \varphi_1(Z,z)$, and let~$r_1(z)$ be the corresponding
term. Since~$\varphi_1$ is a~$\Pi^b_1(\alpha)$-formula, let~$t_1(z)$
and~$C_1(Z,z,w,u)$ be the term and the
quantifier-free~$\PV(\alpha)$-formula given by
Lemma~\ref{lem:mcdet}.\ref{lem:mcdet.g}.  We set~$M_0$ to the
explicit~$\NEXP$-machine given by Lemma~\ref{lem:mcnondet} applied
to~$M_1^Z$ with term~$r_1(z)$ witnessing it as explicit~$\EXP$-machine by
Lemma~\ref{lem:mcdet}.\ref{lem:mcdet.bound2}. In the proof of (a)--(b)
we use the quantifier-free~$\PV(\alpha)$-formulas~$F_1,G_1,H_1$ given
by Lemma~\ref{lem:mcnondet} on~$M_1^Z$.

For (a) we set $F(Z,x,u):=F_1(C,Z,z,u)$ where $C$
abbreviates~$C_1(Z,z,t_1(z),\cdot)$ and in both cases $z$
abbreviates $\langle M,x,t_M(x)\rangle$. Argue in~$\S^1_2(\alpha)$ and
assume~$Z$ is an accepting computation of~$M$ on~$x$. Since $M$ is
explicit and $t_M(x)$ is a term witnessing it, we have that $Z$ is an
accepting time-$t$ computation of~$M$ on~$x$, for $t := t_M(x)$. It
follows that $\varphi_2(Z,M,x,t_M(x))$ holds, and
hence $\varphi_1(Z,z)$ holds.  Since $\varphi_1$ is
a $\Pi^b_1(\alpha)$-formula, by
Lemma~\ref{lem:mcdet}.\ref{lem:mcdet.g} we have that the
set $C := C_1(Z,z,t_1(z),\cdot)$ is an accepting computation
of~$M_1^Z$ on~$z$. Such a~$C$ exists
by $\Delta^b_1(\alpha)$-comprehension because~$C_1$ is a
quantifier-free $\PV(\alpha)$-formula. By
Lemma~\ref{lem:mcnondet}.\ref{lem:mcnondet.a} we get that the
set $F := F(Z,x,\cdot) = F_1(C,Z,z,\cdot)$ is an accepting computation
of~$M_0$ on~$z$; i.e., the right-hand side of the implication in~(a)
holds. Again, $F$ exists by $\Delta^b_1(\alpha)$-comprehension.

For (b) we set~$G(Z,x,u) := G_1(Z,z,u)$ where, again,~$z$
abbreviates $\langle M,x,t_M(x)\rangle$. Argue in~$\S^1_2(\alpha)$ and
assume $Z$ is an accepting computation of~$M_0$ on~$z$. Then, by
Lemma~\ref{lem:mcnondet}.\ref{lem:mcnondet.b} we have that the
set $G := G(Z,x,\cdot) = G_1(Z,z,\cdot)$ is an accepting computation
of~$M_1^{H}$ on~$z$ for $H := H_1(Z,z,\cdot)$. The two sets $G$
and~$H$ exist by $\Delta^b_1$-comprehension. Now,
Lemma~\ref{lem:mcdet}.\ref{lem:mcdet.a} implies that $\varphi_1(H,z)$
holds; i.e., $H$ is an accepting time-$t$ computation of~$M$ on~$x$,
for $t := t_M(x)$, and hence also an accepting computation of~$M$
on~$x$. This shows that the right-hand side in the implication in~(b)
holds.

The final statement follows from (a) and (b) by
$\Delta^b_1(\alpha)$-comprehension.
\end{proof}

\subsection{Formalization}

The introduction claimed that the theories $\{\neg\alpha^c_{M_0}\mid c\ge 1\}$ and $\{\neg\beta^c_{M_0}\mid c\ge 1\}$ both formalize $\NEXP \not\subseteq \Ppoly$. This is easy to check:

\begin{proposition}\label{prop:betatruth} The following are equivalent.
\begin{enumerate}[label=(\alph*), ref=\alph*,beginpenalty=10000] \itemsep=0pt
\item $\NEXP\not\subseteq\Ppoly$.
\item $\big\{ \neg\alpha^c_{M_0}\mid c \in \N \big\}$ is true.
\item $\big\{ \neg\alpha^c_{M}\mid c\in\N \big\}$ is true for some
  explicit $\NEXP$-machine $M$.
  \item $\big\{ \neg\beta^c_{M_0}\mid c \in \N \big\}$ is true.
  \item $\big\{ \neg\beta^c_{M}\mid c\in\N \big\}$ is true for some
  explicit $\NEXP$-machine $M$.
\end{enumerate}
\end{proposition}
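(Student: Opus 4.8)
The plan is to fix the standard-model meaning of the sentences $\alpha^c_M$ and $\beta^c_M$ and then run a short cycle of implications; the only substantive ingredient is the Easy Witness Lemma, and it enters exactly once, in the passage from a $\beta$-formulation to $\NEXP\not\subseteq\Ppoly$. So first I would record the semantics. For an explicit $\NEXP$-machine $M$, witnessed say by the term $t_M(x)$, write $L(M):=\{x\in\N : M \text{ accepts } x\}$; since every term satisfies $|t_M(x)|\le|x|^{O(1)}$ we have $t_M(x)\le 2^{|x|^{O(1)}}$, so $L(M)\in\NEXP$. In the standard model all numbers are small, so $\alpha^c_M$ asserts that for every $n\ge 2$ some circuit coded by a number below $2^{n^c}$ decides $L(M)$ on all inputs below $2^n$; since a circuit with $s$ gates is coded below $2^{10s|s|}$, letting $c$ vary makes this equivalent to $L(M)\in\Ppoly$, whence $\big\{\neg\alpha^c_M\mid c\in\N\big\}$ is true iff $L(M)\notin\Ppoly$. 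Similarly $\beta^c_M$ additionally asks for a circuit $D$ that is an oblivious witness circuit for $M$ --- its second conjunct is literally the defining property of such circuits, since $D_x(\cdot)$ is the set coded by $\mathit{tt}(D_x)$ --- so $\big\{\neg\beta^c_M\mid c\in\N\big\}$ is true iff it is \emph{not} the case that both $L(M)\in\Ppoly$ and $M$ has polynomial-size oblivious witness circuits. Finally, the $C$ occurring in $\beta^c_M$ already witnesses $\alpha^c_M$, so for a fixed $M$ the truth of $\big\{\neg\alpha^c_M\mid c\big\}$ implies that of $\big\{\neg\beta^c_M\mid c\big\}$.

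With the semantics in hand, the routine implications are: (c)$\Rightarrow$(a), since if $\big\{\neg\alpha^c_M\mid c\big\}$ is true then $L(M)\in\NEXP\setminus\Ppoly$; (b)$\Rightarrow$(c) and (d)$\Rightarrow$(e), because $M_0$ is an explicit $\NEXP$-machine and so (b), (d) are the instances $M=M_0$ of (c), (e); (b)$\Rightarrow$(d), by the last sentence of the previous paragraph; and (a)$\Rightarrow$(b). For the last of these, note that every $L\in\NEXP$ is $L(M)$ for some explicit $\NEXP$-machine $M$ --- take a $\hat\Sigma^{1,b}_1$-formula $\psi(x)$ defining $L$ and let $M$ be the machine $N_\psi$ of Lemma~\ref{lem:mc}, which in the standard model satisfies $\psi(x)\leftrightarrow(M\text{ accepts }x)$ --- and that, by Lemma~\ref{lem:M0}.\ref{lem:M0.c} read in the standard model, $x\mapsto\langle M,x,t_M(x)\rangle$ is a polynomial-time many-one reduction of $L(M)$ to $L(M_0)$; since $\Ppoly$ is closed under such reductions, $\NEXP\not\subseteq\Ppoly$ forces $L(M_0)\notin\Ppoly$, i.e.\ $\big\{\neg\alpha^c_{M_0}\mid c\big\}$ is true. (Informally: $M_0$ decides the $\NEXP$-complete problem $Q_0$.) Thus (a), (b), (c) are equivalent and (b)$\Rightarrow$(d)$\Rightarrow$(e).

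It remains to prove (e)$\Rightarrow$(a), the one place the Easy Witness Lemma is needed, and I would argue its contrapositive. Assume $\NEXP\subseteq\Ppoly$ and let $M$ be an arbitrary explicit $\NEXP$-machine. Then $L(M)\in\NEXP\subseteq\Ppoly$, so there is a polynomial-size circuit family $\{C_n\}$ deciding $L(M)$ on all inputs below $2^n$ (routing the usual length-$m$ circuits over $m=|x|\le n$); and by Lemma~\ref{lem:ewl}, $M$ has a polynomial-size oblivious witness circuit family $\{D_n\}$. Pick $c\in\N$ large enough that, uniformly for $n\ge 2$, the codes of $C_n$ and $D_n$ both lie below $2^{n^c}$. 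Then the pair $(C_n,D_n)$ witnesses $\beta^c_M$ at input length $n$: if $C_n(x)=0$ then $x\notin L(M)$, so $M$ has no accepting computation on $x$; if $C_n(x)=1$ then $M$ accepts $x$, so $(D_n)_x(\cdot)=\mathit{tt}((D_n)_x)$ encodes an accepting computation of $M$ on $x$ by the defining property of oblivious witness circuits. Hence $\beta^c_M$ is true and $\big\{\neg\beta^c_M\mid c\big\}$ is false, which is $\neg$(a)$\Rightarrow\neg$(e). Combining (a)$\Leftrightarrow$(b)$\Leftrightarrow$(c) with (b)$\Rightarrow$(d)$\Rightarrow$(e) and (e)$\Rightarrow$(a) shows all five statements equivalent. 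I do not expect a conceptual obstacle here: the only real input is the quoted Easy Witness Lemma, and the one thing requiring care is the uniform bookkeeping in the last step and in (a)$\Rightarrow$(b) --- using a single exponent $c$ for all standard input lengths, and reconciling the length and encoding conventions between the circuits $C_n,D_n$ and the machine $M$ fixed in Section~\ref{sec:explicit} --- which is routine.
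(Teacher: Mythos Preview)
Your proof is correct and follows essentially the same approach as the paper: the cycle of implications is organized slightly differently (you route (a)$\Rightarrow$(d) through (b) via $\beta^c_M\to\alpha^c_M$, whereas the paper argues (a)$\Rightarrow$(d) directly by the same contrapositive), but the substance is identical, with the Easy Witness Lemma invoked exactly once for (e)$\Rightarrow$(a). Your treatment is in fact more explicit than the paper's in spelling out the standard-model semantics of $\alpha^c_M,\beta^c_M$ and in justifying the $\NEXP$-completeness of $L(M_0)$ via Lemmas~\ref{lem:mc} and~\ref{lem:M0}, where the paper simply asserts that $Q_0$ is $\NEXP$-complete.
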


\begin{proof}
  We show that (a)-(b)-(c) are equivalent, and that~(a)-(d)-(e) are
  equivalent. To see that~(a) implies~(b), assume~(b) fails;
  i.e.,~$\alpha^c_{M_0}$ is true for
  some~$c\in\N$. Then~$Q_0\in\SIZE[n^c]$.  As~$Q_0$
  is~$\NEXP$-complete,~(a) fails. That~(b) implies~(c) is trivial
  since~$M_0$ is an explicit~$\NEXP$-machine.  That~(c) implies~(a) is
  obvious since every explicit~$\NEXP$-machine defines a language
  in~$\NEXP$. To see that~(a) implies~(d) argue as in the proof
  that~(a) implies~(b) swapping~$\beta$ for~$\alpha$. That~(d)
  implies~(e) is trivial since~$M_0$ is an
  explicit~$\NEXP$-machine. Finally, that~(e) implies~(a) follows from
  the Easy Witness Lemma~\ref{lem:ewl}.
\end{proof}

It is straightforward to see that the equivalences (b)-(c) and (d)-(e)
in Proposition~\ref{prop:betatruth} have direct proofs (i.e., proofs
that do not rely on the easy witness lemma). We use Lemma~\ref{lem:M0} to prove this
on the formal level, for both formalizations.

\begin{lemma} \label{lem:beta} For every~$c\in\N$ and every~$1$-input
  explicit~$\NEXP$-machine~$M$ without oracles there is~$d\in\N$ such
  that~$\S^1_2(\alpha)$ proves~$(\alpha^c_{M_0} \to \alpha^d_{M})$
  and~$(\beta^c_{M_0}\to \beta^d_M)$.
 \end{lemma}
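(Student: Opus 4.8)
The plan is to compose circuits. Given the $\alpha^c_{M_0}$-circuits for the $\NEXP$-complete problem $Q_0$, one gets $\alpha^d_M$-circuits for the language of $M$ by pre-composing with the polynomial-time map $x\mapsto\langle M,x,t_M(x)\rangle$ and then translating accepting computations of $M_0$ back into accepting computations of $M$ via Lemma~\ref{lem:M0}; and similarly for the $\beta$-versions, where in addition the witness circuit $D$ is post-composed with the computation-translating map $G$ of Lemma~\ref{lem:M0}. Everything is carried out inside $\S^1_2(\alpha)$. The one non-routine ingredient is a ``circuit-as-parameter'' device: to build a circuit containing a previously obtained circuit as a hard-wired subcircuit, apply Lemma~\ref{lem:pvcircuit} to the relevant quantifier-free $\PV$-formula with that circuit listed among its free variables, and then fix the corresponding block of input gates by the operation $E\mapsto E_{\bar y}$. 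Note that $\eval(C',\langle M,x,t_M(x)\rangle){=}1$, and the formula obtained from $G(Z,x,u)$ by replacing each $t{\in}Z$ with $\eval(D_{\langle M,x,t_M(x)\rangle},t){=}1$, really are quantifier-free $\PV$-formulas, so Lemma~\ref{lem:pvcircuit} applies literally.

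First fix $c$ and $M$, let $t_M(x)$ witness $M$ as an explicit $\NEXP$-machine, and write $z(x):=\langle M,x,t_M(x)\rangle$. Since $z(\cdot)$ is a $\PV$-term and $M$ a numeral, $\S^1_2$ proves $x\le z(x)$ and $|z(x)|\le q(|x|)$ for a monotone polynomial $q$, and $\S^1_2$ proves $\Log$ closed under $q$ (small numbers are closed under polynomials, via $\#$). For the $\alpha$-implication, argue in $\S^1_2(\alpha)$: assume $\alpha^c_{M_0}$, fix $n\in\Log$, put $m:=q(n)$ (so $m\in\Log$ and $m\ge n\ge 2$), and instantiate $\alpha^c_{M_0}$ at $m$ to obtain a circuit $C'<2^{m^c}$ with $C'(z)\leftrightarrow\exists_2 Z\ \q{$Z$ is an accepting computation of $M_0$ on $z$}$ for all $z<2^m$. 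Applying Lemma~\ref{lem:pvcircuit} to the formula $\eval(C',z(x)){=}1$ in the variables $(C',x)$ gives, for some standard $e$, a circuit $E<2^{m^{ce}}$ with $E(C',x)\leftrightarrow\eval(C',z(x)){=}1$ whenever $C',x<2^{m^c}$; set $C:=E_{C'}$, so $|C|\le|E|$ is polynomial in $n$, and, using $z(x)<2^m$ for $x<2^n$, $C(x)\leftrightarrow C'(z(x))\leftrightarrow\exists_2 Z\ \q{$Z$ is an accepting computation of $M_0$ on $z(x)$}$. By Lemma~\ref{lem:M0}(c) the last formula is $\S^1_2(\alpha)$-equivalent to $\exists_2 Z\ \q{$Z$ is an accepting computation of $M$ on $x$}$, which is exactly the subformula $\psi$ for which $\alpha^d_M=\alpha^d_\psi$ (Definition~\ref{def:alphaM}); taking $d$ large enough that $n^d$ bounds $|C|$ yields $\S^1_2(\alpha)\vdash\alpha^c_{M_0}\to\alpha^d_M$.

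For the $\beta$-implication, assume $\beta^c_{M_0}$, fix $n\in\Log$, take $m:=q(n)$, and instantiate $\beta^c_{M_0}$ at $m$ to get circuits $C,D<2^{m^c}$ satisfying both $\beta$-clauses for all $z<2^m$ and all $Z$. Build $C^*$ from $C$ exactly as $C$ was built from $C'$ above, so $C^*(x)\leftrightarrow C(z(x))$ for $x<2^n$. For the witness circuit, let $G'(D,x,u)$ be the quantifier-free $\PV$-formula obtained from $G(Z,x,u)$ of Lemma~\ref{lem:M0} by substituting the circuit $D_{z(x)}$ for the set variable $Z$; applying Lemma~\ref{lem:pvcircuit} to $G'$ in the variables $(D,x,u)$ and fixing the $D$-block of inputs to the circuit $D$ yields $D^*$ of size polynomial in $n$ with $D^*(x,u)\leftrightarrow G'(D,x,u)$, hence $D^*_x(\cdot)=G\big(D_{z(x)}(\cdot),x,\cdot\big)$ once $D^*$ has enough $u$-input gates (the $\S^1_2$-provable bound $\langle t_M(x),t_M(x),|M|\rangle$ on the size of any accepting computation of $M$ on $x$ shows the truncation implicit in $\eval$ loses nothing). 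Now argue in $\S^1_2(\alpha)$, using $\Delta^b_1(\alpha)$-comprehension to form the sets named $F(\cdot)$ and $G(\cdot)$ in Lemma~\ref{lem:M0}: if $C^*(x){=}0$ then $C(z(x)){=}0$, so no $Z$ is an accepting computation of $M_0$ on $z(x)$, whence no $Y$ is an accepting computation of $M$ on $x$, since otherwise $F(Y,x,\cdot)$ would be such a $Z$ by Lemma~\ref{lem:M0}(a); and if $C^*(x){=}1$ then $C(z(x)){=}1$, so $D_{z(x)}(\cdot)$ is an accepting computation of $M_0$ on $z(x)$, whence $D^*_x(\cdot)=G\big(D_{z(x)}(\cdot),x,\cdot\big)$ is an accepting computation of $M$ on $x$ by Lemma~\ref{lem:M0}(b). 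This verifies the two clauses of $\beta^d_M$ (Definition~\ref{def:beta}) at $n$; since $n$ was arbitrary and $d$ can be chosen to bound $|C^*|$ and $|D^*|$ (and at least as large as in the $\alpha$-case), $\S^1_2(\alpha)\vdash\beta^c_{M_0}\to\beta^d_M$.

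The main obstacle is bookkeeping rather than anything conceptual: one must check that all the size estimates ($|z(x)|\le q(|x|)$, the circuit-size bounds coming out of Lemma~\ref{lem:pvcircuit}, the closure of $\Log$ under polynomials, the computation-size bound $\langle t_M(x),t_M(x),|M|\rangle$) are provable in $\S^1_2$, and that the ``circuit-as-parameter'' construction is arranged so that $\S^1_2(\alpha)$ actually recognizes the composed circuit as computing the intended function. No genuine difficulty arises beyond this; in particular, since the statement only relates $M_0$ and $M$ within the same ($\alpha$ or $\beta$) formalization, it does not invoke the Easy Witness Lemma.
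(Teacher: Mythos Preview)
Your proposal is correct and follows essentially the same route as the paper: instantiate $\alpha^c_{M_0}$ (resp.\ $\beta^c_{M_0}$) at a length $m$ polynomially larger than $n$, pre-compose the resulting circuit(s) with the reduction $x\mapsto\langle M,x,t_M(x)\rangle$, and in the $\beta$-case build the witness circuit by applying Lemma~\ref{lem:pvcircuit} to $G$ with the set variable replaced by evaluations of $D_0$, verifying both clauses via Lemma~\ref{lem:M0}(a),(b). Your treatment is in fact more explicit than the paper's on the circuit-as-parameter device and on why the $u$-input truncation is harmless.
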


\begin{proof}
  We refer to the implication
  between~$\alpha$'s as the \emph{$\alpha$-case}, and to the
  implication between~$\beta$'s as the \emph{$\beta$-case}. Both have
  similar proofs, so we prove them at the same time. Let~$M$ be
  witnessed by the term~$t_M(x)$. Let~$F(Z,x,u)$ and~$G(Z,x,u)$ be the
  formulas given by Lemma~\ref{lem:M0} on~$M$. Argue
  in~$\S^1_2(\alpha)$ and assume~$\alpha^c_{M_0}$ or~$\beta^c_{M_0}$,
  as appropriate.  Let~$n\in\Log$ be given. We aim to find a
  circuit~$C$ in the~$\alpha$-case, and two circuits~$C,D$ in
  the~$\beta$-case, witnessing~$\alpha_M^e$ or~$\beta_M^e$,
  respectively, for the given~$n$, and for
  suitable~$e\in\N$. Choose~$d\in\N$ such
  that~$|\langle M,x,t_M(x)\rangle| \ <\ n^{d}$ for all~$x<2^n$. In
  the~$\alpha$-case, let~$C_0$ be a circuit with~$|C_0|<m^c$ that
  witnesses~$\alpha_{M_0}^c$ for~$m := n^d$. In the~$\beta$-case
  let~$C_0,D_0$ be circuits with~$|C_0|,|D_0|<m^c$ that
  witness~$\beta_{M_0}^c$ for~$m:=n^{d}$.

  Choose~$C$ such that~$C(x) = C_0(\langle M,x,t_M(x) \rangle)$
  and~$e\in\N$ such that~$C<2^{n^e}$. This~$C$ will be the
  witness-circuit in the~$\alpha$-case, and the first of the two
  witness-circuits in the~$\beta$-case. For the latter, we choose the
  second circuit~$D$ as follows.  Choose formulas $F,G$ according to
  Lemma~\ref{lem:M0}.
  By  Lemma~\ref{lem:pvcircuit} there is a circuit~$D$ such that
$$
D(x,u) \leftrightarrow G(D_0(\langle M,x,t_M(x)\rangle,\cdot),x,u)
$$
for all~$x,u$ with~$x<2^n$.  Then~$C,D<2^{n^e}$ for
suitable~$e\in\N$. This is the~$e \in \N$ we choose in
the~$\beta$-case.

We claim that~$C$ witnesses~$\alpha^e_M$ for the given~$n$ in
the~$\alpha$-case, and~$C,D$ witness~$\beta^e_{M}$ for the given~$n$
in the~$\beta$-case. Let~$x<2^n$ and
choose~$z := \langle x,M,t_M(x) \rangle$.  Let~$Z$ be any set and
let~$Y := F(Z,x,\cdot)$, which exists
by~$\Delta^b_1(\alpha)$-comprehension.  If~$C(x)=0$, then~$C_0(z)=0$
and both~$\alpha^c_{M_0}$ and~$\beta^c_{M_0}$ imply that~$Y$ is not an
accepting computation of~$M_0$ on~$z$. By
Lemma~\ref{lem:M0}.\ref{lem:M0.a} this means that~$Z$ is not an
accepting computation of~$M$ on~$x$. In both cases, this completes one
half of the verification of the witnesses. If~$C(x)=1$,
then~$C_0(z)=1$ and~$\alpha^c_{M_0}$ implies that there exists an
accepting computation~$Y$ of~$M_0$ on~$z$, and~$\beta^c_{M_0}$ implies
that~$Y := D_0(z,\cdot)$ is such an accepting computation of~$M_0$
on~$z$. But then Lemma~\ref{lem:M0}.\ref{lem:M0.b} implies
that~$Z := G(Y,x,\cdot)$, which exists
by~$\Delta^b_1(\alpha)$-comprehension, is an accepting computation
of~$M$ on~$x$. In both cases, this completes the other half of the
verification of the witness: in the~$\beta$-case,
because~$Z = D(x,\cdot)$.
\end{proof}

\subsection{Consistency}

\noindent For every explicit~$\NEXP$-machine~$M$, which by default has
one input-tape and no oracles, recall
that~$\alpha^c_M := \alpha^c_\psi$ for~$\psi$ as
in Definition~\ref{def:alphaM}. 
For a theory~$\T$ that
extends~$\S^1_2(\alpha)$, consider the following~\emph{A-statements}
for~$\T$:

\begin{tabbing}
\hspace{0.4cm} \= A: \hspace{0.1cm} \= $\T+\{\neg \alpha^c_M\mid c\in\N\}$ is consistent for some
  explicit $\NEXP$-machine $M$, \\
\> A0: \> $\T+\{\neg \alpha^c_{M_0}\mid c\in\N\}$ is consistent.
\end{tabbing}

\noindent Consider also the corresponding \emph{B-statements}
for~$\T$:

\begin{tabbing}
\hspace{0.4cm} \= B: \hspace{0.1cm} \= $\T+\{\neg \beta^c_M\mid c\in\N\}$ is consistent for some
  explicit $\NEXP$-machine $M$, \\
\> B0: \> $\T+\{\neg \beta^c_{M_0}\mid c \in \N\}$ is consistent.
\end{tabbing}

\noindent Next, recall the statement of
Theorem~\ref{thm:alphaNEXP}, which we now state for an arbitrary
theory~$\T$ that extends~$\S^1_2(\alpha)$. We refer to it as
the~\emph{C-statement}, or the \emph{direct consistency statement}
for~$\T$:

\begin{tabbing}
\hspace{0.4cm} \= C: \hspace{0.1cm} \= $\T+\{\neg \alpha^c_{\psi} \mid c \in \N \}$ is consistent
  for some $\hat\Sigma^{1,b}_1$-formula $\psi(x)$.
\end{tabbing}

\noindent Let us explicitly point out that the formula~$\psi(x)$ of
the~C-statement has only one free variable of the number sort, and no
free variables of the set sort.

\begin{lemma}\label{lem:betaalpha}
  For every~$c\in\N$ and every explicit~$\NEXP$-machine~$M$ with one
  input-tape and without oracles,~$\S^1_2(\alpha)$
  proves~$(\beta^c_M\to\alpha^c_M)$.
\end{lemma}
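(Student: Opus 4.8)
The plan is to show that the very same circuit $C$ that witnesses $\beta^c_M$ at a given small $n$ already witnesses $\alpha^c_M$ at $n$; the auxiliary circuit $D$ provided by $\beta^c_M$ is needed only to supply, through $\Delta^b_1(\alpha)$-comprehension, an actual set witnessing the existential set-quantifier hidden in $\alpha^c_M$. Thus no appeal to the Easy Witness Lemma is required for this direction (that lemma is needed only for the converse $\alpha^c_M\to\beta^c_M$, which is not claimed here).

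Concretely, I would argue in $\S^1_2(\alpha)$, assume $\beta^c_M$, and fix $n\in\Log$. By $\beta^c_M$ there are circuits $C,D<2^{n^c}$ such that for all $x<2^n$ and all sets $Y$: if $C(x){=}0$ then $Y$ is not an accepting computation of $M$ on $x$, and if $C(x){=}1$ then $D_x(\cdot)$ is an accepting computation of $M$ on $x$. Since $\S^1_2$ proves $\eval(C,x){<}2$, it suffices to check that, for every $x<2^n$, $C(x){=}1$ holds exactly when $\exists_2 Y\,\q{$Y$ is an accepting computation of $M$ on $x$}$; this is precisely the matrix of $\alpha^c_M$ for this $n$ (and $C<2^{n^c}$ gives $C\le 2^{n^c}$), so $\alpha^c_M$ follows as $n$ was arbitrary. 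For the check, fix $x<2^n$. If $C(x){=}0$, the first conjunct of $\beta^c_M$ says no set is an accepting computation of $M$ on $x$, i.e.\ $\neg\exists_2 Y\,\q{$Y$ is an accepting computation of $M$ on $x$}$. If $C(x){=}1$, the second conjunct gives that $\q{$D_x(\cdot)$ is an accepting computation of $M$ on $x$}$ holds; applying $\Delta^b_1(\alpha)$-comprehension to the quantifier-free $\PV$-formula $\eval(D_x,u){=}1$ (bounded, since $D<2^{n^c}$ with $n$ small, so $2^{|D|}$ exists and $D_x{\le}D$) yields a set $Y$ with $\forall u\,(u{\in}Y\leftrightarrow \eval(D_x,u){=}1)$. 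Replacing each atom $t{\in}Y$ by $\eval(D_x,t){=}1$ inside the $\Pi^b_1(\alpha)$-formula $\q{$\cdot$ is an accepting computation of $M$ on $x$}$ — the substitution operation described in the preliminaries — then shows $\S^1_2(\alpha)$ proves $\q{$Y$ is an accepting computation of $M$ on $x$}\leftrightarrow \q{$D_x(\cdot)$ is an accepting computation of $M$ on $x$}$, so $\exists_2 Y\,\q{$Y$ is an accepting computation of $M$ on $x$}$ holds.

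There is essentially no obstacle in this argument: the only point requiring a little care is the last substitution step, namely that a circuit-represented set $D_x(\cdot)$ can legitimately be promoted to a genuine set $Y$ witnessing an existential set-quantifier — but this is exactly what the $\varphi(C(\cdot),\dots)$ notation together with $\Delta^b_1(\alpha)$-comprehension (both already set up in Section~\ref{sec:circ}) deliver. I do not anticipate needing induction, a detour through $M_0$, or any complexity-theoretic input for this lemma.
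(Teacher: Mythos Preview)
Your proposal is correct and follows exactly the paper's approach: the circuit $C$ from $\beta^c_M$ witnesses $\alpha^c_M$, and the only point needing justification is that $D_x(\cdot)$ can be turned into an actual set via $\Delta^b_1(\alpha)$-comprehension, which you handle just as the paper (tersely) indicates.
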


\begin{proof} The formula~$\beta^c_M$ states that the (single)
  existential set-quantifier in~$\alpha^c_M$ is witnessed
  by~$D_x(\cdot)$, and this set exists
  by~$\Delta^b_1(\alpha)$-comprehension.
\end{proof}

We view the following proposition as justification that our
formalization is faithful. It takes record of which
implications in Proposition~\ref{prop:betatruth} hold over weak
theories.

\begin{proposition} \label{prop:statements}
  Let~$\T$ be a theory extending~$\S^1_2(\alpha)$ and consider
  the~A,B,C-statements for~$\T$. Then, the following hold:
  the~A-statements are equivalent, the~B-statements are equivalent,
  and both~A-statements imply both~B-statements as well as
  the~C-statement.
\end{proposition}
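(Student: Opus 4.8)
The plan is to reduce every claimed equivalence and implication to the two translation lemmas already proved, namely Lemma~\ref{lem:beta} (relating $\alpha^c_{M_0}$ to $\alpha^d_M$, and $\beta^c_{M_0}$ to $\beta^d_M$, over $\S^1_2(\alpha)$) and Lemma~\ref{lem:betaalpha} (giving $\S^1_2(\alpha)\vdash\beta^c_M\to\alpha^c_M$), together with the elementary remark that $\S^1_2$ proves $\alpha^c_M\to\alpha^{c'}_M$ and $\beta^c_M\to\beta^{c'}_M$ whenever $c\le c'$, since then $2^{n^c}\le 2^{n^{c'}}$ for $n\in\Log$. By this monotonicity and compactness, $\T+\{\neg\alpha^c_M\mid c\in\N\}$ is inconsistent precisely when $\T\vdash\alpha^c_M$ for a single $c\in\N$, and likewise for the $\beta$'s. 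Throughout, ``explicit $\NEXP$-machine'' carries its default meaning of one input-tape and no oracles, so that Lemmas~\ref{lem:beta} and~\ref{lem:betaalpha} apply to every machine in sight, and I will use freely that $\T$ extends $\S^1_2(\alpha)$, so that the $\S^1_2(\alpha)$-provable implications above are available in $\T$.

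First I would dispatch the easy directions. A0$\Rightarrow$A and B0$\Rightarrow$B are immediate because $M_0$ is itself an explicit $\NEXP$-machine. For A$\Rightarrow$C, note that by Definition~\ref{def:alphaM} we have $\alpha^c_M=\alpha^c_\psi$ for the $\hat\Sigma^{1,b}_1$-formula $\psi(x)$ of~\eqref{eq:comp2}, whose only free variable is the number variable $x$ and which has no free set variables; hence a witness $M$ for A gives literally the witness $\psi$ for C, since $\{\neg\alpha^c_M\mid c\in\N\}=\{\neg\alpha^c_\psi\mid c\in\N\}$. In particular A0$\Rightarrow$C follows via A0$\Rightarrow$A.

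Next I would prove A$\Rightarrow$A0 and B$\Rightarrow$B0 by contraposition. If A0 fails, then $\T\vdash\alpha^c_{M_0}$ for some $c$; Lemma~\ref{lem:beta} then yields, for each explicit $\NEXP$-machine $M$, a $d$ with $\S^1_2(\alpha)\vdash\alpha^c_{M_0}\to\alpha^d_M$, whence $\T\vdash\alpha^d_M$, so $\T+\{\neg\alpha^{c'}_M\mid c'\in\N\}$ is inconsistent for every $M$ and A fails. The $\beta$-clause of Lemma~\ref{lem:beta} gives B$\Rightarrow$B0 in exactly the same way. Combined with the trivial directions, this shows that the A-statements are equivalent and the B-statements are equivalent. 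Finally, A0$\Rightarrow$B0: if B0 fails, then $\T\vdash\beta^c_{M_0}$ for some $c$, and Lemma~\ref{lem:betaalpha} gives $\S^1_2(\alpha)\vdash\beta^c_{M_0}\to\alpha^c_{M_0}$, hence $\T\vdash\alpha^c_{M_0}$, so A0 fails. Chaining A$\Rightarrow$A0$\Rightarrow$B0$\Rightarrow$B (and its variants) then shows that both A-statements imply both B-statements, which together with A$\Rightarrow$C and A0$\Rightarrow$A$\Rightarrow$C completes the proposition.

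I do not expect a genuine obstacle here: everything substantive has been absorbed into the model-checker constructions and Lemmas~\ref{lem:M0}, \ref{lem:beta} and~\ref{lem:betaalpha}. The only points that need a little care are the compactness-plus-monotonicity step that replaces an infinite set of lower-bound axioms by a single $\alpha^c$ or $\beta^c$, and the (trivial but essential) fact that $\T\supseteq\S^1_2(\alpha)$ is exactly what lets the $\S^1_2(\alpha)$-provable implications of those lemmas be imported into $\T$.
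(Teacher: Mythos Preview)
Your proof is correct and takes essentially the same approach as the paper: both arguments invoke compactness together with Lemma~\ref{lem:beta} for the equivalences within the A- and B-statements, Lemma~\ref{lem:betaalpha} for A$\Rightarrow$B, and the fact that $\alpha^c_M=\alpha^c_\psi$ for a suitable $\hat\Sigma^{1,b}_1$-formula for A$\Rightarrow$C. Your contrapositive organization and explicit monotonicity step are a bit more detailed than the paper's one-paragraph version, but the substance is identical.
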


\begin{proof} Lemma~\ref{lem:betaalpha} and compactness show that
  each~A-statement implies the corresponding~B-statement.
   Further, Lemma~\ref{lem:beta} proves
  that the~A-statements are equivalent, and that the~B-statements are
  equivalent; for the back implications note that~$M_0$ is certainly
  an explicit~$\NEXP$-machine. Further, it is obvious from the
  definition of~$\alpha^c_M$ that~A implies~C and hence
  both~A-statements imply~C.
\end{proof}

When~$\T = \V^0_2$, we argue below that the model-checker lemmas can
be used to show that the implication A-to-C in
Proposition~\ref{prop:statements} can be reversed. It will follow that
all~A,B,C-statements for~$\V^0_2$ are equivalent. Composing with
Theorem~\ref{thm:alphaNEXP} we get the following corollary, which
entails Theorem~\ref{thm:NEXP}.

\begin{theorem} \label{thm:mainfull}
  For $\T=\V^0_2$ all statements C, A, A0, B, B0 are true.
\end{theorem}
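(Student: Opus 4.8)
The plan is to reduce the whole statement to a single new implication, namely \emph{the C-statement implies an A-statement} for $\T=\V^0_2$; everything else is already available. Indeed, Theorem~\ref{thm:alphaNEXP} is precisely the C-statement for $\V^0_2$, so C is true. Moreover $\V^0_2$ extends $\S^1_2(\alpha)$, so Proposition~\ref{prop:statements} applies and tells us that the two A-statements for $\V^0_2$ are equivalent, the two B-statements are equivalent, and both A-statements imply both B-statements (and the C-statement). Hence, once we know that some A-statement for $\V^0_2$ holds, we obtain A0, B and B0 as well, and together with C this yields all five statements. So everything comes down to proving: if the C-statement holds for $\V^0_2$, then so does an A-statement.

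To prove this implication I would argue as follows. Fix a $\hat\Sigma^{1,b}_1$-formula $\psi(x)$ — with the single free number variable $x$ and no free set-sort variables — witnessing the C-statement for $\V^0_2$, so that $\V^0_2+\{\neg\alpha^c_\psi\mid c\in\N\}$ is consistent; concretely, by the proof of Theorem~\ref{thm:alphaNEXP} one may take $\psi(x):=\neg\PHP(x)$. Apply Lemma~\ref{lem:mc} to $\psi$ with the empty tuple of set parameters. This produces an explicit $\NEXP$-machine $M:=N_\psi$, which (as $\psi$ has one free number variable and no free set variables) has one input tape and no oracles. Let $\chi_M(x)$ be the formula expressing that there is a set coding an accepting computation of $M$ on $x$, i.e.\ the formula \eqref{eq:comp2} instantiated at $M$, so that $\alpha^c_M=\alpha^c_{\chi_M}$ by Definition~\ref{def:alphaM}. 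Parts~\ref{lem:mc.a} and~\ref{lem:mc.b} of Lemma~\ref{lem:mc} then give $\V^0_2\vdash\psi(x)\to\chi_M(x)$ and $\S^1_2(\alpha)\vdash\neg\psi(x)\to\neg\chi_M(x)$; since $\V^0_2$ extends $\S^1_2(\alpha)$, combining the two yields $\V^0_2\vdash\forall x\,(\psi(x)\leftrightarrow\chi_M(x))$.

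From this provable equivalence it follows at once that $\V^0_2\vdash\alpha^c_\psi\leftrightarrow\alpha^c_M$ for every $c\in\N$: a circuit $C$ provably satisfies $\forall x{<}2^n\,(C(x)\leftrightarrow\psi(x))$ exactly when it satisfies $\forall x{<}2^n\,(C(x)\leftrightarrow\chi_M(x))$. Consequently $\V^0_2+\{\neg\alpha^c_M\mid c\in\N\}$ is consistent if and only if $\V^0_2+\{\neg\alpha^c_\psi\mid c\in\N\}$ is, and the latter is consistent by the choice of $\psi$; since $M$ is an explicit $\NEXP$-machine, this is exactly the A-statement for $\V^0_2$, which completes the argument. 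I expect the only genuinely substantial input to be Lemma~\ref{lem:mc}, and within it the delicate point is part~\ref{lem:mc.a}: it requires the $\Sigma^{1,b}_0$-comprehension of $\V^0_2$ — not merely the weaker resources of $\S^1_2(\alpha)$ — to convert a set $Y$ witnessing $\psi(x)$ into a set coding an accepting computation of the model-checker $M$ on $x$, i.e.\ precisely the sort of exponentially large object whose existence $\V^0_2$ does not otherwise guarantee. With Lemma~\ref{lem:mc} in hand, no further obstacle remains in proving Theorem~\ref{thm:mainfull}.
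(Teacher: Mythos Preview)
Your proposal is correct and follows essentially the same route as the paper: establish C via Theorem~\ref{thm:alphaNEXP}, reduce to the implication C~$\Rightarrow$~A using Proposition~\ref{prop:statements}, and derive that implication from Lemma~\ref{lem:mc}.\ref{lem:mc.a} and~\ref{lem:mc.b}, which make $\psi(x)$ $\V^0_2$-provably equivalent to~\eqref{eq:comp2} for the model-checker $M=N_\psi$. Your added remarks---that $M$ has one input tape and no oracles, and that the need for $\Sigma^{1,b}_0$-comprehension in part~\ref{lem:mc.a} is what forces $\V^0_2$ rather than $\S^1_2(\alpha)$---are accurate and useful elaborations.
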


\begin{proof}
  Theorem~\ref{thm:alphaNEXP} states that~C is true
  for~$\T=\V^0_2$. Hence, by Proposition~\ref{prop:statements}, it
  suffices to show that~C implies~A for~$\T=\V^0_2$. But this follows
  from Lemma~\ref{lem:mc}.\ref{lem:mc.a}
  and~\ref{lem:mc}.\ref{lem:mc.b}. Indeed, these state that
  every~$\hat\Sigma^{1,b}_1$-formula~$\psi(x)$ is~$\V^0_2$-provably
  equivalent to~\eqref{eq:comp2} for
  suitable~$M$.
\end{proof}

\section{Consistency for barely superpolynomial time}

In this section we fix $r \in\PV$  such that
\begin{enumerate}\itemsep=0pt
\item[(r0)] the function $x \mapsto r(x)$ is computable in time $O(r(x))$;
\item[(r1)] $\S^1_2\vdash (|x|{=}|y|\to r(x){=}r(y))$;
\item[(r2)] $\S^1_2\vdash  (|x|{<}|y|\to r(x){<}r(y))$;
\item[(r3)] for every polynomial~$p$ there is~$f \in \PV$ such
  that~$\S^1_2 \vdash p(r(x)) \leq r(f(x))$;
\item[(r4)] for every $c\in\N$ there is $n_c\in\N$ such that
$\N\models\forall x\ (|x|{>}n_c\to  r(x){>}|x|^c)$.
\end{enumerate}
\noindent We call a function $r$ satisfying (r4) {\em
  length-superpolynomial}.
An {\em explicit $\NTIME(\poly(r(x)))$-machine} is an
explicit~$\NEXP$-machine~$M$ that is witnessed by~$p(r(x))$ for some
polynomial~$p$.

Here, we deviate from our convention that explicit
machines are witnessed by terms and allow~$\PV$-symbols. In the
notation~$\NTIME(\poly(r(x)))$, the~$x$ is there to emphasize that the
runtime is measured as a function of the input~$x$ and not its
length. If we want to measure runtime as a function of the length of
the input, then we use~$n$ instead of~$x$. For
example,~$\NP = \NTIME(n^{O(1)})$ is given by the collection of
explicit~$\NTIME(\poly(r(x)))$-machines with~$r(x) = |x|$, and the
classes~$\NE = \NTIME(2^{O(n)})$ and~$\NTIME(n^{O(\log^{(k)}n)})$ are
given by the collections of explicit~$\NTIME(\poly(r(x)))$-machines
for~$r(x) = 2^{|x|}$ and~$r(x)=|x|^{\log^{(k)}|x|}$, respectively; the
latter two satisfy (r0)-(r4), if $k \geq 1$ in the second.

\begin{remark} (r3) is not implied by the other conditions.
\end{remark}

\begin{proof}
We shall define a function~$r(x)$ which consists of slow growing
segments interspersed with fast growing segments. First, choose a fast
growing function $R\in\PV$ so that $R(x)$ depends only on~$|x|$ and so
that $R(x)^2 \ge R(x) + |x|^{\omega(1)}$.  For instance $R(x) =
2^{|x|}$ works.  Second, define $\ell:\N\to\N$ be increasing
with $\ell(c+1)>\ell(c)^c+1$ and with $R(x)^2 \ge R(x) + |x|^c$ for
all $x\ge 2^{\ell(c)-1}$.  Let $x_c:=2^{\ell(c)-1}$
and $y_c:=2^{\ell(c)^c}-1$ be the first and last numbers of
length $\ell(c)$ and~$\ell(c)^c$, respectively.  Finally,
let $r(x):=R(x_c)+|x|-|x_c|$ for $x_c\le x\le y_c$, and
let $r(x):=R(x)$ for~$y_c< x< x_{c+1}$.  The slow growing segments
of~$r(x)$ are where $x_c\le x\le y_c$, and here $r(x)$ is chosen to be
as slow growing as possible while satisfying (r1) and~(r2).

Clearly, $\ell$ and~$R$ can be chosen so that $r(x)$ is in $\PV$ and
properties (r0), (r1), (r2), and~(r4) hold for~$r$.  We claim~(r3)
fails for $p(x)=x^2$.

Indeed, let~$f\in\PV$ be given and choose~$c$ such
that~$|f(x_c)|<|x_c|^c=|y_c|$.  Then
\begin{equation*}
p(r(x_c))=r(x_c)^2=R(x_c)^2\ge R(x_c)+|x_c|^{c}=R(x_c)+|y_c|>r(y_c)>r(f(x_c))
\end{equation*}
where the last inequality follows from~(r2).
\end{proof}

\subsection{A more general universal machine}

We start with the analogue of Lemma~\ref{lem:M0}.

\begin{lemma} \label{lem:Mr} There is an
  explicit $\NTIME(\poly(r(x)))$-machine~$M_r$ with one input-tape and
  without oracles such that for every
  explicit $\NTIME(\poly(r(x)))$-machine~$M$ with one input-tape and
  without oracles there are $f_M(x) \in \PV$ and
  quantifier-free $\PV(\alpha)$-formulas $F_M$ and~$G_M$ such that
\begin{enumerate}[label=(\alph*), ref=\alph*,labelindent=0.5em,leftmargin=*] \itemsep=0pt
\item $\begin{array}[t]{lcl}
\S^1_2(\alpha)&\vdash&
      \q{$Z$ is an accepting computation of $M$ on $x$} \to \\
    && \q{$F_M(Z,x,\cdot)$ is an accepting
    computation of $M_r$ on $\langle
    M,x,f_M(x)\rangle$}.
\end{array}$
\label{lem:Mr.a}
\item $\begin{array}[t]{lcl}
 \S^1_2(\alpha)&\vdash&
     \q{$Z$ is an accepting computation of $M_r$ on $\langle M,x,f_M(x)\rangle$} \to \\
    && \q{$G_M(Z,x,\cdot)$ is an accepting computation of $M$ on $x$},
\end{array}$
\label{lem:Mr.b}
\end{enumerate}
In particular,
\begin{enumerate}[resume*,labelindent=0.5em,leftmargin=*]\itemsep=0pt
\item $\begin{array}[t]{lcl}
\S^1_2(\alpha)&\vdash&
  \exists_2 Z \q{$Z$ is an accepting computation of $M_r$ on
  $\langle M,x,f_M(x)\rangle$}
  \leftrightarrow \\
  && \exists_2 Z \q{$Z$ is an accepting computation of $M$ on $x$}
\end{array}$
\label{lem:Mr.c}
\end{enumerate}
\end{lemma}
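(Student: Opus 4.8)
The plan is to follow the proof of Lemma~\ref{lem:M0} essentially verbatim, with the single change that the universal machine, on input $\langle N,x,t\rangle$, guesses and verifies a \emph{time-$r(t)$} (rather than time-$t$) computation of $N$ on $x$; the point of passing a budget $t$ and only then applying $r$ is that $t$ can be kept polynomially small in terms of which $r(t)$ already exceeds the true running time of $N$, so that the code $\langle N,x,t\rangle$ stays small enough for $\poly(r(\cdot))$ to bound the cost of the simulation. Concretely, I would take the $\Pi^b_1(\alpha)$-formula $\varphi_1(Z,z)$ expressing \q{$Z$ is an accepting time-$r(\pi_3(z))$ computation of $\pi_1(z)$ on $\pi_2(z)$}, apply Lemma~\ref{lem:mcdet} to it to obtain an explicit $\PSPACE$-machine $M_1^Z$, a term $r_1(z)$, and---since $\varphi_1\in\Pi^b_1(\alpha)$---a term $t_1(z)$ and a quantifier-free $\PV(\alpha)$-formula $C_1(Z,z,w,u)$ from part~(g), and then apply Lemma~\ref{lem:mcnondet} to $M_1^Z$ (with $r_1(z)$ witnessing it as an explicit $\EXP$-machine, by part~(e) of Lemma~\ref{lem:mcdet}) to obtain the machine I take for $M_r$, together with quantifier-free $\PV(\alpha)$-formulas $F_1,G_1,H_1$.

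The first thing to verify is that $M_r$ really is an explicit $\NTIME(\poly(r(x)))$-machine, i.e.\ is witnessed by $p(r(z))$ for some polynomial $p$. The bounding term of $\varphi_1$ satisfies $\bt_{\varphi_1}(z)\le q_0(r(\pi_3(z)),|z|)$ for a polynomial $q_0$; since $\S^1_2$ proves $\pi_3(z)\le z$, properties (r1) and (r2) give $r(\pi_3(z))\le r(z)$, and a $\PV$-length-induction from (r2) (each increment of $|z|$ increases $r(z)$ by at least $1$) shows that $\S^1_2$ proves $|z|\le r(z)+c$ for a constant $c$. Combining these, $\S^1_2$ proves $\bt_{\varphi_1}(z)\le q_1(r(z))$ for a polynomial $q_1$, hence $r_1(z)\le p_{\varphi_1}(\bt_{\varphi_1}(z),|z|)\le q_2(r(z))$ by part~(d) of Lemma~\ref{lem:mcdet}, hence $r_N(z)\le p_N(r_1(z),|z|)\le p(r(z))$ for a suitable polynomial $p$ by part~(c) of Lemma~\ref{lem:mcnondet}. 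So $M_r$ is witnessed by $p(r(z))$, as required.

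For the rest, given an explicit $\NTIME(\poly(r(x)))$-machine $M$, say witnessed by $p_M(r(x))$, I would use property (r3) to choose $f_M=f_M(x)\in\PV$ with $\S^1_2\vdash p_M(r(x))\le r(f_M(x))$; then $M$ is also witnessed by $r(f_M(x))$, and \q{an accepting computation of $M$ on $x$} may be read throughout as \q{an accepting time-$r(f_M(x))$ computation of $M$ on $x$}. With this reading, and writing $z:=\langle M,x,f_M(x)\rangle$, an accepting computation $Z$ of $M$ on $x$ is precisely an object with $\varphi_1(Z,z)$, and conversely. Parts (a), (b) and (c) then follow by repeating the argument of Lemma~\ref{lem:M0} with $t_M(x)$ replaced by $f_M(x)$: for (a), from $\varphi_1(Z,z)$ part~(g) of Lemma~\ref{lem:mcdet} yields the accepting $M_1^Z$-computation $C:=C_1(Z,z,t_1(z),\cdot)$ and then part~(a) of Lemma~\ref{lem:mcnondet} yields the accepting $M_r$-computation $F_M(Z,x,\cdot):=F_1(C,Z,z,\cdot)$; for (b), from an accepting $M_r$-computation $Z$, part~(b) of Lemma~\ref{lem:mcnondet} yields an accepting $M_1^{H_1(Z,z,\cdot)}$-computation $G_1(Z,z,\cdot)$, part~(a) of Lemma~\ref{lem:mcdet} gives $\varphi_1(H_1(Z,z,\cdot),z)$, i.e.\ an accepting time-$r(f_M(x))$ computation of $M$ on $x$, and we set $G_M(Z,x,u):=G_1(Z,z,u)$; part (c) follows from (a) and (b) by $\Delta^b_1(\alpha)$-comprehension. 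All set-existence steps use $\Delta^b_1(\alpha)$-comprehension on quantifier-free $\PV(\alpha)$-formulas, exactly as in Lemma~\ref{lem:M0}.

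The main obstacle is the explicitness bookkeeping of the second paragraph: one must ensure that feeding the \emph{small} budget $f_M(x)$---rather than the true runtime $p_M(r(x))$ of $M$, whose code would be too large for $\poly(r(z))$ to absorb---keeps $z=\langle M,x,f_M(x)\rangle$ small enough that $\poly(r(z))$ still bounds the total cost of reading $z$, computing $r(\pi_3(z))$, guessing a length-$r(\pi_3(z))$ computation, and verifying it. This is exactly where the interplay of (r1), (r2), (r3), the bound $\pi_3(z)\le z$, and the $\S^1_2$-provable bound $|z|\le r(z)+c$ enters; property (r0) is available as well, but the estimates rely only on (r1)--(r3). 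Everything else is a routine transcription of the proof of Lemma~\ref{lem:M0}.
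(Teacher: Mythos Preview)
Your approach is correct and essentially the same as the paper's; the one structural difference is that you bake $r$ into the model-checked formula $\varphi_1(Z,z)$ and obtain $M_r$ directly from Lemmas~\ref{lem:mcdet} and~\ref{lem:mcnondet}, whereas the paper applies those lemmas to the $r$-free formula $\varphi(Z,N,x,t)$, gets a machine $M_1$, and then defines $M_r$ as a two-stage machine that first parses $z=\langle N,x,t\rangle$, computes $r(t)$, and runs $M_1$ on $\langle N,x,r(t)\rangle$. Your packaging is slightly more streamlined; the paper's keeps Lemma~\ref{lem:mcdet} applied only to formulas with base-language bounds and relegates the $\PV$-function $r$ to the wrapper machine, which is marginally cleaner given that the section explicitly relaxes the convention that witnesses be genuine terms.

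One slip to fix: in part~(b), your own argument shows that $H_1(Z,z,\cdot)$---not $G_1(Z,z,\cdot)$---is the accepting computation of $M$ on $x$, so you should set $G_M(Z,x,u):=H_1(Z,z,u)$. (You inherited this from the proof of Lemma~\ref{lem:M0}, which has the same typo; the paper's proof of Lemma~\ref{lem:Mr} gets it right and also makes explicit the padding/shrinking between the time-$p_M(r(x))$ and time-$r(f_M(x))$ readings of ``accepting computation of $M$ on $x$'', a step you glossed over with ``may be read throughout as''.)
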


\begin{proof}
Choose according to Lemma~\ref{lem:mcdet} a machine~$M^Z_\varphi$ and
a term~$r_\varphi(N,x,t)$ for
\begin{align*}
  & \varphi(Z,N,x,t)\ := \q{$Z$ is an accepting time-$t$ computation
    of~$N$ on~$x$}.
\end{align*}
By the comment after Equation~\eqref{eq:formcomp}, there is a
polynomial~$p_1$ so that $\bt_\varphi(N,x,t){\leq}p_1(t,|N|,|x|)$
provably in~$\S^1_2$. By Lemma~\ref{lem:mcdet}.\ref{lem:mcdet.bound1},
there is a polynomial~$p_2$ so
that $r_\varphi(N,x,t){\le}p_2(t,|N|,|x|)$ provably
in~$\S^1_2$. For~$M^Z_\varphi$ choose a machine~$M_1$ and a
term $r_1(N,x,t)$ according to Lemma~\ref{lem:mcnondet}. By
Lemma~\ref{lem:mcnondet}.\ref{lem:mcnondet.c}, there is a
polynomial~$p_3$ so that $r_1(N,x,t)\le p_3(t,|N|,|x|)$.

Define~$M_r$ to compute on~$z$ as follows. It first checks
that~$z=\langle N, x, t\rangle$ for certain~$N,x,t$ and
computes~$\langle N,x,r(t)\rangle$; if the check fails, the machine
stops. After this {\em initial} computation~$M_r$ runs~$M_1$
on~$\langle N,x,r(t)\rangle$.  The initial computation can be
implemented with explicit~$\P$-machines (Lemma~\ref{lem:pvmachine}),
say with time bound~$p_4(|z|)$ for a polynomial~$p_4$.  Then~$M_r$ is
an explicit~$\NTIME(\poly(r(x)))$-machine. Indeed, it is witnessed
by~$p_4(|z|)+p_3(r(z), |z|,|z|)\le p_5(r(z))$ for a
polynomial~$p_5$. Here we use that~$\S^1_2$-provably~$t,N,x$ are
bounded by~$z$, and~$r$ is non-decreasing with~$r(x)\ge|x|$ by~(r1)
and~(r2).

Let~$M$ be an explicit~$\NTIME(\poly(r(x)))$-machine, say witnessed
by~$p_M(r(x))$ for a polynomial~$p_M$. Choose~$f_M$ for~$p_M$
according to~(r3).

For (a), argue in~$\S^1_2$ and assume~$Z$ is an accepting computation
of~$M$ on~$x$. Then~$Z$ is time $p_M(r(x))$, so by~(r3) we can repeat
the halting configuration to get an accepting time $r(f_M(x))$
computation~$Z_0$ of~$M$ on~$x$, i.e., $\varphi(Z_0,M,x,r(f_M(x)))$
holds. By Lemma~\ref{lem:mcdet}.\ref{lem:mcdet.g}, the
set $Z_1:=C_\varphi(Z_0,M,
x,r(f_M(x)),t_\varphi(M,x,r(f_M(x))),\cdot)$ is an accepting
computation of~$M^{Z_0}_\varphi$ on the triple $M, x,r(f_M(x))$. By
Lemma~\ref{lem:mcnondet}.\ref{lem:mcnondet.a}, the
set $Z_2:=F(Z_1,Z_0,M, x,r(f_M(x)),\cdot)$ is an accepting computation
of~$M_1$ on the triple $M,x,r(f_M(x))$.  Compose $Z_2$ with an initial
computation of~$M_r$ on $z:=\langle M,x,f_M(x)\rangle$ to get an
accepting computation~$Z_3$ of~$M_r$ on~$z$. It is clear
that $Z_3=F_M(Z,x,\cdot)$ for some
quantifier-free $\PV(\alpha)$-formula~$F_M$.

For (b), argue in~$\S^1_2$ and let $Z$ be an accepting computation
of~$M_r$ on $\langle M, x, f_M(x)\rangle$. From~$Z$ extract an
accepting computation~$Z_0$ of~$M_1$ on the triple $M,x,r(f_M(x))$. By
Lemma~\ref{lem:mcnondet}.\ref{lem:mcnondet.b}, ~$Z_1:=G(Z_0,M,x,r(f_M(x)),\cdot)$
is an accepting computation of~$M^{Z_2}_\varphi$ on the
triple $M,x,r(f_M(x))$
where $Z_2:=H(Z_0,M,x,r(f_M(x)),\cdot)$. Clearly, $Z_0$ can be
described by a quantifier-free $\PV(\alpha)$-formula, so $Z_1$
and~$Z_2$ exist by $\Delta^b_1(\alpha)$-comprehension. Hence, by
Lemma~\ref{lem:mcdet}.\ref{lem:mcdet.a}, $\varphi(Z_2, M,x,r(f_M(x)))$
holds, i.e., $Z_2$ is an accepting time-$r(f_M(x))$ computation of~$M$
on~$x$.  By~(r3) we can shrink~$Z_2$ to time $p_M(r(x))$ and get an
accepting computation~$Z_3$ of~$M$
on~$x$. Clearly, $Z_3=G_M(Z,x,\cdot)$ for some
quantifier-free $\PV(\alpha)$-formula~$G_M$.

Finally, (c) follows from (a) and (b) by $\Delta^b_1(\alpha)$-comprehension.
\end{proof}

\subsection{Formalization}

To faithfully formalize $\NTIME(\poly(r(x)))\not\subseteq\Ppoly$ we intend to follow the path paved in Section~\ref{sec:consfornexp}. Some modification are, however, required. First, we need an analogue of the Easy Witness Lemma. This has been achieved by Murray and Williams~\cite{MurrayWilliams:CircuitLB}:

\begin{lemma}\label{lem:sewl} Let~$t(n)$ be a function that is increasing,
  time-constructible, and
  superpolynomial. If~$\NTIME(\poly(t(n)))\subseteq\Ppoly$, then
  every~$\NTIME(\poly(t(n)))$-machine~$M$ has po\-ly\-no\-mial-size
  witness circuits.
\end{lemma}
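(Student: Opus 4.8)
The plan is to adapt, with careful attention to parameters, the Impagliazzo--Kabanets--Wigderson proof of the Easy Witness Lemma for $\NEXP$ (Lemma~\ref{lem:ewl}); a direct reduction to Lemma~\ref{lem:ewl} is not available, since the hypothesis $\NTIME(\poly(t(n)))\subseteq\Ppoly$ is much weaker than $\NEXP\subseteq\Ppoly$ when $t$ is only barely superpolynomial, and padding cannot turn an $\NTIME(\poly(t))$-machine into an $\NEXP$-machine without either blowing up the advice or losing the input. So I would argue by a win--win directly. Suppose some $\NTIME(\poly(t(n)))$-machine $M$, say running in time $T(n)=\poly(t(n))$, lacks polynomial-size witness circuits; I will contradict $\NTIME(\poly(t(n)))\subseteq\Ppoly$. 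Unwinding the negation: for every constant $c$ there are infinitely many $n$ and an input $x\in L(M)\cap\{0,1\}^n$ all of whose accepting computations, read as truth tables of Boolean functions on $\ell:=\lceil\log T(n)\rceil$ variables, require circuits of size $>n^c$. Since $t$ is time-constructible and merely superpolynomial, $\ell=n^{o(1)}$, so such a witness is \emph{extremely} hard relative to its own length: its circuit complexity is $n^{\omega(1)}$, hence superpolynomial in $\ell$.

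The first step is the collapse. Using a scaled-down Babai--Fortnow--Nisan--Wigderson argument --- Merlin sends an allegedly correct polynomial-size circuit and Arthur certifies it with a sumcheck protocol (LFKN/GKR style) over an arithmetization of the relevant computation tableau --- the hypothesis $\NTIME(\poly(t(n)))\subseteq\Ppoly$ forces the verification of the relevant poly-size circuits (and, in the good case, of witnesses of $M$) to be doable by an Arthur running in time $\poly(n)$; here one exploits $\ell=\poly(n)$, so the sumcheck over the size-$\poly(T(n))$ tableau costs only $\poly(\ell)\cdot\poly(n)=\poly(n)$. Making Arthur this efficient is the point one must be most careful about, and it is what makes the whole scaled argument viable: the resulting $\mathsf{MA}$-type simulation has a verifier whose decision circuits have size only $\poly(n)$.

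The second step is hardness versus randomness on the bad inputs. From a truth table on $\ell$ variables of circuit complexity $>n^c$, worst-case-to-average-case amplification (Impagliazzo--Wigderson / Sudan--Trevisan--Vadhan, computable in time $2^{O(\ell)}=\poly(T(n))$) gives a function on $O(\ell)$ variables that is $(1/2+n^{-\Omega(c)})$-hard for circuits of size $n^{\Omega(c)}$, and the Nisan--Wigderson generator built from it fools circuits of size $n^{\Omega(c)}$ with seed length $O(\ell^2/\log n)=n^{o(1)}$. Taking $c$ large, this generator fools Arthur's $\poly(n)$-size decision circuits. A nondeterministic machine can now, on infinitely many input lengths, \emph{guess and amplify the hard witness of $M$}, cycle over all $2^{n^{o(1)}}$ seeds, and thereby simulate the $\mathsf{MA}$-protocol --- and hence $L(M)$ --- using $n^{o(1)}$ bits of advice and noticeably fewer nondeterministic resources than a generic $\NTIME(\poly(t(n)))$-machine. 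Choosing $M$ at the outset to compute a language that is (almost-everywhere) hard for $\NTIME(\poly(t(n)))$ and invoking an almost-everywhere nondeterministic time hierarchy theorem robust to $n^{o(1)}$ bits of advice (e.g.\ van Melkebeek--Pervyshev, Fortnow--Santhanam) then yields the contradiction.

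The main obstacle is exactly the parameter bookkeeping of the last two steps in the \emph{barely} superpolynomial regime. For $\NEXP$ one has enormous slack: $\ell=n^{\Omega(1)}$, a hard witness yields a generator that fools polynomial-size circuits outright, and one can even afford to derandomize within $\NEXP$ and contradict a known unconditional lower bound such as $\mathsf{MA}_{\mathsf{EXP}}\not\subseteq\Ppoly$. When $t$ is barely superpolynomial, $\ell$ is subpolynomial, the generator only fools circuits of size $(\log t(n))^{\omega(1)}$, no convenient unconditional lower bound is available, and one is forced to route the contradiction entirely through the nondeterministic time hierarchy. Making this work requires simultaneously (i) pushing Arthur's running time down to $\poly(n)$ (really, making the sumcheck verifier as lean as possible), (ii) keeping the generator seed length and hence the advice at $n^{o(1)}$, and (iii) arranging that the derandomized nondeterministic simulation lands \emph{strictly inside} the time/advice budget that the hierarchy theorem can separate from $\NTIME(\poly(t(n)))$ --- i.e.\ obtaining a genuinely lossless translation between the time bound in the hypothesis and the one in the conclusion. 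Verifying that all three hold at once, uniformly in the parameter $t$, is where I expect the bulk of the work to lie, and is the substance of the Murray--Williams theorem.
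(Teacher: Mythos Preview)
Your proposal is not so much a proof as a plan to reprove the Murray--Williams easy witness lemma itself, and you acknowledge as much in your final paragraph. The paper takes a completely different, and far shorter, route: it quotes Murray--Williams' Lemma~4.1 verbatim as a black box (this is Lemma~\ref{lem:murraywilliams} in the paper) and then spends half a page showing that the stated Lemma~\ref{lem:sewl} follows by an elementary choice of parameters. Concretely, given the hypothesis $\NTIME(\poly(t(n)))\subseteq\Ppoly$ and a target machine~$M$ running in time $t(n)^c$, the paper uses a universal simulator to argue that $\NTIME(O(t(n)^{ce}))\subseteq\SIZE(s(n))$ for some fixed polynomial $s(n)$ (here $e$ is the constant from Lemma~\ref{lem:murraywilliams}); since $s$ is polynomial the side condition $s(n)<2^{n/e}/n$ is automatic, and since $t$ is superpolynomial the side condition $t(n)^c\ge s_2(s_2(s_2(n)))^d$ is automatic as well; Lemma~\ref{lem:murraywilliams} then delivers witness circuits of size $s_2(s_2(s_2(n)))^{2g}$, which is again a polynomial. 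That is the entire argument.

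As for the content of your sketch, the overall win--win shape is right, but a couple of points are loose. Your claim that $\ell=\lceil\log T(n)\rceil=n^{o(1)}$ only holds in the barely-superpolynomial regime; the lemma as stated allows $t(n)=2^n$ or larger, where $\ell\ge n$ and your seed-length and advice estimates break (of course that regime is the easy one, but the case split should be explicit). More seriously, your ``collapse'' step is vague at exactly the point that matters: getting an $\mathsf{MA}$-type protocol with a genuinely $\poly(n)$-time Arthur from the hypothesis $\NTIME(\poly(t(n)))\subseteq\Ppoly$ alone is not a straightforward BFNW rescaling, and Murray--Williams do not argue this way. Finally, the logic at the end is tangled: you begin by assuming an \emph{arbitrary} $M$ lacks witness circuits and end by ``choosing $M$ at the outset'' to be hierarchy-hard; you need one or the other, and the contradiction should be routed through a fixed hard language, with the hard-witness machine supplying only the derandomizer. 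None of this is fatal to the plan, but filling it in is precisely the work of the cited paper --- which is why the present paper simply cites it.
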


That~$t(n)$ is {\em superpolynomial} means that for every~$c\in\N$
there is $n_c\in\N$ such that $t(n)>n^c$ for all $n>n_c$. That $M$ has
{\em witness circuits of size~$s(n)$}, where $s:\N\to\N$ is a
function, means that for every ~$x\in\{0,1\}^*$ that is accepted
by~$M$, there exists a circuit~$D$ of size at most~$s(|x|)$ such
that~$\mathit{tt}(D)$ encodes an accepting computation of~$M$
on~$x$. Note that, in contrast to Lemma~\ref{lem:ewl}, the circuit~$D$
can depend on~$x$.  We do not know whether Lemma~\ref{lem:sewl} holds
true for {\em oblivious} witness circuits as in Lemma~\ref{lem:ewl}.

Lemma~\ref{lem:sewl} follows from the central result of~\cite{MurrayWilliams:CircuitLB}:

\begin{lemma}[Lemma~4.1 in \cite{MurrayWilliams:CircuitLB}] \label{lem:murraywilliams}
  There are $e,g \in \N$ with $e,g \geq 1$ such that for all
  increasing
  time-constructible functions $s(n)$ and~$t(n)$, and
  for $s_2(n) := s(en)^e$,
  if $\NTIME(O(t(n)^e)) \subseteq \SIZE(s(n))$, then
  every $\NTIME(t(n))$-machine has witness circuits of
  size $s_2(s_2(s_2(n)))^{2g}$, provided that~$s(n) < 2^{n/e}/n$
  and $t(n) \geq\penalty10000 s_2(s_2(s_2(n)))^d$ for a sufficiently
  large $d \in \N$.
\end{lemma}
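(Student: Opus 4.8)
The plan is to prove Lemma~\ref{lem:murraywilliams} by the quantitative version of the Impagliazzo--Kabanets--Wigderson easy-witness argument, i.e.\ to reproduce the proof of Lemma~4.1 of~\cite{MurrayWilliams:CircuitLB}; I sketch only the skeleton and the bookkeeping of parameters. I argue by contradiction. Fix constants $e,g$ (their values will be dictated by the polynomial overheads appearing below), assume $\NTIME(O(t(n)^e))\subseteq\SIZE(s(n))$ with $s(n)<2^{n/e}/n$ and $t(n)\ge S_3(n)^d$ for a sufficiently large $d$, where $S_3(n):=s_2(s_2(s_2(n)))$ and $s_2(n)=s(en)^e$, and suppose towards a contradiction that some $\NTIME(t(n))$-machine $M$ has no witness circuits of size $S_3(n)^{2g}$. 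The target contradiction is with the nondeterministic time hierarchy theorem: I will show $\NTIME(O(t(n)^e))$ is contained in a strictly smaller nondeterministic time class.

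\emph{Step 1 (no easy witnesses yield hard truth tables, nondeterministically).} Since $M$ lacks witness circuits of size $S_3(n)^{2g}$, there are infinitely many $n$ and inputs $x_n\in\{0,1\}^n$ accepted by $M$ such that \emph{every} accepting computation $y$ of $M$ on $x_n$, read as the truth table of a Boolean function on $m:=\log|y|=\Theta(\log t(n))$ variables, has circuit complexity above $S_3(n)^{2g}$. Hence the procedure ``on advice $x_n$, guess an accepting computation $y$ of $M$ on $x_n$ and output $\mathit{tt}(y)$'' runs nondeterministically in time $\poly(t(n))=2^{\Theta(m)}$ and has the crucial feature that \emph{all} of its non-rejecting outputs are hard; in particular one never needs to \emph{verify} hardness, so a single existential quantifier will later suffice to drive both Merlin's move and the production of the hard function.

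\emph{Step 2 (the circuit hypothesis collapses $\NTIME$ to Merlin--Arthur time).} Here I would run the time-bounded analogues of two classical arguments. First, a scaled-down Karp--Lipton/Meyer collapse: from $\NTIME(O(t^e))\subseteq\SIZE(s)$ one nondeterministically guesses the size-$s$ circuit purportedly encoding a witness for the canonical complete problem and uses its downward self-reducibility to make the verification deterministic, thereby collapsing nondeterministic time (at a polynomially larger bound) into a deterministic exponential-type class. Second, the time-bounded Babai--Fortnow--Lund theorem: arithmetizing such a deterministic computation and running the sum-check $\mathsf{IP}$ protocol, with small circuits supplied for the prover, puts it into $\mathsf{MA}$ with verifier time $\poly(t(n))$ and Merlin messages of length $\poly(t(n))$. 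Each of these two steps costs one polynomial blow-up in the relevant \emph{length} parameter; together with the generator of Step~3 this is what forces the threefold composition $s_2\circ s_2\circ s_2$, and the constant $e$ together with the hardening $s_2(n)=s(en)^e$ absorb the polynomial overheads incurred here.

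\emph{Step 3 (derandomize and diagonalize).} Feeding a hard truth table from Step~1 into the Nisan--Wigderson/Impagliazzo--Wigderson construction yields, from a function on $m$ variables of circuit complexity $\ge S_3(n)^{2g}$, a pseudorandom generator with seed length $O(\log S_3(n))=O(m)$ fooling circuits of size $\approx S_3(n)$, computable in time $\poly(2^m)=\poly(t(n))$ given the truth table; here the hypothesis $t(n)\ge S_3(n)^d$ is exactly what guarantees the truth tables are long enough ($2^m=\poly(t(n))$ bits) for the generator to have the required stretch, and the exponent $2g$ is precisely the hardness needed to fool the circuit sizes arising in Step~2. Applying this generator to the $\mathsf{MA}$-verifier from Step~2 removes Arthur's randomness at the cost of another $\poly(t(n))$ factor, so the whole computation becomes: one nondeterministic sweep guessing $x_n$'s hard witness together with Merlin's message, followed by a deterministic $\poly(t(n))$-time check over all pseudorandom seeds. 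This yields $\NTIME(O(t(n)^e))\subseteq\NTIME(t(n)^{1+o(1)})$, which (choosing $e$ above the constants hidden in the $\poly$'s) contradicts the nondeterministic time hierarchy theorem; the bound $s(n)<2^{n/e}/n$ is used both so that hard functions exist (a counting bound on the number of size-$s$ circuits) and so that the containment hypothesis is not vacuous. The main obstacle, and the bulk of the work, is the careful accounting in Steps~2--3 — showing that the Karp--Lipton/Meyer collapse, the Babai--Fortnow--Lund $\mathsf{MA}$-simulation, and the Nisan--Wigderson derandomization each incur only a polynomial length blow-up and a $\poly(t)$ time blow-up, so that they compose to exactly $S_3(n)^{2g}$ and so that $t(n)\ge S_3(n)^d$ is precisely the slack the final diagonalization needs — together with the (standard but delicate) point that every verified witness of $M$ on $x_n$ is already hard, which keeps the simulation inside nondeterministic rather than $\Sigma_2$ time.
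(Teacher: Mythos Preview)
The paper does not prove this lemma; it is quoted as Lemma~4.1 of Murray--Williams~\cite{MurrayWilliams:CircuitLB} and used as a black box (the only proof that follows it in the paper is the derivation of Lemma~\ref{lem:sewl} \emph{from} it). So there is nothing in the paper to compare your proposal against.

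That said, your sketch is a reasonable outline of the Murray--Williams argument and of the IKW easy-witness method it refines: the absence of small witness circuits yields, nondeterministically, truth tables of high circuit complexity; the hypothesis $\NTIME(O(t^e))\subseteq\SIZE(s)$ collapses $\NTIME$ at the relevant scale into an $\mathsf{MA}$-type class via a scaled Karp--Lipton step and an $\mathsf{IP}=\PSPACE$/BFL-style interactive protocol; the hard truth tables drive a Nisan--Wigderson generator that derandomizes the verifier; and one finishes by contradicting the nondeterministic time hierarchy. The threefold composition $s_2\circ s_2\circ s_2$ and the exponents $e,g,d$ arise exactly as you say, from the polynomial overheads of these three ingredients. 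One point to watch in a full write-up is the ``infinitely many $n$'' in Step~1: as stated in the paper, the definition of witness circuits is for \emph{every} accepted~$x$, so its negation gives only a single bad input; the Murray--Williams proof handles this by showing directly that the hypothesis lets one \emph{construct} a small witness circuit for each accepted~$x$, rather than by the infinitely-often diagonalization your phrasing suggests.
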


\begin{proof}[Proof of Lemma~\ref{lem:sewl} from Lemma~\ref{lem:murraywilliams}]
We start noting that there is a non-deterministic machine~$U$ that
decides the problem~$Q_0$ defined in Section~\ref{sec:universal} in
time $O(|x|+|M| \cdot t^2)$ on input $\langle M,x,t\rangle$: after
reading the input, guess the non-deterministic choices of~$M$ and
deterministically in time $c_M \cdot t^2$ simulate the computation
path of~$M$ on input~$x$ as determined by those choices, where~$c_M$
is a simulation overhead constant that depends only on~$M$ and that we
may assume is at most~$|M|$.

Assume $\NTIME(\poly(t(n))) \subseteq \Ppoly$. Fix~$c \in \N$
with $c \geq 1$ and an $\NTIME(t(n)^c)$-machine~$M$. We intend to
apply Lemma~\ref{lem:murraywilliams} to~$M$ for a suitably
chosen~$s(n)$, with~$t(n)^c$ in the role of $t(n)$. For that, we will
need to show that $\NTIME(O(t(n)^{ce})) \subseteq \SIZE(s(n))$ for the
chosen~$s(n)$, where $e \geq 1$ is the first of the two
constants in Lemma~\ref{lem:murraywilliams}.

The restriction of~$U$ to inputs of the
form $\langle M,x,t(|x|)^{ce+1}\rangle$ runs in
time $O(|x|+\penalty10000 |M| \cdot t(|x|)^{2ce+2})$. Therefore, the set of
pairs $\langle M,x\rangle$ such that $U$~accepts on
input $\langle M,x,t(|x|)^{ce+1}\rangle$ is in $\NTIME(\poly(t(n)))$,
so by the assumption, it is decided by circuits of
size $p(|\langle M,x \rangle|)$ for a suitable polynomial~$p(n)$.

  Now, choose~$s(n)$ as a polynomial such that for every
  non-deterministic Turing machine~$M$ and every~$x$ that is
  sufficiently long with respect to~$M$ it holds
  that $p(|\langle M,x \rangle|) < s(|x|)$. We verify
  that $\NTIME(O(t(n)^{ce})) \subseteq \SIZE(s(n))$: if~$B$ is a set
  in $\NTIME(O(t(n)^{ce}))$ and~$M$ is a non-deterministic
  Turing machine that witnesses this, then, for sufficiently long~$x$,
  we have that $x$ is in~$B$ if and only if~$U$ accepts
  on $\langle M,x,t(|x|)^{ce+1}\rangle$. Hence, by the choice
  of~$s(n)$, the set~$B$ is in $\SIZE(s(n))$.

  The requirements of Lemma~\ref{lem:murraywilliams}
  that~$s(n) < 2^{n/e}/n$ and~$t(n)^c \geq s_2(s_2(s_2(n)))^d$ for a
  sufficiently large constant~$d \in \N$ are obviously met
  because~$s(n)$ is polynomially bounded and~$t(n)$ is
  superpolynomial.  Lemma~\ref{lem:murraywilliams} applied to~$s(n)$
  and~$t(n)^c$ then gives that~$M$ has witness circuits of
  size~$s_2(s_2(s_2(n)))^{2g}$, where~$g \geq 1$ is the second of the
  two constants in Lemma~\ref{lem:murraywilliams}. Since~$s(n)$ is
  polynomially bounded, also this function is polynomially
  bounded. Thus,~$M$ has polynomial-size witness circuits.
\end{proof}

Lemma~\ref{lem:sewl} enables a $\forall\Pi^{1,b}_1$-formalization
of $\NTIME(\poly(r(x)))\not\subseteq\Ppoly$:

\begin{definition}
For an explicit~$\NTIME(\poly(r(x)))$-machine~$M$ with one input-tape
and without oracles define
\begin{equation*}
\begin{array}{lcl}
\gamma_M^c&:=&\forall n{\in}\Log\  \exists C{<}2^{n^c}\
\forall x{<}2^n\ \exists D{<}2^{n^c}\  \forall_2 Y \\
&&\quad (C(x){=}0\ \to\ \neg\q{$Y$ is an accepting
computation of $M$ on $x$})\ \wedge\\
&&\quad (C(x){=}1\ \to\ \q{$D(\cdot)$ is an accepting computation
of $M$ on $x$}).
\end{array}
\end{equation*}
Let~$M_r$ be the explicit~$\NTIME(\poly(r(x))))$-machine of
Lemma~\ref{lem:Mr}. Define
\begin{equation*}
\q{$\NTIME(\poly(r(x)))\not\subseteq\Ppoly$} \ := \ \big\{
\neg\gamma^c_{M_r}\mid c\in\N \big\}.
\end{equation*}
\end{definition}

The following is the analogue of Lemma~\ref{lem:betaalpha} and is
similarly proved.

\begin{lemma}\label{lem:gammaalpha}
  For every~$c\in\N$ and every
  explicit~$\NTIME(\poly(r(x)))$-machine~$M$ with one input-tape and
  without oracles,~$\S^1_2(\alpha)$
  proves~$(\gamma^c_M\to\alpha^c_M)$.
\end{lemma}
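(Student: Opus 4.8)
The plan is to mimic the proof of Lemma~\ref{lem:betaalpha}. The two sentences $\gamma^c_M$ and $\alpha^c_M$ share the leading quantifier block $\forall n{\in}\Log\ \exists C{<}2^{n^c}\ \forall x{<}2^n$; recall $\alpha^c_M = \alpha^c_\psi$ for $\psi(x)$ the formula~\eqref{eq:comp2}, which has a single existential set quantifier. So, working in $\S^1_2(\alpha)$ and assuming $\gamma^c_M$, the idea is that the very circuit $C$ supplied by $\gamma^c_M$ at a given $n{\in}\Log$ already witnesses $\alpha^c_M$ at $n$ (the bound matches, since $\gamma^c_M$ yields $C$ with $C{<}2^{n^c}$ while $\alpha^c_M$ only requires $C{\le}2^{n^c}$). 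The one thing to check is that the extra data furnished by $\gamma^c_M$ for each accepted $x{<}2^n$ — a circuit $D{<}2^{n^c}$ whose truth table codes an accepting computation — can be turned, inside $\S^1_2(\alpha)$, into an honest set witnessing the existential set quantifier of $\psi(x)$. This is exactly what $\Delta^b_1(\alpha)$-comprehension delivers, since the truth table of $D$ is (quantifier-free) $\PV$-definable from $D$.

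Concretely, I would fix $x{<}2^n$ and split on $C(x)$. If $C(x){=}0$, then $\gamma^c_M$ gives, for every set $Y$, that \q{$Y$ is an accepting computation of $M$ on $x$} is false; that is, $\psi(x)$ is false, matching $C(x){=}0$. If $C(x){=}1$, then $\gamma^c_M$ gives a circuit $D{<}2^{n^c}$ for which \q{$D(\cdot)$ is an accepting computation of $M$ on $x$} holds. Fixing a bound $z$ at least as large as all the (polynomially bounded) quantifier bounds of that $\Pi^b_1(\alpha)$-formula and at least $2^{|D|}$ — or, equivalently, reading off the computation bound from the explicitness of $M$ as in~\eqref{eqn:referredtolater}, with the witnessing $\PV$-expression $p(r(x))$ in place of the usual term — I would use $\Delta^b_1(\alpha)$-comprehension to form the set $Y := \{u{\le}z : \eval(D,u){=}1\}$. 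Since \q{$D(\cdot)$ is an accepting computation of $M$ on $x$} is by definition \q{$Y$ is an accepting computation of $M$ on $x$} with each atom $t{\in}Y$ replaced by $\eval(D,t){=}1$, and since that $\Pi^b_1(\alpha)$-formula mentions $Y$ only through such atoms, the set $Y$ represents $D(\cdot)$ and the substitution gives that \q{$Y$ is an accepting computation of $M$ on $x$} holds, i.e.\ $\psi(x)$ holds, matching $C(x){=}1$. As $x$ and then $n$ were arbitrary, $\alpha^c_M$ follows.

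I do not expect a genuine obstacle: the argument is meant to be as short as that of Lemma~\ref{lem:betaalpha}. The two points needing a word of care are (i) pinning down a legitimate bound $z$ for the comprehension forming $Y$, and (ii) the purely syntactic remark — via the substitution notation fixed in the preliminaries, together with Remark~\ref{rem:setbound} if one wishes to be fastidious about the set-sort bound — that \q{$D(\cdot)$ is an accepting computation of $M$ on $x$} and \q{$Y$ is an accepting computation of $M$ on $x$} are interchangeable once $Y$ represents $D(\cdot)$. Neither point interacts with the fact that $M$ is an explicit $\NTIME(\poly(r(x)))$-machine rather than a generic explicit $\NEXP$-machine: the argument is entirely oblivious to the runtime bound, so the same proof yields Lemma~\ref{lem:gammaalpha}.
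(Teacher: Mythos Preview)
Your proposal is correct and takes essentially the same approach as the paper: the paper simply remarks that this is the analogue of Lemma~\ref{lem:betaalpha} and is similarly proved, namely that the existential set-quantifier in~$\alpha^c_M$ is witnessed by~$D(\cdot)$, which exists by~$\Delta^b_1(\alpha)$-comprehension. Your write-up spells out more of the routine details (the bound for comprehension, the substitution remark), but the argument is the same.
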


\begin{lemma} \label{lem:gamma}
For every $c\in\N$ and every
explicit $\NTIME(\poly(r(x)))$-machine~$M$ with one input-tape and
without oracles there is $d\in\N$ such that $\S^1_2(\alpha)$
proves $(\alpha^c_{M_r} \to \alpha^d_{M})$ and $(\gamma^c_{M_r}\to
\gamma^d_M)$.
 \end{lemma}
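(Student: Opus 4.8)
The plan is to transcribe, almost verbatim, the proof of Lemma~\ref{lem:beta}, replacing the universal machine $M_0$ by $M_r$, the witnessing term $t_M(x)$ by the $\PV$-function $f_M(x)$, the sentence $\beta^c_{M_0}$ by $\gamma^c_{M_r}$, and Lemma~\ref{lem:M0} by Lemma~\ref{lem:Mr}; the $\alpha$-implication and the $\gamma$-implication are proved in parallel. So fix $c\in\N$ and an explicit $\NTIME(\poly(r(x)))$-machine $M$ with one input tape and no oracles, and let $F_M,G_M$ be the quantifier-free $\PV(\alpha)$-formulas and $f_M\in\PV$ supplied by Lemma~\ref{lem:Mr}. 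Arguing in $\S^1_2(\alpha)$, assume $\alpha^c_{M_r}$ (resp.\ $\gamma^c_{M_r}$), let $n\in\Log$ be given, and fix $d\in\N$ large enough that $|\langle M,x,f_M(x)\rangle|<n^d$ for all $x<2^n$ --- this is possible since $f_M\in\PV$ has polynomially bounded output length. Put $m:=n^d$, let $C_0<2^{m^c}$ be the circuit that $\alpha^c_{M_r}$ (resp.\ $\gamma^c_{M_r}$) produces when its outermost quantifier is instantiated to $m$, and by Lemma~\ref{lem:pvcircuit} fix a circuit $C$ with $C(x)\leftrightarrow C_0(\langle M,x,f_M(x)\rangle)$ for all $x<2^n$. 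This $C$ is the witness for $\alpha^e_M$, resp.\ the first witness circuit for $\gamma^e_M$, for a suitable $e\in\N$. In the $\alpha$-case the verification that $C$ works is verbatim that in Lemma~\ref{lem:beta}, using Lemma~\ref{lem:Mr}.\ref{lem:Mr.a} and \ref{lem:Mr.b} (equivalently, the equivalence in Lemma~\ref{lem:Mr}.\ref{lem:Mr.c}).

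The substantive point is the $\gamma$-case, and here $\gamma$ is in fact more convenient than $\beta$: its second witness circuit lies inside the $\forall x$, so it suffices to exhibit, for each fixed $x<2^n$ with $C(x)=1$, a single circuit $D$ whose size is bounded by a polynomial in $n$ not depending on $x$. For such an $x$ we have $C_0(z)=1$ where $z:=\langle M,x,f_M(x)\rangle$, so $\gamma^c_{M_r}$ supplies a circuit $D_0<2^{m^c}$ with $D_0(\cdot)$ an accepting computation of $M_r$ on $z$; by Lemma~\ref{lem:Mr}.\ref{lem:Mr.b}, $G_M(D_0(\cdot),x,\cdot)$ is then an accepting computation of $M$ on $x$. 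Substituting the circuit $D_0$ for the set variable turns $G_M$ into a quantifier-free $\PV$-formula $F^*(D_0,x,v)$, so Lemma~\ref{lem:pvcircuit} yields a polynomial-size circuit $E$ with $E(D_0,x,v)\leftrightarrow F^*(D_0,x,v)$, and we take $D:=E_{D_0,x}$, so that $D(\cdot)=G_M(D_0(\cdot),x,\cdot)$ is the desired accepting $M$-computation on $x$. One checks here that $D_0$, $x$ and the indices $v$ of an $M$-computation on $x$ all have polynomial length in $n$: $|D_0|\le m^c=n^{dc}$, $|x|\le n$, and an accepting computation of $M$ on $x$ --- while possibly exponentially large --- is bounded by $\langle p_M(r(x)),p_M(r(x)),|M|\rangle$ (with $p_M$ the polynomial witnessing $M$), which has polynomial length in $n$; hence $|D|\le|E|$ is polynomial in $n$ and $D<2^{n^e}$ for a suitable $e$. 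The remaining clause is exactly as in Lemma~\ref{lem:beta}: if $C(x)=0$ then $C_0(z)=0$, so by $\gamma^c_{M_r}$ there is no accepting computation of $M_r$ on $z$, whence no accepting computation of $M$ on $x$, since any such $Z$ would, via $F_M(Z,x,\cdot)$ (which exists by $\Delta^b_1(\alpha)$-comprehension) and Lemma~\ref{lem:Mr}.\ref{lem:Mr.a}, give one for $M_r$ on $z$. Finally one takes $d$ to be the larger of the two exponents $e$ obtained, using the trivial $\S^1_2(\alpha)$-provable implications $\alpha^e_M\to\alpha^d_M$ and $\gamma^e_M\to\gamma^d_M$ for $e\le d$.

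There is no genuine obstacle beyond bookkeeping: the argument is a line-by-line adaptation of the proof of Lemma~\ref{lem:beta}, with the $\beta$-quantifier prefix replaced by the (more permissive) $\gamma$-prefix. The only point deserving care is the uniformity assertion in the $\gamma$-case --- that the circuit $D$ built for a given $x$ has size bounded by a fixed polynomial in $n$ independent of $x$ --- and this is delivered precisely by combining Lemma~\ref{lem:pvcircuit} with the circuit-restriction operation $E\mapsto E_{D_0,x}$, once one has observed that $D_0$, $x$ and the relevant computation indices all have length $\poly(n)$.
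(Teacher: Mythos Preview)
Your proof is correct and follows essentially the same approach as the paper: the paper's proof is likewise a direct transcription of the proof of Lemma~\ref{lem:beta}, treating only the $\gamma$-case explicitly, and constructs $C$ and $D$ in the same way via Lemma~\ref{lem:Mr} and Lemma~\ref{lem:pvcircuit}. Your presentation is in fact slightly more careful than the paper's in two respects: you make explicit that substituting the circuit $D_0$ for the set variable in $G_M$ yields a genuine quantifier-free $\PV$-formula (so that Lemma~\ref{lem:pvcircuit} applies), and you verify that the computation indices have length $\poly(n)$ because $r\in\PV$ forces $|r(x)|$ to be polynomially bounded.
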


\begin{proof} This is proved similarly as Lemma~\ref{lem:beta}. We only treat the $\gamma$-case. Choose $f_M(x)\in\PV$ according to Lemma~\ref{lem:Mr}.
Argue in~$\S^1_2(\alpha)+\gamma^c_{M_r}$. Let~$n \in \Log$ be given.
Choose~$e\in\N$ such that~$|\langle M,x,f_M(x)\rangle| < n^{e}$ for
all~$x < 2^n$. Choose~$C_0$ witnessing~$\gamma^c_{M_r}$
for $m:=n^{e}$.  Choose a circuit~$C$ such that $C(x)=C_0(\langle
M,x,f_M(x)\rangle)$ for all $x < 2^n$.  We shall choose~$d$ large enough
such that $C \le 2^{n^d}$ and choose~$C$ to witness the first
existential quantifier in~$\gamma^d_{M}$ for~$n$.  To verify this
choice, let $x < 2^n$ be given.

  If~$C(x) = 0$, then there are no accepting computations of~$M_r$
  on~$\langle M,x,f_M(x)\rangle$. By
  Lem\-ma~\ref{lem:Mr}.\ref{lem:Mr.a}
  and~$\Delta^b_1(\alpha)$-comprehension, there are no accepting
  computations of~$M$~on~$x$. If~$C(x) = 1$, then there is a
  circuit~$D_0 < 2^{m^c}$ such that~$D_0(\cdot)$ is an accepting
  computation of~$M_r$ on~$\langle M,x,f_M(x)\rangle$. By
  Lemma~\ref{lem:Mr}.\ref{lem:Mr.b},~$G_M(D_0(\cdot),x,\cdot)$ is an
  accepting computation of~$M$ on~$x$.  By Lemma~\ref{lem:pvcircuit}
  there is a circuit~$D$ such that~$\big(D(u)\leftrightarrow
  G_M(D_0(\cdot),x,u)\big)$ for all~$u \le \langle
  p_M(r(x)),p_M(r(x),|M|)\rangle$ where~$p_M$ is a polynomial such
  that~$p_M(r(x))$ witnesses~$M$. Choose~$d\in\N$ large enough such
  that~$D < 2^{n^d}$.
\end{proof}

Finally, we are in the position to verify that the formulas considered
formalize the intended circuit lower bound.

\begin{proposition}\label{prop:gammatruth}
The following are equivalent.
\begin{enumerate}[label=(\alph*), ref=\alph*] \itemsep=0pt
\item $\NTIME(\poly(r(x)))\not\subseteq\Ppoly$.
\item $\big\{ \neg\alpha^c_{M_r}\mid c \in \N \big\}$ is true.
\item $\big\{ \neg\alpha^c_{M}\mid c\in\N \big\}$ is true for some
  explicit $\NTIME(\poly(r(x)))$-machine $M$.
  \item $\big\{ \neg\gamma^c_{M}\mid c\in\N \big\}$ is true for some
  explicit $\NTIME(\poly(r(x)))$-machine $M$.
  \item $\big\{ \neg\gamma^c_{M_r}\mid c \in \N \big\}$ is true.
\end{enumerate}
\end{proposition}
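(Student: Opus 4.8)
The plan is to follow the proof of Proposition~\ref{prop:betatruth} almost verbatim, replacing the $\NEXP$-completeness of $Q_0$ by the universality of $M_r$ furnished by Lemma~\ref{lem:Mr}.\ref{lem:Mr.c}, and the Easy Witness Lemma (Lemma~\ref{lem:ewl}) by its barely-superpolynomial-time analogue, Lemma~\ref{lem:sewl}. I would establish the two cycles (a)$\Rightarrow$(b)$\Rightarrow$(c)$\Rightarrow$(a) and (a)$\Rightarrow$(e)$\Rightarrow$(d)$\Rightarrow$(a). Two bookkeeping facts make this smooth. First, by the conventions fixed at the start of this section, a language belongs to $\NTIME(\poly(r(x)))$ exactly when it equals $L(M)$ for some explicit $\NTIME(\poly(r(x)))$-machine $M$. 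Second, in the standard model $\{\neg\alpha^c_M\mid c\in\N\}$ is true iff $L(M)\notin\Ppoly$, and similarly for $\gamma$ in place of $\alpha$: in $\gamma^c_M$ the $\forall_2 Y$-clause pins the circuit $C$ down to decide $L(M)$ on length-$n$ inputs, so $\gamma^c_M$ true forces $L(M)\in\SIZE[n^c]$ (hence $\{\neg\gamma^c_M\mid c\in\N\}$ true implies, but is not implied by, $L(M)\notin\Ppoly$).

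Next the routine steps. The implications (b)$\Rightarrow$(c) and (e)$\Rightarrow$(d) are immediate, since $M_r$ is itself an explicit $\NTIME(\poly(r(x)))$-machine. For (c)$\Rightarrow$(a): if $\{\neg\alpha^c_M\mid c\in\N\}$ is true, then $L(M)\in\NTIME(\poly(r(x)))\setminus\Ppoly$. For (a)$\Rightarrow$(b) and (a)$\Rightarrow$(e) I argue contrapositively and in one stroke: if $\alpha^c_{M_r}$ or $\gamma^c_{M_r}$ is true for some $c$, then $L(M_r)\in\Ppoly$; but by Lemma~\ref{lem:Mr}.\ref{lem:Mr.c}, for every explicit $\NTIME(\poly(r(x)))$-machine $M$ the map $x\mapsto\langle M,x,f_M(x)\rangle$ is a many-one reduction of $L(M)$ to $L(M_r)$ computed by a $\PV$-function with $M$ hardwired, hence of polynomial size, so $\NTIME(\poly(r(x)))\subseteq\Ppoly$ and (a) fails.

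The one substantial implication is (d)$\Rightarrow$(a), which is where Lemma~\ref{lem:sewl} enters, and which I expect to be the main obstacle. Suppose (d) holds, witnessed by an explicit $\NTIME(\poly(r(x)))$-machine $M$, and suppose for contradiction $\NTIME(\poly(r(x)))\subseteq\Ppoly$. Writing $r(x)=t(|x|)$ (legitimate by (r1)), one checks from (r0)--(r4) that $t$ is increasing (by (r2)), time-constructible (by (r0), evaluating $r$ at the length-$n$ input $2^{n-1}$), and superpolynomial (by (r4)); hence $\NTIME(\poly(t(n)))=\NTIME(\poly(r(x)))$ and Lemma~\ref{lem:sewl} applies. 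From $L(M)\in\Ppoly$ we obtain circuits of size $n^{c_1}$ deciding $L(M)$ at each length $n$, and from Lemma~\ref{lem:sewl} applied to $M$ (a $\poly(t(n))$-time machine) we obtain witness circuits of size $n^{c_2}$ for $M$ — which may depend on $x$, exactly as permitted by the quantifier structure of $\gamma$. Choosing $c$ a little larger than $\max\{c_1,c_2\}$ to absorb the circuit-encoding overhead, these two facts say precisely that $\gamma^c_M$ is true, contradicting (d).

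The care required in the last step is only to align the notion ``$M$ has witness circuits of size $s(n)$'' from Lemma~\ref{lem:sewl} with the clause ``$D(\cdot)$ is an accepting computation of $M$ on $x$'' appearing in $\gamma^c_M$ — recalling that $D(\cdot)$ abbreviates the truth table of $D$, this is a direct transcription — and to verify that the hypotheses of Lemma~\ref{lem:sewl} on $t(n)$ indeed follow from conditions (r0)--(r4). Everything else is a line-by-line copy of the proof of Proposition~\ref{prop:betatruth}.
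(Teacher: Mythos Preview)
Your proof is correct, but the decomposition differs from the paper's. The paper proves a single cycle (a)$\Rightarrow$(b)$\Rightarrow$(c)$\Rightarrow$(d)$\Rightarrow$(e)$\Rightarrow$(a): it gets (c)$\Rightarrow$(d) from Lemma~\ref{lem:gammaalpha} (which gives $\gamma^c_M\to\alpha^c_M$ provably in $\S^1_2(\alpha)$), gets (d)$\Rightarrow$(e) from Lemma~\ref{lem:gamma} (which gives $\gamma^c_{M_r}\to\gamma^d_M$), and then invokes Lemma~\ref{lem:sewl} only for (e)$\Rightarrow$(a), i.e., for the universal machine $M_r$. You instead run two cycles as in Proposition~\ref{prop:betatruth}, arguing (a)$\Rightarrow$(e) directly via hardness of $L(M_r)$ and (d)$\Rightarrow$(a) by applying Lemma~\ref{lem:sewl} to an arbitrary $M$. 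Your route is slightly more elementary in that it works purely semantically in the standard model and avoids citing the formalized Lemmas~\ref{lem:gammaalpha} and~\ref{lem:gamma}; the paper's route is more systematic in that it reuses the infrastructure already built for the consistency proof. Both are fine, and your verification that (r0)--(r4) yield the hypotheses on $t(n)$ in Lemma~\ref{lem:sewl} matches the paper's.
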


\begin{proof}
  To see that (a) implies~(b), assume~(b) fails, so~$\alpha^c_{M_r}$
  is true for some~$c\in\N$.  Then the problem accepted by~$M_r$ is
  in $\SIZE[n^c]$.  By Lemma~\ref{lem:Mr} this problem
  is $\NTIME(\poly(r(x)))$-hard under polynomial time reductions.
  Since $\Ppoly$ is
  downward-closed under polynomial-time reductions, (a) fails.  The
  claim that (b) implies~(c) is trivial since~$M_r$ is an
  explicit $\NTIME(\poly(r(x)))$-machine.  That (c) implies~(d)
  follows from Lemma~\ref{lem:gammaalpha}.  That~(d) implies~(e)
  follows from Lemma~\ref{lem:gamma}. That (e) implies~(a) follows
  from Lemma~\ref{lem:sewl}: by~(r1) there is a function~$t(n)$ such
  that $t(|x|)=r(x)$ for every~$x$;
  then $\NTIME(\poly(r(x)))=\NTIME(\poly(t(n)))$ where the time-bound
  on the left is written as a function of the input~$x$ and on the
  right as a function of its length $n=|x|$; further, $t(n)$ is
  time-constructible by (r0) and~(r1),
  increasing by~(r2) and superpolynomial by~(r4).
\end{proof}

\subsection{Consistency}

\noindent For a theory~$\T$ that extends~$\S^1_2(\alpha)$, the
new A,B-statements are the following:
\begin{tabbing}
\hspace{0.3cm} \= A$_r$: \hspace{0.1cm} \= $\T+\{\neg \alpha^c_M$ \= $\mid c\in\N\}$ is consistent for some explicit $\NTIME(\poly(r(x)))$-machine $M$, \\
\> B$_r$: \hspace{0.1cm} \> $\T+\{\neg \gamma^c_M$ \> $\mid c\in\N\}$ is consistent for some explicit $\NTIME(\poly(r(x)))$-machine $M$, \\
\> A0$_r$: \> $\T+\{\neg \alpha^c_{M_r}$ \> $\mid c\in\N\}$ is consistent. \\
\> B0$_r$: \> $\T+\{\neg \gamma^c_{M_r}$ \> $\mid c \in \N\}$ is consistent.
\end{tabbing}
To define the corresponding C-statement, we say that the bounding term
of a~$\hat\Sigma^{1,b}_1$-formula~$\psi = \psi(x)$ is \emph{polynomial
  in}~$r(x)$ if~$\S^1_2$
proves~$\bt(\psi) \leq p(r(x))$ for some polynomial~$p(n)$. Then:
\begin{tabbing}
  \hspace{0.3cm} \= C$_r$: \hspace{0.1cm}
  \= $\T+\{\neg \alpha^c_{\psi}$ \=~$\mid c \in \N\}$ is consistent for
  some~$\hat\Sigma^{1,b}_1$-formula~$\psi=\psi(x)$ whose \\
  \> \> bounding term is polynomial in~$r(x)$.
\end{tabbing}

Before we prove the analogue of Theorem~\ref{thm:mainfull} we state
the proof complexity lower bound on which it is based. Recall the
Pigeonhole Principle formula~$\PHP(x)$ from the proof of
Theorem~\ref{thm:alphaNEXP}. The first strong lower bounds on the
provability of~$\PHP(x)$ were due to Ajtai~\cite{Ajtai:PHP}; here we need
the later quantitative improvements from~\cite{BIKPPW:PHP}. This can be
called the~\emph{gem} of proof complexity. We use it in the following
form. Recall that a function is called length-superpolynomial when it
satisfies~(r4).

\begin{theorem}[Gem Theorem] \label{thm:ABIPKPW} For
  every length-superpolynomial $\PV$-function $s(x)$,
  the theory~$\V^0_2$ does not
  prove~$\PHP(s(x))$.
\end{theorem}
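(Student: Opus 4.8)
The plan is to derive the statement from the classical exponential lower bound on bounded-depth Frege refutations of the propositional pigeonhole principle --- this is exactly what is referred to as the Gem and recorded in \cite{BIKPPW:PHP} --- by means of the standard propositional translation for $\V^0_2$; in effect, we run the proof of \cite[Corollary~12.5.5]{Krajicek:book} with the \emph{quantitative} bound of \cite{BIKPPW:PHP} in place of Ajtai's superpolynomial bound, and applied to $\PHP(s(x))$ rather than $\PHP(x)$.

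Assume for contradiction that $\V^0_2$ proves the $\forall\Pi^{1,b}_1$-sentence $\forall x\ \PHP(s(x))$. Recall (see e.g.\ \cite[Ch.~12]{Krajicek:book}) that $\V^0_2$-proofs of $\Pi^{1,b}_1$-statements translate into \emph{bounded-depth} Frege proofs of the corresponding propositional translations: from the fixed $\V^0_2$-proof we obtain a constant depth $d$ (bounded by the quantifier complexity of the finitely many formulas occurring in the proof) and a constant $k$ such that for every $n\in\N$ there is a depth-$d$ Frege proof $\pi_n$ of the propositional translation $\|\PHP(s(x))\|_n$ of size at most $2^{|n|^k}$. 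The size bound is as stated because, over the language with $\#$, a fixed $\V^0_2$-proof with free number parameter of value $n$ manipulates only numbers of value $\le 2^{|n|^{O(1)}}$, the translation of such a proof has size quasipolynomial in that bound, and (since $s\in\PV$ gives $s(n)\le 2^{|n|^{O(1)}}$) even the conclusion formula $\|\PHP(s(x))\|_n$ has size $\poly(s(n))\le 2^{|n|^{O(1)}}$, which is absorbed. Unwinding the translation, $\|\PHP(s(x))\|_n$ is a constant-depth tautology logically equivalent to the assertion that there is no injective total function from $\{0,\dots,s(n)+1\}$ into $\{0,\dots,s(n)\}$; that is, $\pi_n$ is a depth-$d$ Frege refutation of the usual propositional pigeonhole formula on $s(n)+2$ pigeons and $s(n)+1$ holes.

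Now invoke the Gem in its propositional form: for the fixed depth $d$ there is a $\delta=\delta(d)>0$ such that every depth-$d$ Frege refutation of the pigeonhole formula on $N+2$ pigeons and $N+1$ holes has size at least $2^{N^{\delta}}$ for all sufficiently large $N$ \cite{BIKPPW:PHP}. Since $s$ is length-superpolynomial it is unbounded, and indeed $s(n)\to\infty$, so for all large $n$ we may take $N:=s(n)$ and combine the two estimates to obtain $2^{s(n)^{\delta}}\le 2^{|n|^{k}}$, hence $s(n)^{\delta}\le |n|^{k}$, hence $s(n)\le |n|^{k/\delta}$ for all large $n$. This contradicts property~(r4): choosing any integer $c\ge k/\delta$, we have $s(x)>|x|^{c}\ge |x|^{k/\delta}$ for all $x$ with $|x|>n_c$. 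Hence $\V^0_2\not\vdash\PHP(s(x))$.

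The step that carries the real content, and the one I would expect to require the most care to state precisely, is the propositional translation for $\V^0_2$: that a proof of a $\Pi^{1,b}_1$-sentence yields Frege proofs of the translations of \emph{constant} depth (depending only on the proof, not on $n$) and of size quasipolynomial in the value of the free number parameter, uniformly in $n$. This is exactly where the particular theory $\V^0_2$ matters --- the analogous translation for $\V^1_2$ produces Frege proofs of unbounded depth, for which no super-quasipolynomial lower bound for $\PHP$ is available, so the argument would give nothing. It is also exactly where the hypothesis ``length-superpolynomial'' (rather than merely ``unbounded'') is used: for a merely length-polynomial $s$, say $s(n)=|n|$, the lower bound $2^{s(n)^{\delta}}$ would be no larger than the translation's upper bound $2^{|n|^{k}}$ and the two estimates would fail to conflict.
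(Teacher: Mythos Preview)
Your proposal is correct and follows essentially the same approach as the paper: both argue by contradiction, apply the Paris--Wilkie propositional translation to obtain constant-depth Frege proofs of size $2^{|n|^{O(1)}}$, invoke the exponential lower bound of \cite{BIKPPW:PHP} for depth-$d$ Frege and the pigeonhole principle, and derive a polynomial upper bound on $s(n)$ in terms of $|n|$ contradicting length-superpolynomiality. The paper instantiates the lower bound with the explicit exponent $1/6^d$ (from \cite[Theorem~12.5.3]{Krajicek:book}) and pins the translation step to \cite[Corollary~9.1.4]{Krajicek:book}, whereas you phrase the lower bound with an unspecified $\delta(d)$ and the translation more informally, but the argument is the same.
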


\begin{proof}
Consider the Paris-Wilkie propositional translations $F_n :=
\langle\PHP(s(n))\rangle_n$ for $n \in \N$;
see \cite[Definition~9.1.1]{Krajicek:book} in the form used
in \cite[Corollary~9.1.4]{Krajicek:book}. Assume for contradiction
that $\PHP(s(x))$ is provable in~$\V^0_2$. Then, there exist
constants $c,d \in \N$ such that for every sufficiently large
$n \in \N$, the propositional formulas~$F_n$ have Frege proofs of depth~$d$
and size~$2^{|n|^c}$: apply \cite[Corollary~9.1.4]{Krajicek:book} with the
function $f(x) = x\#x$ and note that~$\V^0_2$ is conservative over the
theory considered there: from a model of that theory, get a model
of~$\V^0_2$ by just adding all bounded sets that are definable by
bounded formulas.

  Now, let~$n \in \N$ be large enough to ensure this upper bound and
  at the same time such that~$s(n) > |n|^{6^d c}$, which exists
  because~$s(x)$ is length-superpolynomial.  Setting~$m := s(n)$, this
  means that the propositional formula~$\PHP^{m+1}_m := F_n$ has Frege
  proofs of depth~$d$ and size bounded by an exponential
  in~$m^{1/6^d}$. It is well-known that if~$m$ is sufficiently large,
  then this is false; see \cite[Theorem~12.5.3]{Krajicek:book}.
\end{proof}

Finally we can prove the analogue of Theorem~\ref{thm:mainfull}, which
entails Theorem~\ref{thm:NTIME}.

\begin{theorem}
  For~$\T = \V^0_2$, all statements C$_r$,
  A$_r$, A0$_r$, B$_r$, B0$_r$ are true.
\end{theorem}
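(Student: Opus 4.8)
The plan is to follow the proof of Theorem~\ref{thm:mainfull} step for step, with the Gem Theorem (Theorem~\ref{thm:ABIPKPW}) in place of the bounded-depth Frege lower bound for~$\PHP$, and with Lemmas~\ref{lem:gamma}, \ref{lem:gammaalpha}, \ref{lem:Mr} playing the roles of Lemmas~\ref{lem:beta}, \ref{lem:betaalpha}, \ref{lem:M0}. First I would record the analogue of Proposition~\ref{prop:statements}: for any~$\T\supseteq\S^1_2(\alpha)$, statement~A$_r$ implies~B$_r$ (and A0$_r$ implies B0$_r$) because Lemma~\ref{lem:gammaalpha} makes~$\neg\alpha^c_M$ imply~$\neg\gamma^c_M$, while Lemma~\ref{lem:gamma} together with compactness makes A$_r$ equivalent to A0$_r$ and B$_r$ equivalent to B0$_r$. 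Hence the theorem reduces to showing, for~$\T=\V^0_2$, that C$_r$ is true and that C$_r$ implies~A$_r$.

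For the implication C$_r\Rightarrow$A$_r$ over~$\V^0_2$ I would argue exactly as for C$\Rightarrow$A in Theorem~\ref{thm:mainfull}, using Lemma~\ref{lem:mc}. Given a~$\hat\Sigma^{1,b}_1$-formula~$\psi(x)$ with bounding term polynomial in~$r(x)$ such that~$\V^0_2+\{\neg\alpha^c_\psi\mid c\in\N\}$ is consistent, parts~(a) and~(b) of Lemma~\ref{lem:mc} produce an explicit~$\NEXP$-machine~$N_\psi$ for which~$\V^0_2$ proves that~$\psi(x)$ is equivalent to the existence of an accepting computation of~$N_\psi$ on~$x$, hence~$\V^0_2\vdash(\alpha^c_\psi\leftrightarrow\alpha^c_{N_\psi})$ for all~$c$. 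Parts~(c) and~(d) of Lemma~\ref{lem:mc} give that~$N_\psi$ is witnessed by a term~$r_\psi(x)$ with~$\S^1_2\vdash r_\psi(x)\le p_\psi(\bt_\psi(x),|x|)$; since~$\bt_\psi(x)\le q(r(x))$ for a polynomial~$q$ and~$|x|\le r(x)$ by~(r1),~(r2), the machine~$N_\psi$ is in fact an explicit~$\NTIME(\poly(r(x)))$-machine. So~$\V^0_2+\{\neg\alpha^c_{N_\psi}\mid c\in\N\}$ has the same models as~$\V^0_2+\{\neg\alpha^c_\psi\mid c\in\N\}$ and is therefore consistent, which is~A$_r$.

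The genuinely new step is the analogue of Theorem~\ref{thm:alphaNEXP}: C$_r$ is true for~$\V^0_2$. The obstacle I anticipate here is that the obvious candidate~$\psi(x)=\neg\PHP(r(x))$ does \emph{not} work --- from~$\alpha^c_\psi$ one would extract only a circuit deciding~$\PHP$ at the sparse arguments~$r(u)$, whereas the downward induction in the proof of Theorem~\ref{thm:alphaNEXP} needs~$\PHP$ decided on a full initial interval. The fix is to decouple the instance size from the induction index by pairing them into a single input: I would take
\[
\psi(x)\ :=\ \neg\PHP\big(\min(\pi_2(x),\,r(\pi_1(x)))\big),
\]
which is (logically equivalent to) a~$\hat\Sigma^{1,b}_1$-formula and whose bounding term is a polynomial in~$\min(\pi_2(x),r(\pi_1(x)))\le r(\pi_1(x))\le r(x)$ --- using~$\pi_1(x)\le x$ and~(r1),~(r2), all provably in~$\S^1_2$ --- so~$\psi$ is a legitimate witness for C$_r$. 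Assume for contradiction that~$\V^0_2\vdash\alpha^c_\psi$ for some~$c$. Arguing in~$\V^0_2$, fix~$a$, put~$n:=\max\{|\langle a,r(a)\rangle|,2\}\in\Log$, and let~$C<2^{n^c}$ be the circuit provided by~$\alpha^c_\psi$ at~$n$; then~$\langle a,m\rangle<2^n$ and~$\psi(\langle a,m\rangle)=\neg\PHP(m)$ for every~$m\le r(a)$, so the quantifier-free~$\PV(\alpha)$-formula~$E(m):=\neg C(\langle a,m\rangle)$ satisfies~$E(m)\leftrightarrow\PHP(m)$ for all~$m\le r(a)$. Since the construction in the proof of Theorem~\ref{thm:alphaNEXP} shows~$\V^0_2\vdash\PHP(0)$ and~$\V^0_2\vdash(\PHP(u)\to\PHP(u+1))$ uniformly in~$u$, substituting~$E$ for~$\PHP$ on~$\{0,\ldots,r(a)\}$ yields~$E(0)$ and~$\forall m{<}r(a)\,(E(m)\to E(m+1))$, and quantifier-free~$\PV(\alpha)$-induction up to~$r(a)$ gives~$E(r(a))$, i.e.~$\PHP(r(a))$. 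Thus~$\V^0_2+\alpha^c_\psi\vdash\PHP(r(x))$ with~$x$ free; as~$r$ is length-superpolynomial this contradicts the Gem Theorem, so~$\V^0_2\not\vdash\alpha^c_\psi$ for every~$c$, and by compactness~$\V^0_2+\{\neg\alpha^c_\psi\mid c\in\N\}$ is consistent. Together with the first two steps this establishes C$_r$, A$_r$, A0$_r$, B$_r$, and B0$_r$ for~$\V^0_2$.

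Besides the choice of witness formula, the remaining checks are routine: that~$\bt_\psi(x)\le q(r(x))$ and~$r(\pi_1(x))\le r(x)$ are provable in~$\S^1_2$, that the pairing bookkeeping ($\langle a,m\rangle<2^n$ for~$m\le r(a)$, and~$n\in\Log$) goes through, and that the equivalences and reductions of Section~\ref{sec:consfornexp} transport consistency over~$\V^0_2$ exactly as they did for~$\NEXP$. In particular no further appeal to the Easy Witness Lemma (Lemma~\ref{lem:sewl}) is needed in this direction; it enters only in the converse faithfulness statements recorded in Proposition~\ref{prop:gammatruth}.
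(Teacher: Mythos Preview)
Your proposal is correct and follows essentially the same approach as the paper. The structure --- establish the analogue of Proposition~\ref{prop:statements} via Lemmas~\ref{lem:gammaalpha} and~\ref{lem:gamma}, derive C$_r\Rightarrow$A$_r$ from Lemma~\ref{lem:mc} (using part~(c) and $|x|\le r(x)$ to land in $\NTIME(\poly(r(x)))$), and then prove C$_r$ directly by a paired pigeonhole formula reducing to the Gem Theorem --- is exactly what the paper does. The only cosmetic difference is the witness formula for~C$_r$: the paper takes $\psi(z)$ to be $\pi_2(z){\le}r(\pi_1(z))\wedge\neg\PHP(\pi_2(z))$, while you take $\neg\PHP(\min(\pi_2(x),r(\pi_1(x))))$; both cap the pigeonhole instance at $r$ of the first component so that the bounding term is polynomial in~$r$, and both yield a circuit deciding $\PHP$ on the full interval $\{0,\ldots,r(a)\}$ after fixing the first coordinate, which is precisely what the inductive argument needs.
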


\begin{proof}
  The analogue of Proposition~\ref{prop:statements} for
  the~A$_r$,B$_r$,C$_r$-statements has the same proof using
  Lemmas~\ref{lem:gammaalpha}, \ref{lem:gamma} in place of
  Lemmas~\ref{lem:betaalpha}, \ref{lem:beta}. Note that the claim
  that~A$_r$ implies~C$_r$ follows from the remark after
  Equation~\eqref{eqn:referredtolater}.  As in the proof of
  Theorem~\ref{thm:mainfull}, that~C$_r$ implies A$_r$
  for~$\T = \V^0_2$ follows from Lemma~\ref{lem:mc}.\ref{lem:mc.a}
  and~\ref{lem:mc}.\ref{lem:mc.b}. We also
  need~\ref{lem:mc}.\ref{lem:mc.c} along with~$r(x)\ge|x|$ by~(r1)
  and~(r2) to guarantee that the explicit~$\NEXP$-machine is an
  explicit~$\NTIME(\poly(r(x)))$-machine.

We are left to show that C$_r$ holds
  for~$\T=\V^0_2$. This is proved by tightening the choice of
  parameters in the argument that proved
  Theorem~\ref{thm:alphaNEXP}.

Consider the formula
  \begin{equation}
  y{\leq}r(x) \wedge \neg\PHP(y)
\end{equation}
and write this as $\psi = \psi(z)$, where $z = \langle x,y \rangle$;
i.e., $x = \pi_1(z)$ and $y = \pi_2(z)$ with $\pi_1$ and~$\pi_2$
as $\PV$-functions. The formula~$\psi(z)$ is logically equivalent to
a $\hat\Sigma^{1,b}_1$-formula whose bounding term is polynomial
in~$r(z)$ by (r1) and~(r2). We
claim that $\V^0_2 + \{\neg\alpha^c_{\psi} \mid c \in \N\}$ is
consistent, which will give~C$_r$.

  For the sake of contradiction, assume otherwise. By compactness,
  there exists $c \in \N$ such that~$\V^0_2$
  proves~$\alpha^c_{\psi}$. As in the proof of
  Theorem~\ref{thm:alphaNEXP}, we show that this implies
  that $\V^0_2$ proves~$\PHP(r(x))$, which contradicts the Gem
  Theorem by~(r4).

  Argue in~$\V^0_2$ and set~$n := \max\{|z|,2\}$,
  where~$z = \langle x,r(x) \rangle$. Then~$\alpha^c_{\psi}$ on~$n$
  gives a circuit~$C$ such that, for all~$u{\leq} z$ and $v{\leq}z$
  with~$\langle u,v\rangle{\leq}z$, we have
  \begin{equation*}
    \neg C(\langle u,v \rangle) \leftrightarrow (v{\leq}r(u) \to \PHP(v)).
  \end{equation*}

  \noindent Noting that~$\langle x,v\rangle{\leq}z$ for all~$v{\leq}r(x)$,
  fix~$u$ to~$x$ in the circuit~$C(\langle u,v \rangle)$ and get a
  circuit~$D(v)$ such that
  \begin{equation*}
  \forall v{\leq}r(x)\ (\neg D(v) \leftrightarrow \PHP(v)).
  \end{equation*}
Recall that~$\V^0_2$ proves that~$\PHP(x)$ is inductive.  Hence,
plugging~$\neg D(v)$ for~$\PHP(v)$ gives~$\PHP(r(x))$ by
quantifier-free~$\PV(\alpha)$-induction.
\end{proof}

\section{Magnification}\label{sec:magnification}

For this section, a~$\exists_2\Pi^b_1(\alpha)$-formula is
a $\hat\Sigma^{1,b}_1$-formula as in~\eqref{eqn:generichatSigma} in
which its
maximal $\Sigma^{1,b}_0$-subformula~$\varphi(\bar X,Y,\bar x)$ is
a $\Pi^b_1(\alpha)$-formula.

\begin{lemma}\label{lem:cprh}
  For every~$c\in\N$ and
  every~$\exists_2\Pi^b_1(\alpha)$-formula~$\psi(\bar x,y)$
  without free
  set variables,
  the theory~$\S^1_2(\alpha)+\beta^c_{M_0}$ proves
\begin{equation}\label{eq:CA}
\exists C\ \forall y{\le}z\ \big( C(y){=}1\leftrightarrow  \psi(\bar x,y)\big).
\end{equation}
\end{lemma}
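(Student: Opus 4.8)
The plan is to use $\beta^c_{M_0}$ to manufacture a circuit that decides $M_0$-acceptance, and then to route the $\hat\Sigma^{1,b}_1$-predicate $\psi$ through the universal machine $M_0$ by means of the formally-verified model-checkers. First I would reduce to a single free number variable: letting $\pi_1,\dots,\pi_{k+1}$ be the projection $\PV$-functions for $(k{+}1)$-tuples and $\psi'(w):=\psi(\pi_1(w),\dots,\pi_{k+1}(w))$, the formula $\psi'$ is still $\exists_2\Pi^b_1(\alpha)$ and $\S^1_2\vdash\psi'(\langle\bar x,y\rangle)\leftrightarrow\psi(\bar x,y)$. Applying Lemma~\ref{lem:mc} to $\psi'$ (its maximal $\Sigma^{1,b}_0$-subformula being $\Pi^b_1(\alpha)$) gives a one-input, oracle-free explicit $\NEXP$-machine $N$, witnessed by a term $t_N(w)$, with
\[
\S^1_2(\alpha)\vdash\ \psi'(w)\ \leftrightarrow\ \exists_2 Y\ \q{$Y$ is an accepting computation of $N$ on $w$}
\]
by part~(e) of that lemma. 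Plugging $N$ and $t_N$ into Lemma~\ref{lem:M0} and chaining with its part~(c) then yields
\[
\S^1_2(\alpha)\vdash\ \psi'(w)\ \leftrightarrow\ \exists_2 Z\ \q{$Z$ is an accepting computation of $M_0$ on $\langle N,w,t_N(w)\rangle$}.
\]

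Next I would argue inside $\S^1_2(\alpha)+\beta^c_{M_0}$. Fix the parameters $\bar x$ and the bound $z$ from \eqref{eq:CA}, and abbreviate $v(y):=\langle N,\langle\bar x,y\rangle,t_N(\langle\bar x,y\rangle)\rangle$. Since the pairing function and every term of the language are monotone in each argument, $v(y)\le v(z)$ for $y\le z$, so $\S^1_2$ proves $|v(y)|\le m$ for $m:=n^d$, where $n:=\max\{|\langle\bar x,z\rangle|,2\}\in\Log$ and $d\in\N$ is a suitable constant; note $m\in\Log$. Instantiating $\beta^c_{M_0}$ at $m$ supplies circuits $C_0,D_0<2^{m^c}$ such that, for all $v<2^m$ and all sets $Y$: $C_0(v){=}0$ implies that $Y$ is not an accepting computation of $M_0$ on $v$, and $C_0(v){=}1$ implies that $(D_0)_v(\cdot)$ is an accepting computation of $M_0$ on $v$. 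Using Lemma~\ref{lem:pvcircuit} on the quantifier-free $\PV$-formula $\eval(E,\langle N,\langle\bar x,y\rangle,t_N(\langle\bar x,y\rangle)\rangle){=}1$ in the variables $E,\bar x,y$, and then fixing the first inputs to $C_0$ and to $\bar x$ with the $C_{\bar x}$-construction, I obtain a circuit $C$ with $C(y)\leftrightarrow C_0(v(y))$ for all $y\le z$, with every auxiliary length parameter still in $\Log$ (bounded by a fixed power of $n$).

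It remains to verify this $C$. Fix $y\le z$ and put $w:=\langle\bar x,y\rangle$, so $v(y)=\langle N,w,t_N(w)\rangle<2^m$. If $C_0(v(y)){=}1$, then $(D_0)_{v(y)}(\cdot)$ is an accepting computation of $M_0$ on $v(y)$, and the set it names exists by $\Delta^b_1(\alpha)$-comprehension, so $\exists_2 Z\,\q{$Z$ is an accepting computation of $M_0$ on $v(y)$}$; by the equivalences of the first paragraph this gives $\psi'(w)$, hence $\psi(\bar x,y)$. If $C_0(v(y)){=}0$, then no set is an accepting computation of $M_0$ on $v(y)$, so the same equivalences give $\neg\psi'(w)$, hence $\neg\psi(\bar x,y)$. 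Therefore $C(y){=}1\leftrightarrow\psi(\bar x,y)$ for all $y\le z$, which is \eqref{eq:CA}. The step I expect to be most delicate is the quantitative bookkeeping: choosing the instantiation point $m$ for $\beta^c_{M_0}$ so that it lies in $\Log$ while still $v(y)<2^m$ uniformly for $y\le z$, and keeping the circuit-size parameters small throughout. Conceptually, the one point worth isolating is that it is the \emph{second} conjunct of $\beta^c_{M_0}$ (the witness circuit $D_0$) that delivers the implication $C_0(v){=}1\to\exists_2 Z\,\q{$Z$ is an accepting computation of $M_0$ on $v$}$, obtained by exhibiting the set $(D_0)_v(\cdot)$ via comprehension; this is precisely what makes $C(y){=}1\to\psi(\bar x,y)$ provable already in $\S^1_2(\alpha)$, with no need for set comprehension to realize the existential set quantifier.
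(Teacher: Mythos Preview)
Your proof is correct and follows essentially the same strategy as the paper: obtain an explicit $\NEXP$-machine for $\psi$ via Lemma~\ref{lem:mc}.\ref{lem:mc.e}, then use $\beta^c_{M_0}$ to produce the deciding circuit. The only difference is packaging: the paper invokes Lemmas~\ref{lem:betaalpha} and~\ref{lem:beta} to pass from $\beta^c_{M_0}$ to $\alpha^d_{N_\psi}$ and reads off the circuit directly from the latter, whereas you unfold those lemmas by routing through $M_0$ via Lemma~\ref{lem:M0}.\ref{lem:M0.c}, instantiating $\beta^c_{M_0}$ by hand, and using the witness circuit $D_0$ and $\Delta^b_1(\alpha)$-comprehension explicitly---which is exactly the content of those two lemmas' proofs.
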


\begin{proof} Argue in~$\S^1_2(\alpha)+\beta^c_{M_0}$.  For simplicity
  assume~$\bar x$ is empty.
   For~$\psi = \psi(y)$ choose~$M:=N_\psi$
  according to Lemma~\ref{lem:mc}. Note that since~$\psi$ does not
  have free set variables,~$M$ is without oracles.  By
  Lemma~\ref{lem:mc}.\ref{lem:mc.e}, the formula~$\psi(y)$ is
  equivalent to
\begin{equation*}
\exists_2 Y \q{$Y$ is an accepting computation of $M$ on $y$}.
\end{equation*}
By Lemmas \ref{lem:betaalpha} and~\ref{lem:beta} we
have~$\alpha^d_{M}$ for some $d\in\N$. Let~$z$ be given and
choose $n\in\Log$ with $|z|\le n$. Let $C$ witness $\alpha^d_{M}$
for~$n$. This $C$ witnesses \eqref{eq:CA}.
\end{proof}

It follows that over~$\S^1_2(\alpha)$ the circuit upper bound
statement~$\beta^c_{M_0}$ implies comprehension
for~$\exists_2\Pi^b_1(\alpha)$-formulas {\em without free set
  variables}. For later reference, we note that allowing free set
variables entails full~$\hat\Sigma^{1,b}_1$-comprehension:

\begin{lemma} \label{lem:specialcomp}
$\S^1_2(\alpha) + \exists_2\Pi^b_1(\alpha)$-comprehension proves
  $\V^1_2$.
\end{lemma}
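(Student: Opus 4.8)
The plan is to derive, over $\S^1_2(\alpha)$, full $\Sigma^{1,b}_1$-comprehension from the scheme of $\exists_2\Pi^b_1(\alpha)$-comprehension; since $\V^1_2$ is just $\V^0_2$ together with $\Sigma^{1,b}_1$-comprehension and Stage~1 below recovers $\V^0_2$ on the way, this yields the lemma. I would split the argument into two stages: first bootstrap $\V^0_2$ (i.e.\ $\Sigma^{1,b}_0$-comprehension), and then use the formally verified deterministic model-checkers of Lemma~\ref{lem:mcdet} to rewrite an arbitrary $\Sigma^{1,b}_1$-formula as an $\exists_2\Pi^b_1(\alpha)$-formula, provably in $\S^1_2(\alpha)+\V^0_2$.

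\emph{Stage 1: bootstrapping $\V^0_2$.} First, $\Pi^b_1(\alpha)$-comprehension with set parameters is the instance of $\exists_2\Pi^b_1(\alpha)$-comprehension~\eqref{eq:comprehensionscheme} in which the existential set variable does not occur in the matrix: since the empty set exists by $\Delta^b_1(\alpha)$-comprehension, $\exists_2 Y\,\pi$ is equivalent to $\pi$ when $Y$ is not free in $\pi$. Complementing via $\Delta^b_1(\alpha)$-comprehension then gives $\Sigma^b_1(\alpha)$-comprehension. One now climbs the bounded hierarchy by induction on $i$: assuming $\Sigma^b_i(\alpha)$-comprehension, a $\Pi^b_{i+1}(\alpha)$-formula $\forall u{\le}t\ \sigma(\bar X,\bar x,u,y)$ with $\sigma\in\Sigma^b_i(\alpha)$ is handled by first using $\Sigma^b_i(\alpha)$-comprehension to obtain a set $V$ with $\langle u,y\rangle{\in}V\leftrightarrow\sigma(\bar X,\bar x,u,y)$ on the relevant bounded range, after which $\forall u{\le}t\ \langle u,y\rangle{\in}V$ is a $\Pi^b_1(\alpha)$-formula in $y$ with the extra set parameter $V$, so $\Pi^b_1(\alpha)$-comprehension applies; $\Sigma^b_{i+1}(\alpha)$-comprehension follows by complementation. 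This yields $\Sigma^b_\infty(\alpha)$-comprehension, and since a set coding a bounded formula reduces $\Sigma^b_\infty(\alpha)$-induction to quantifier-free $\PV(\alpha)$-induction (provable in $\S^1_2(\alpha)$), we have recovered $\V^0_2$.

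\emph{Stage 2: rewriting $\Sigma^{1,b}_1$-formulas.} Working in $\S^1_2(\alpha)+\V^0_2$, let $\theta=\theta(\bar X,\bar x,y)$ be $\Sigma^{1,b}_1$. Putting it in prenex normal form --- collapsing its block of existential set quantifiers into one by set-pairing and bounding it via Remark~\ref{rem:setbound} --- one may assume $\theta=\exists_2 Y\,\psi(\bar X,Y,\bar x,y)$ with $\psi\in\Sigma^{1,b}_0$, which after the standard translation may be taken in the language $\PV(\alpha)$. I would apply Lemma~\ref{lem:mcdet} to $\psi$, treating $Y$ as one of the oracle set variables, to obtain the explicit $\PSPACE$-machine $M^{\bar X,Y}_\psi$ and the formula $\Comp_\psi$, and then show that $\S^1_2(\alpha)+\V^0_2$ proves
\begin{equation}\label{eq:mcrewrite}
\exists_2 Y\ \psi(\bar X,Y,\bar x,y)\ \leftrightarrow\ \exists_2 Y\ \exists_2 Z\ \q{$Z$ is an accepting computation of $M^{\bar X,Y}_\psi$ on $(\bar x,y)$}.
\end{equation}
The direction $(\leftarrow)$ is Lemma~\ref{lem:mcdet}.\ref{lem:mcdet.a}. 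For $(\rightarrow)$, fix $Y$ with $\psi(\bar X,Y,\bar x,y)$; by $\Sigma^{1,b}_0$-comprehension the set $Z:=\Comp_\psi(\bar X,Y,\bar x,y,\cdot)$ exists, by Lemma~\ref{lem:mcdet}.\ref{lem:mcdet.c} it is a halting computation of $M^{\bar X,Y}_\psi$ on $(\bar x,y)$, by the contrapositive of Lemma~\ref{lem:mcdet}.\ref{lem:mcdet.b} it is not rejecting, hence, a halting computation being accepting or rejecting, it is accepting. The matrix on the right of~\eqref{eq:mcrewrite} is $\Pi^b_1(\alpha)$ in the free set variables $\bar X,Y,Z$, so collapsing $\exists_2 Y\,\exists_2 Z$ into a single existential set quantifier by set-pairing --- which only replaces the atoms $u{\in}Y$, $u{\in}Z$ by component-extractions and hence stays inside $\Pi^b_1(\alpha)$ --- exhibits $\theta$ as $\S^1_2(\alpha)+\V^0_2$-provably equivalent to an $\exists_2\Pi^b_1(\alpha)$-formula $\vartheta$.

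\emph{Conclusion and obstacle.} Given a $\Sigma^{1,b}_1$-instance of comprehension for $\theta$, Stage~1 supplies $\V^0_2$, Stage~2 supplies an $\exists_2\Pi^b_1(\alpha)$-formula $\vartheta$ with $\S^1_2(\alpha)+\V^0_2\vdash(\theta\leftrightarrow\vartheta)$, and $\exists_2\Pi^b_1(\alpha)$-comprehension for $\vartheta$ then yields comprehension for $\theta$; hence $\V^1_2$ is provable. The step I expect to be the main obstacle is the direction $(\rightarrow)$ of~\eqref{eq:mcrewrite}: it requires the definable computation $\Comp_\psi$ to be an actual member of the set sort, which is precisely what makes Stage~1 --- the derivation of full $\Sigma^{1,b}_0$-comprehension --- unavoidable and the real content of the proof; a secondary nuisance is ensuring that ``a halting computation is accepting or rejecting'' is available in $\S^1_2(\alpha)$ for the machines produced by Lemma~\ref{lem:mcdet}.
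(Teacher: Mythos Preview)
Your proof has a real gap at the ``prenex normal form'' step in Stage~2. A general $\Sigma^{1,b}_1$-formula may have bounded \emph{universal} number quantifiers sitting outside existential set quantifiers, e.g.\ $\forall u{\le}t\ \exists_2 Y\ \varphi(u,Y)$ with $\varphi\in\Sigma^{1,b}_0$; this is $\Sigma^{1,b}_1$ but not $\hat\Sigma^{1,b}_1$, and no purely syntactic prenexing or set-pairing turns it into $\exists_2 Y\,\psi$ with $\psi\in\Sigma^{1,b}_0$. What is needed is the set-collection step
\[
\forall u{\le}t\ \exists_2 Y\ \varphi(u,Y)\ \leftrightarrow\ \exists_2 Z\ \forall u{\le}t\ \varphi(u,Z(u,\cdot)),
\]
whose non-trivial direction requires induction on~$t$ for a $\hat\Sigma^{1,b}_1$-formula. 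This is \emph{not} available from $\V^0_2$ alone (which your Stage~1 produces); it needs $\exists_2\Pi^b_1(\alpha)$-induction, which comes from the very comprehension scheme you are trying to exploit. So your two stages as written are circular: Stage~2 needs the formula already in $\hat\Sigma^{1,b}_1$ form, but reaching that form needs (an instance of) what Stage~2 delivers.

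The paper's proof is precisely this missing step done directly: it shows, by induction on formula structure, that the class of formulas $\T$-provably equivalent to an $\exists_2\Pi^b_1(\alpha)$-formula is closed under $\vee,\wedge,\exists_2 Y,\exists y{\le}t,\forall y{\le}t$, and the only non-trivial case is $\forall y{\le}t$, handled by exactly the collection-by-induction argument above (with induction justified because $\exists_2\Pi^b_1(\alpha)$-comprehension gives $\exists_2\Pi^b_1(\alpha)$-induction). Your model-checker rewriting is a correct and pleasant alternative to the paper's syntactic closure for formulas \emph{already} in $\hat\Sigma^{1,b}_1$ form --- it replaces the inductive closure under $\vee,\wedge,\exists y{\le}t$ by one uniform appeal to Lemma~\ref{lem:mcdet} --- but it does not avoid the $\forall y{\le}t$ step. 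To repair your argument, first run Stage~2 only on $\hat\Sigma^{1,b}_1$-formulas (no prenexing needed), obtain $\hat\Sigma^{1,b}_1$-comprehension and hence $\hat\Sigma^{1,b}_1$-induction in~$\T$, and \emph{then} use that induction to justify set-collection and reduce general $\Sigma^{1,b}_1$-formulas to $\hat\Sigma^{1,b}_1$ form. Your Stage~1 is then also unnecessary as a separate stage: it is subsumed by the closure argument.
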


\begin{proof}
  Let~$\T$ denote~$\S^1_2(\alpha) +
  \exists_2\Pi^b_1(\alpha)$-comprehension.  Since $\S^1_2(\alpha) +
  \Sigma^{1,b}_1$-comprehension proves~$\V^1_2$, it suffices to show
  that the set of formulas that are $\T$-provably equivalent to
  an $\exists_2\Pi^b_1(\alpha)$-formula is closed under $\vee$, $\wedge$,
  $\exists_2 Y$, $\exists y{\le}t(\bar x)$ and $\forall y{\le}t(\bar
  x)$. We verify the latter: the formula
$$
\forall y{\le}u\ \exists_2 Y\ \varphi(\bar X,Y,\bar x,u,y)
$$
with~$\varphi(\bar X,Y,\bar x,u,y)$ a~$\Pi^b_1(\alpha)$-formula
is~$\T$-provably equivalent to
$$
\exists_2 Z\ \forall y{\le}u\ \varphi(\bar X, Z(y,\cdot),\bar x,u,y),
$$
where~$Z(y,v)$ abbreviates the atomic
formula $\langle y,v\rangle\in Z$.  Indeed, assuming the former
formula, the latter is proved by induction on~$u$. As the latter is
an $\exists_2\Pi^b_1(\alpha)$-formula, induction for it follows from
comprehension.
\end{proof}

The following lemma makes precise the idea sketched in Section~\ref{sec:introcomprehension}.

\begin{lemma} \label{lem:V12} For every $c\in\N$ and every
  model $(M,\mathcal X)$ of $S^1_2(\alpha)+\beta^c_{M_0}$, there
  exists $\mathcal Y\subseteq\mathcal X$ such that $(M,\mathcal Y)$ is
  a model of~$\V^1_2$.
\end{lemma}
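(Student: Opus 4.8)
The plan is to take $\mathcal Y$ to be the collection of sets of $\mathcal X$ that are represented by circuits of $M$, that is $\mathcal Y:=\{C(\cdot):C\in M\text{ a circuit}\}$. Each set $C(\cdot)$ lies in $\mathcal X$ by the $\Delta^b_1(\alpha)$-comprehension available in $(M,\mathcal X)$, so $\mathcal Y\subseteq\mathcal X$; moreover $\mathcal Y$ is nonempty (it contains $\emptyset$) and all its members are bounded. I would then show that $(M,\mathcal Y)$ models $\S^1_2(\alpha)+\exists_2\Pi^b_1(\alpha)$-comprehension, which by Lemma~\ref{lem:specialcomp} gives $(M,\mathcal Y)\models\V^1_2$. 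The axioms of $\BASIC$, set-boundedness and extensionality are immediate in $(M,\mathcal Y)$, and $\Sigma^b_1(\alpha)$-length-induction transfers from $(M,\mathcal X)$ because $\Sigma^b_1(\alpha)$-formulas have no set quantifiers and hence the same truth value in $(M,\mathcal X)$ and in $(M,\mathcal Y)$ under any assignment of parameters from $\mathcal Y$; finally $\Delta^b_1(\alpha)$-comprehension falls out of $\exists_2\Pi^b_1(\alpha)$-comprehension once the latter is in hand, since the defining equivalence of a $\Delta^b_1(\alpha)$-formula with a $\Pi^b_1(\alpha)$-formula is already provable in $\BASIC+\Sigma^b_1(\alpha)$-length-induction. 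So everything reduces to verifying $\exists_2\Pi^b_1(\alpha)$-comprehension in $(M,\mathcal Y)$.

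The crucial intermediate step is a \emph{transfer claim}: for every $\Pi^b_1(\alpha)$-formula $\varphi(\bar X,W,\bar x,y)$, every tuple $\bar Y\in\mathcal Y$ represented by circuits $\bar C\in M$, and every $\bar a\in M$, the formula $\exists_2 W\,\varphi(\bar Y,W,\bar a,y)$ has the same truth value in $(M,\mathcal X)$ and in $(M,\mathcal Y)$ for each $y$. One direction is immediate from $\mathcal Y\subseteq\mathcal X$. For the other, I would substitute the circuits $\bar C$ for the set parameters (legitimate, as $\varphi$ has no set quantifiers) and pack $\bar C,\bar a,y$ into a single number variable $v$, obtaining a $\hat\Sigma^{1,b}_1$-formula $\theta(v)=\exists_2 W\,\varphi(\bar c(\cdot),W,\bar x,y)$ with $\Pi^b_1(\alpha)$-matrix and no free set variable. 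Lemma~\ref{lem:mc} applied to $\theta$ gives a one-input explicit $\NEXP$-machine $N_\theta$ without oracles for which, by part~(e), $\S^1_2(\alpha)$ proves $\theta(v)$ equivalent to the existence of an accepting computation of $N_\theta$ on $v$. By Lemma~\ref{lem:beta}, $\beta^c_{M_0}$ implies $\beta^d_{N_\theta}$ for some $d\in\N$, and this holds in $(M,\mathcal X)$; so whenever $N_\theta$ has an accepting computation on some $v_0$, the circuit $D$ from $\beta^d_{N_\theta}$ supplies one, namely $(D_{v_0})(\cdot)$, which is circuit-represented and hence lies in $\mathcal Y$. From this accepting computation Lemma~\ref{lem:mcnondet}(b) recovers the guessed witness $W:=H((D_{v_0})(\cdot),v_0,\cdot)$, again circuit-represented (via Lemma~\ref{lem:pvcircuit}) and hence in $\mathcal Y$, and Lemma~\ref{lem:mcdet}(a) shows it satisfies $\varphi(\bar C(\cdot),W,\bar a,y)$; since this $\Sigma^{1,b}_0$-formula is insensitive to the ambient set universe, $(M,\mathcal Y)\models\exists_2 W\,\varphi(\bar Y,W,\bar a,y)$. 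Throughout I would use $\Delta^b_1(\alpha)$-comprehension in $(M,\mathcal X)$ to see that the auxiliary computation and witness sets genuinely lie in $\mathcal X$.

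Granting the transfer claim, I would finish the verification of $\exists_2\Pi^b_1(\alpha)$-comprehension in $(M,\mathcal Y)$ as follows. Given an instance with formula $\varphi(\bar X,W,\bar x,y)$, parameters $\bar Y=\bar C(\cdot)$ and $\bar a$, and bound $z$, the transfer claim (together with the insensitivity of $\varphi$ to the set universe) says that for $y\le z$ the value of $\exists_2 W\,\varphi(\bar Y,W,\bar a,y)$ in $(M,\mathcal Y)$ equals the value of $\exists_2 W\,\varphi(\bar c(\cdot),W,\bar a,y)$ in $(M,\mathcal X)$. The latter is an $\exists_2\Pi^b_1(\alpha)$-formula without free set variables in the number variables $\bar c,\bar a,y$, so Lemma~\ref{lem:cprh}, applied inside $(M,\mathcal X)\models\S^1_2(\alpha)+\beta^c_{M_0}$ with parameters $\bar C,\bar a$ and bound $z$, yields a circuit $C_Z\in M$ with $C_Z(y){=}1$ iff that value is true, for all $y\le z$. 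Then $Z:=\{y\le z:C_Z(y){=}1\}$ is represented by a circuit of $M$ (Lemma~\ref{lem:pvcircuit}), so $Z\in\mathcal Y$, $Z\le z$, and $Z$ is the required comprehension set. This completes $\exists_2\Pi^b_1(\alpha)$-comprehension, hence $\Delta^b_1(\alpha)$-comprehension and $(M,\mathcal Y)\models\S^1_2(\alpha)$, and then Lemma~\ref{lem:specialcomp} gives $(M,\mathcal Y)\models\V^1_2$.

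I expect the main obstacle to be the nontrivial direction of the transfer claim: that an $\exists_2\Pi^b_1(\alpha)$-witness present in $\mathcal X$ can always be replaced by a circuit-represented one in $\mathcal Y$. This is precisely where $\beta^c_{M_0}$ (and, through the faithfulness of the formalization, the Easy Witness Lemma) enters in an essential way, via the formally verified model-checkers of Lemma~\ref{lem:mc} and the passage to $\beta^d_{N_\theta}$ of Lemma~\ref{lem:beta}. The remaining ingredients --- packing tuples into one number variable, and bookkeeping the $\Delta^b_1(\alpha)$-comprehension instances needed to place the auxiliary sets in $\mathcal X$ --- are routine.
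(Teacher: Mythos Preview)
Your proof is correct and follows the same overall architecture as the paper: define $\mathcal Y$ as the circuit-represented sets, reduce to $\exists_2\Pi^b_1(\alpha)$-comprehension via Lemma~\ref{lem:specialcomp}, and eliminate set parameters by substituting their defining circuits before invoking Lemma~\ref{lem:cprh}. The difference is one of economy. The paper first proves $(M,\mathcal Y)\models\S^1_2(\alpha)$ directly (for $\Delta^b_1(\alpha)$-comprehension it uses Buss' witnessing in~$M$ plus Lemma~\ref{lem:pvcircuit}), then observes that $\beta^c_{M_0}$, being $\forall\Pi^{1,b}_1$, is inherited by the submodel $(M,\mathcal Y)$; this lets Lemma~\ref{lem:cprh} be applied \emph{inside} $(M,\mathcal Y)$, and the resulting circuit is automatically the right comprehension witness there. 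No transfer claim is needed. Your route instead applies Lemma~\ref{lem:cprh} in $(M,\mathcal X)$ and then proves a transfer claim---that an $\exists_2\Pi^b_1(\alpha)$-witness in $\mathcal X$ can always be replaced by one in~$\mathcal Y$---by unpacking the model-checker construction (Lemmas~\ref{lem:mcnondet}(b) and~\ref{lem:mcdet}(a)) to extract a circuit-represented witness from the circuit-represented accepting computation supplied by $\beta^d_{N_\theta}$. This works, and the transfer claim is of independent interest, but it is heavier machinery than the paper needs: the single observation that $\beta^c_{M_0}$ persists to $(M,\mathcal Y)$ collapses your transfer argument to a triviality.
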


\begin{proof}
  By~$\Delta_1^b(\alpha)$-comprehension, for every~$C\in M$ that is a
  circuit in the sense of~$M$ there is a set~$A\in \mathcal X$ such
  that
$$
(M,\mathcal X)\models\forall y\ (C(y){=}1\leftrightarrow y{\in}A).
$$
By extensionality such a set~$A$ is uniquely determined by~$C$ and we
write~$\hat C$ for it. For these two claims we used the fact
that~$C(y){=}1 \to y{<}2^{|C|}$ holds in every model of $\S^1_2$.

Let
\begin{equation*}
 \mathcal Y:=\big\{ \hat C\in \mathcal X\mid C\in M  \textit{ is a circuit in the sense of }M\big\}.
\end{equation*}
Since~$\mathcal Y \subseteq \mathcal X$, the model~$(M,\mathcal Y)$
satisfies all~$\Pi^{1,b}_1$-sentences which are true
in~$(M,\mathcal X)$, so in particular extensionality, set
boundedness,~$\Sigma^b_1(\alpha)$-induction, and~$\beta^c_{M_0}$.

The point of the model~$(M,\mathcal Y)$ is that it eliminates set
parameters. More precisely, let~$\varphi(\bar x)$ be
a~$\Sigma^{1,b}_\infty$-formula with parameters from~$(M,\mathcal Y)$,
and define~$\varphi^*(\bar x)$ as follows: replace every subformula of
the form~$t{\in}\hat C$ where~$t$ is a term (possibly with number
parameters from~$M$) and~$\hat C$ is a set parameter from~$\mathcal Y$
by~$C(t){=}1$ (i.e., by~$\eval(C,t){=}1$). Note every set parameter
in~$\varphi(\bar x)$ becomes a number parameter
in~$\varphi^*(\bar x)$, and
\begin{equation}\label{eq:star}
(M,\mathcal Y)\models\forall\bar x\ (\varphi(\bar x)\leftrightarrow\varphi^*(\bar x)).
\end{equation}

\noindent {\em Claim:} $(M,\mathcal Y)\models \S^1_2(\alpha)$.\medskip

\noindent{\em Proof of the Claim.} It suffices to show
that $(M,\mathcal Y)$ models $\Delta^{b}_1(\alpha)$-comprehension. So
let $\varphi(x)$ be a $\Delta^{b}_1(\alpha)$-formula with parameters
from $(M,\mathcal Y)$ and $a\in M$. Then $\varphi^*(x)$ is a
number-sort formula, namely a $\Delta^b_1$-formula with (number)
parameters from~$M$. Since $M\models\S^1_2$, Buss' witnessing theorem
implies that~$\varphi^*(x)$ is equivalent in~$M$ to a
quantifier-free $\PV$-formula with the same parameters.
Lemma~\ref{lem:pvcircuit} applied to $n:= \max\{|a|,2\}$ gives a
circuit~$C$ in the sense of~$M$ such that
\[
M\models \forall x{<}2^n (C(x)=1\leftrightarrow\varphi^*(x)).
\]
Then
$\hat C\in\mathcal Y$ and $(M,\mathcal Y)$ satisfies
$\forall y{\le}a ( y\in \hat C\leftrightarrow\varphi(y))$ by
\eqref{eq:star}.\hfill$\dashv$\medskip

By the Claim and Lemma~\ref{lem:specialcomp}, it suffices to show
that~$(M,\mathcal Y)$
has~$\exists_2\Pi^b_1(\alpha)$-comprehension. Let~$\psi(x)$
be a~$\exists_2\Pi^b_1(\alpha)$-formula with parameters
from~$(M,\mathcal Y)$, and let~$a\in M$.  Then~$\psi^*(x)$ is
a~$\exists_2\Pi^b_1(\alpha)$-formula without set parameters.  We
already noted that~$(M,\mathcal Y)\models\beta^c_{M_0}$. Hence, by the
Claim, Lemma~\ref{lem:cprh} applies and gives~$C\in M$ such that
$$
(M,\mathcal Y)\models \forall x{\le}a\,(C(x){=}1\leftrightarrow\psi^*(x)).
$$
Then $\hat C\in\mathcal Y$ and $(M,\mathcal Y)$ satisfies
$\forall x{\le}a\,( x{\in}\hat C\leftrightarrow\psi(x))$ by \eqref{eq:star}.
\end{proof}

As announced in Section~\ref{sec:introcomprehension} this lemma
implies Theorems~\ref{thm:conservative} and~\ref{thm:mag}.

\begin{proof}[Proof of Theorem~\ref{thm:conservative}] Assume
  that $\T$  is inconsistent
  with $\q{$\NEXP\not\subseteq\Ppoly$}$. By compactness, $\T$ proves~$\beta^c_{M_0}$ for some $c\in\N$. Let $\psi$
  be a number sort consequence of~$\V^1_2$ and $(M,\mathcal X)$ a
  model of~$\T$. We have to show that $M\models\psi$. But by
  Lemma~\ref{lem:V12} there exists $\mathcal Y \subseteq \mathcal X$
  such that $(M,\mathcal Y)\models\V^1_2$,
  so $(M,\mathcal Y)\models\psi$, and $M\models\psi$.
\end{proof}

\begin{proof}[Proof of Theorem~\ref{thm:mag}]
  Assume $\S^1_2(\alpha)$ does not prove
  \q{$\NEXP\not\subseteq\Ppoly$}, say, it does not
  prove $\neg \beta^c_{M_0}$. Then there is a model $(M,\mathcal X)$
  of $\S^1_2(\alpha)+\beta^c_{M_0}$. By Lemma~\ref{lem:V12} there
  exists $\mathcal Y\subseteq\mathcal X$ such
  that $(M,\mathcal Y)\models\V^1_2$.  Since $\beta^c_{M_0}$ is
  a $\Pi^{1,b}_1$-formula, we
  have $(M,\mathcal Y)\models\beta^c_{M_0}$. Thus, $\V^1_2$~does not
  prove \q{$\NEXP\not\subseteq\Ppoly$}.
\end{proof}

\begin{remark} \label{rem:hope} The introduction mentioned that
  Theorem~\ref{thm:mag} might raise hopes to complete Razborov's
  program by construcing a model of~$\S^1_2(\alpha)$ satisfying
  some~$\beta^c_{M_0}$. There are good general methods to construct
  models even of certain extensions of~$\T^1_2(\alpha)$ based on
  forcing (see \cite{Riis:Finitization} and~\cite{Muller:TypicalForcing} for an
  extension). However, these methods are tailored
  for $\hat \Sigma_1^{1,b}(\alpha)$-statements, not~$\Pi_1^{1,b}$
  like~$\beta^c_{M_0}$. By the method of feasible interpolation and
  assuming the existence of suitable pseudorandom generators,
  Razborov~\cite{Razborov:unprovability} proved that for
  every $\Sigma^{b}_\infty$-definable $t(n) = n^{\omega(1)}$ and
  every $\Sigma^{b}_{\infty}$-formula $\varphi(x)$ there exists a
  model~$(M,\mathcal X)$ of~$\S^2_2(\alpha)$ that for some $n\in M$
  contains a set $C\in\mathcal X$ coding a size-$t(n)$ circuit that
  computes~$\varphi(x)$; i.e., for every $a<2^n$ there
  is $X_a\in\mathcal X$ coding a computation of~$C$ on~$a$ of the
  truth value of~$\varphi(a)$. Getting a circuit (and computations)
  coded by a number seems to require new ideas.

The best currently known unprovability
result is due to Pich~\cite[Corollary 6.2]{Pich:CircuitLB} and is conditional:
a theory formalizing $\mathsf{NC}^1$-reasoning does not prove  almost everywhere superpolynomial lower bounds for $\textsf{SAT}$ unless subexponential-size formulas can approximate polynomial-size circuits. Reaching $\S^1_2$ seems to require new ideas.
\end{remark}

\ifdefined\JMLversion
\section*{Acknowledgments}
A.\ Atserias was supported in part by Project PID2019-109137GB-C22 (PROOFS)
and the Severo Ochoa and María de Maeztu Program for Centers and Units of Excellence
in R\&D (CEX2020-001084-M) of the Spanish State Research Agency.
S.\ Buss was supported in part by Simons Foundation grant 578919.
\fi 

\ifdefined\JMLversion
\bibliographystyle{ws-jml}
\else 
\bibliographystyle{siam}
\fi
\bibliography{logic}

\if 0

\fi 

\end{document}